\let\pdfoutput=\undefined\fi
\chardef\@x10\chardef\@xv60
\def\tcitime{
\def\@time{%
  \@minute\time\@hour\@minute\divide\@hour\@xv
  \ifnum\@hour<\@x 0\fi\the\@hour:%
  \multiply\@hour\@xv\advance\@minute-\@hour
  \ifnum\@minute<\@x 0\fi\the\@minute
  }}%
\def\x@hyperref#1#2#3{%
   % Turn off various catcodes before reading parameter 4
   \catcode`\~ = 12
   \catcode`\$ = 12
   \catcode`\_ = 12
   \catcode`\# = 12
   \catcode`\& = 12
   \y@hyperref{#1}{#2}{#3}%
}
\def\y@hyperref#1#2#3#4{%
   #2\ref{#4}#3
   \catcode`\~ = 13
   \catcode`\$ = 3
   \catcode`\_ = 8
   \catcode`\# = 6
   \catcode`\& = 4
}
\def\QCTOpt[#1]#2{%
  \def\QCTOptB{#1}
  \def\QCTOptA{#2}
}
\def\QCTNOpt#1{%
  \def\QCTOptA{#1}
  \let\QCTOptB\empty
}
\def\Qct{%
  \@ifnextchar[{%
    \QCTOpt}{\QCTNOpt}
}
\def\QCBOpt[#1]#2{%
  \def\QCBOptB{#1}%
  \def\QCBOptA{#2}%
}
\def\QCBNOpt#1{%
  \def\QCBOptA{#1}%
  \let\QCBOptB\empty
}
\def\Qcb{%
  \@ifnextchar[{%
    \QCBOpt}{\QCBNOpt}%
}
\def\PrepCapArgs{%
  \ifx\QCBOptA\empty
    \ifx\QCTOptA\empty
      {}%
    \else
      \ifx\QCTOptB\empty
        {\QCTOptA}%
      \else
        [\QCTOptB]{\QCTOptA}%
      \fi
    \fi
  \else
    \ifx\QCBOptA\empty
      {}%
    \else
      \ifx\QCBOptB\empty
        {\QCBOptA}%
      \else
        [\QCBOptB]{\QCBOptA}%
      \fi
    \fi
  \fi
}
\def\GRAPHICSPS#1{%
 \ifcase\GRAPHICSTYPE%\GRAPHICSTYPE=0
   \special{ps: #1}%
 \or%\GRAPHICSTYPE=1
   \special{language "PS", include "#1"}%
%%%\or%\GRAPHICSTYPE=2
%%%  #1%
 \fi
}%
\def\graffile#1#2#3#4{%
%%% \ifnum\GRAPHICSTYPE=\tw@
%%%  %Following if using psfig
%%%  \@ifundefined{psfig}{\input psfig.tex}{}%
%%%  \psfig{file=#1, height=#3, width=#2}%
%%% \else
  %Following for all others
  % JCS - added BOXTHEFRAME, see below
    \bgroup
	   \@inlabelfalse
       \leavevmode
       \@ifundefined{bbl@deactivate}{\def~{\string~}}{\activesoff}%
        \raise -#4 \BOXTHEFRAME{%
           \hbox to #2{\raise #3\hbox to #2{\null #1\hfil}}}%
    \egroup
}%
\def\draftbox#1#2#3#4{%
 \leavevmode\raise -#4 \hbox{%
  \frame{\rlap{\protect\tiny #1}\hbox to #2%
   {\vrule height#3 width\z@ depth\z@\hfil}%
  }%
 }%
}%
\let\nographics=\@msidraft
\newif\ifwasdraft
\def\GRAPHIC#1#2#3#4#5{%
   \ifnum\@msidraft=\@ne\draftbox{#2}{#3}{#4}{#5}%
   \else\graffile{#1}{#3}{#4}{#5}%
   \fi
}
\def\addtoLaTeXparams#1{%
    \edef\LaTeXparams{\LaTeXparams #1}}%
\newif\ifBoxFrame \BoxFramefalse
\newif\ifOverFrame \OverFramefalse
\newif\ifUnderFrame \UnderFramefalse
\def\BOXTHEFRAME#1{%
   \hbox{%
      \ifBoxFrame
         \frame{#1}%
      \else
         {#1}%
      \fi
   }%
}
\def\doFRAMEparams#1{\BoxFramefalse\OverFramefalse\UnderFramefalse\readFRAMEparams#1\end}%
\def\readFRAMEparams#1{%
 \ifx#1\end%
  \let\next=\relax
  \else
  \ifx#1i\dispkind=\z@\fi
  \ifx#1d\dispkind=\@ne\fi
  \ifx#1f\dispkind=\tw@\fi
  \ifx#1t\addtoLaTeXparams{t}\fi
  \ifx#1b\addtoLaTeXparams{b}\fi
  \ifx#1p\addtoLaTeXparams{p}\fi
  \ifx#1h\addtoLaTeXparams{h}\fi
  \ifx#1X\BoxFrametrue\fi
  \ifx#1O\OverFrametrue\fi
  \ifx#1U\UnderFrametrue\fi
  \ifx#1w
    \ifnum\@msidraft=1\wasdrafttrue\else\wasdraftfalse\fi
    \@msidraft=\@ne
  \fi
  \let\next=\readFRAMEparams
  \fi
 \next
 }%
\def\IFRAME#1#2#3#4#5#6{%
      \bgroup
      \let\QCTOptA\empty
      \let\QCTOptB\empty
      \let\QCBOptA\empty
      \let\QCBOptB\empty
      #6%
      \parindent=0pt
      \leftskip=0pt
      \rightskip=0pt
      \setbox0=\hbox{\QCBOptA}%
      \@tempdima=#1\relax
      \ifOverFrame
          % Do this later
          \typeout{This is not implemented yet}%
          \show\HELP
      \else
         \ifdim\wd0>\@tempdima
            \advance\@tempdima by \@tempdima
            \ifdim\wd0 >\@tempdima
               \setbox1 =\vbox{%
                  \unskip\hbox to \@tempdima{\hfill\GRAPHIC{#5}{#4}{#1}{#2}{#3}\hfill}%
                  \unskip\hbox to \@tempdima{\parbox[b]{\@tempdima}{\QCBOptA}}%
               }%
               \wd1=\@tempdima
            \else
               \textwidth=\wd0
               \setbox1 =\vbox{%
                 \noindent\hbox to \wd0{\hfill\GRAPHIC{#5}{#4}{#1}{#2}{#3}\hfill}\\%
                 \noindent\hbox{\QCBOptA}%
               }%
               \wd1=\wd0
            \fi
         \else
            \ifdim\wd0>0pt
              \hsize=\@tempdima
              \setbox1=\vbox{%
                \unskip\GRAPHIC{#5}{#4}{#1}{#2}{0pt}%
                \break
                \unskip\hbox to \@tempdima{\hfill \QCBOptA\hfill}%
              }%
              \wd1=\@tempdima
           \else
              \hsize=\@tempdima
              \setbox1=\vbox{%
                \unskip\GRAPHIC{#5}{#4}{#1}{#2}{0pt}%
              }%
              \wd1=\@tempdima
           \fi
         \fi
         \@tempdimb=\ht1
         %\advance\@tempdimb by \dp1
         \advance\@tempdimb by -#2
         \advance\@tempdimb by #3
         \leavevmode
         \raise -\@tempdimb \hbox{\box1}%
      \fi
      \egroup%
}%
\def\DFRAME#1#2#3#4#5{%
  \vspace\topsep
  \hfil\break
  \bgroup
     \leftskip\@flushglue
	 \rightskip\@flushglue
	 \parindent\z@
	 \parfillskip\z@skip
     \let\QCTOptA\empty
     \let\QCTOptB\empty
     \let\QCBOptA\empty
     \let\QCBOptB\empty
	 \vbox\bgroup
        \ifOverFrame 
           #5\QCTOptA\par
        \fi
        \GRAPHIC{#4}{#3}{#1}{#2}{\z@}%
        \ifUnderFrame 
           \break#5\QCBOptA
        \fi
	 \egroup
  \egroup
  \vspace\topsep
  \break
}%
\def\FFRAME#1#2#3#4#5#6#7{%
 %If float.sty loaded and float option is 'h', change to 'H'  (gp) 1998/09/05
  \@ifundefined{floatstyle}
    {%floatstyle undefined (and float.sty not present), no change
     \begin{figure}[#1]%
    }
    {%floatstyle DEFINED
	 \ifx#1h%Only the h parameter, change to H
      \begin{figure}[H]%
	 \else
      \begin{figure}[#1]%
	 \fi
	}
  \let\QCTOptA\empty
  \let\QCTOptB\empty
  \let\QCBOptA\empty
  \let\QCBOptB\empty
  \ifOverFrame
    #4
    \ifx\QCTOptA\empty
    \else
      \ifx\QCTOptB\empty
        \caption{\QCTOptA}%
      \else
        \caption[\QCTOptB]{\QCTOptA}%
      \fi
    \fi
    \ifUnderFrame\else
      \label{#5}%
    \fi
  \else
    \UnderFrametrue%
  \fi
  \begin{center}\GRAPHIC{#7}{#6}{#2}{#3}{\z@}\end{center}%
  \ifUnderFrame
    #4
    \ifx\QCBOptA\empty
      \caption{}%
    \else
      \ifx\QCBOptB\empty
        \caption{\QCBOptA}%
      \else
        \caption[\QCBOptB]{\QCBOptA}%
      \fi
    \fi
    \label{#5}%
  \fi
  \end{figure}%
 }%
\def\makeactives{
  \catcode`\"=\active
  \catcode`\;=\active
  \catcode`\:=\active
  \catcode`\'=\active
  \catcode`\~=\active
}
   \gdef\activesoff{%
      \def"{\string"}%
      \def;{\string;}%
      \def:{\string:}%
      \def'{\string'}%
      \def~{\string~}%
      %\bbl@deactivate{"}%
      %\bbl@deactivate{;}%
      %\bbl@deactivate{:}%
      %\bbl@deactivate{'}%
    }
\def\FRAME#1#2#3#4#5#6#7#8{%
 \bgroup
 \ifnum\@msidraft=\@ne
   \wasdrafttrue
 \else
   \wasdraftfalse%
 \fi
 \def\LaTeXparams{}%
 \dispkind=\z@
 \def\LaTeXparams{}%
 \doFRAMEparams{#1}%
 \ifnum\dispkind=\z@\IFRAME{#2}{#3}{#4}{#7}{#8}{#5}\else
  \ifnum\dispkind=\@ne\DFRAME{#2}{#3}{#7}{#8}{#5}\else
   \ifnum\dispkind=\tw@
    \edef\@tempa{\noexpand\FFRAME{\LaTeXparams}}%
    \@tempa{#2}{#3}{#5}{#6}{#7}{#8}%
    \fi
   \fi
  \fi
  \ifwasdraft\@msidraft=1\else\@msidraft=0\fi{}%
  \egroup
 }%
\def\TEXUX#1{"texux"}
\def\limfunc#1{\mathop{\rm #1}}%
\def\func#1{\mathop{\rm #1}\nolimits}%
\long\def\QQQ#1#2{%
     \long\expandafter\def\csname#1\endcsname{#2}}%
\long\def\QQA#1#2{}%
\def\QTR#1#2{{\csname#1\endcsname {#2}}}%
\def\EXPAND#1[#2]#3{}%
\def\NOEXPAND#1[#2]#3{}%
\def\LaTeXparent#1{}%
\def\ChildStyles#1{}%
\def\ChildDefaults#1{}%
\def\QTagDef#1#2#3{}%
  \providecommand{\UNICODE}[2][]{\protect\rule{.1in}{.1in}}
  \providecommand{\U}[1]{\protect\rule{.1in}{.1in}}
\def\QQfnmark#1{\footnotemark}
 \def\abstract{%
  \if@twocolumn
   \section*{Abstract (Not appropriate in this style!)}%
   \else \small 
   \begin{center}{\bf Abstract\vspace{-.5em}\vspace{\z@}}\end{center}%
   \quotation 
   \fi
  }%
   \def\registered{\relax\ifmmode{}\r@gistered
                    \else$\m@th\r@gistered$\fi}%
 \def\r@gistered{^{\ooalign
  {\hfil\raise.07ex\hbox{$\scriptstyle\rm\text{R}$}\hfil\crcr
  \mathhexbox20D}}}}{}%
\newdimen\theight
\def\newfmtname{LaTeX2e}
  \DeclareOldFontCommand{\rm}{\normalfont\rmfamily}{\mathrm}
  \DeclareOldFontCommand{\sf}{\normalfont\sffamily}{\mathsf}
  \DeclareOldFontCommand{\tt}{\normalfont\ttfamily}{\mathtt}
  \DeclareOldFontCommand{\bf}{\normalfont\bfseries}{\mathbf}
  \DeclareOldFontCommand{\it}{\normalfont\itshape}{\mathit}
  \DeclareOldFontCommand{\sl}{\normalfont\slshape}{\@nomath\sl}
  \DeclareOldFontCommand{\sc}{\normalfont\scshape}{\@nomath\sc}
\def\alpha{{\Greekmath 010B}}%
\def\beta{{\Greekmath 010C}}%
\def\gamma{{\Greekmath 010D}}%
\def\delta{{\Greekmath 010E}}%
\def\epsilon{{\Greekmath 010F}}%
\def\zeta{{\Greekmath 0110}}%
\def\eta{{\Greekmath 0111}}%
\def\theta{{\Greekmath 0112}}%
\def\iota{{\Greekmath 0113}}%
\def\kappa{{\Greekmath 0114}}%
\def\lambda{{\Greekmath 0115}}%
\def\mu{{\Greekmath 0116}}%
\def\nu{{\Greekmath 0117}}%
\def\xi{{\Greekmath 0118}}%
\def\pi{{\Greekmath 0119}}%
\def\rho{{\Greekmath 011A}}%
\def\sigma{{\Greekmath 011B}}%
\def\tau{{\Greekmath 011C}}%
\def\upsilon{{\Greekmath 011D}}%
\def\phi{{\Greekmath 011E}}%
\def\chi{{\Greekmath 011F}}%
\def\psi{{\Greekmath 0120}}%
\def\omega{{\Greekmath 0121}}%
\def\varepsilon{{\Greekmath 0122}}%
\def\vartheta{{\Greekmath 0123}}%
\def\varpi{{\Greekmath 0124}}%
\def\varrho{{\Greekmath 0125}}%
\def\varsigma{{\Greekmath 0126}}%
\def\varphi{{\Greekmath 0127}}%
\def\nabla{{\Greekmath 0272}}
\def\FindBoldGroup{%
   {\setbox0=\hbox{$\mathbf{x\global\edef\theboldgroup{\the\mathgroup}}$}}%
}
\def\Greekmath#1#2#3#4{%
    \if@compatibility
        \ifnum\mathgroup=\symbold
           \mathchoice{\mbox{\boldmath$\displaystyle\mathchar"#1#2#3#4$}}%
                      {\mbox{\boldmath$\textstyle\mathchar"#1#2#3#4$}}%
                      {\mbox{\boldmath$\scriptstyle\mathchar"#1#2#3#4$}}%
                      {\mbox{\boldmath$\scriptscriptstyle\mathchar"#1#2#3#4$}}%
        \else
           \mathchar"#1#2#3#4% 
        \fi 
    \else 
        \FindBoldGroup
        \ifnum\mathgroup=\theboldgroup % For 2e
           \mathchoice{\mbox{\boldmath$\displaystyle\mathchar"#1#2#3#4$}}%
                      {\mbox{\boldmath$\textstyle\mathchar"#1#2#3#4$}}%
                      {\mbox{\boldmath$\scriptstyle\mathchar"#1#2#3#4$}}%
                      {\mbox{\boldmath$\scriptscriptstyle\mathchar"#1#2#3#4$}}%
        \else
           \mathchar"#1#2#3#4% 
        \fi     	    
	  \fi}
\newif\ifGreekBold  \GreekBoldfalse
\let\SAVEPBF=\pbf
\def\pbf{\GreekBoldtrue\SAVEPBF}%
  \newcounter{equationnumber}  
  \def\mathletters{%
     \addtocounter{equation}{1}
     \edef\@currentlabel{\theequation}%
     \setcounter{equationnumber}{\c@equation}
     \setcounter{equation}{0}%
     \edef\theequation{\@currentlabel\noexpand\alph{equation}}%
  }
    \def\BibTeX{{\rm B\kern-.05em{\sc i\kern-.025em b}\kern-.08em
                 T\kern-.1667em\lower.7ex\hbox{E}\kern-.125emX}}}{}%
\def\AmS{{\protect\usefont{OMS}{cmsy}{m}{n}%
                A\kern-.1667em\lower.5ex\hbox{M}\kern-.125emS}}}{}%
\def\@@eqncr{\let\@tempa\relax
    \ifcase\@eqcnt \def\@tempa{& & &}\or \def\@tempa{& &}%
      \else \def\@tempa{&}\fi
     \@tempa
     \if@eqnsw
        \iftag@
           \@taggnum
        \else
           \@eqnnum\stepcounter{equation}%
        \fi
     \fi
     \global\tag@false
     \global\@eqnswtrue
     \global\@eqcnt\z@\cr}
\def\TCItag{\@ifnextchar*{\@TCItagstar}{\@TCItag}}
\def\@TCItag#1{%
    \global\tag@true
    \global\def\@taggnum{(#1)}}
\def\@TCItagstar*#1{%
    \global\tag@true
    \global\def\@taggnum{#1}}
\def\tsum{\mathop{\textstyle \sum }}%
\def\tprod{\mathop{\textstyle \prod }}%
\def\ExitTCILatex{\makeatother }
\if@compatibility\message{amsmath already loaded}\fi\aftergroup\ExitTCILatex}
\if@compatibility\message{amstex already loaded}\fi\aftergroup\ExitTCILatex}
\if@compatibility\message{amsgen already loaded}\fi\aftergroup\ExitTCILatex}
\let\DOTSI\relax
\def\RIfM@{\relax\ifmmode}%
\def\FN@{\futurelet\next}%
\def\iint{\DOTSI\intno@\tw@\FN@\ints@}%
\def\iiint{\DOTSI\intno@\thr@@\FN@\ints@}%
\def\iiiint{\DOTSI\intno@4 \FN@\ints@}%
\def\idotsint{\DOTSI\intno@\z@\FN@\ints@}%
\def\ints@{\findlimits@\ints@@}%
\newif\iflimtoken@
\newif\iflimits@
\def\findlimits@{\limtoken@true\ifx\next\limits\limits@true
 \else\ifx\next\nolimits\limits@false\else
 \limtoken@false\ifx\ilimits@\nolimits\limits@false\else
 \ifinner\limits@false\else\limits@true\fi\fi\fi\fi}%
\def\multint@{\int\ifnum\intno@=\z@\intdots@                          %1
 \else\intkern@\fi                                                    %2
 \ifnum\intno@>\tw@\int\intkern@\fi                                   %3
 \ifnum\intno@>\thr@@\int\intkern@\fi                                 %4
 \int}%                                                               %5
\def\multintlimits@{\intop\ifnum\intno@=\z@\intdots@\else\intkern@\fi
 \ifnum\intno@>\tw@\intop\intkern@\fi
 \ifnum\intno@>\thr@@\intop\intkern@\fi\intop}%
\def\intic@{%
    \mathchoice{\hskip.5em}{\hskip.4em}{\hskip.4em}{\hskip.4em}}%
\def\negintic@{\mathchoice
 {\hskip-.5em}{\hskip-.4em}{\hskip-.4em}{\hskip-.4em}}%
\def\ints@@{\iflimtoken@                                              %1
 \def\ints@@@{\iflimits@\negintic@
   \mathop{\intic@\multintlimits@}\limits                             %2
  \else\multint@\nolimits\fi                                          %3
  \eat@}%                                                             %4
 \else                                                                %5
 \def\ints@@@{\iflimits@\negintic@
  \mathop{\intic@\multintlimits@}\limits\else
  \multint@\nolimits\fi}\fi\ints@@@}%
\def\intkern@{\mathchoice{\!\!\!}{\!\!}{\!\!}{\!\!}}%
\def\plaincdots@{\mathinner{\cdotp\cdotp\cdotp}}%
\def\intdots@{\mathchoice{\plaincdots@}%
 {{\cdotp}\mkern1.5mu{\cdotp}\mkern1.5mu{\cdotp}}%
 {{\cdotp}\mkern1mu{\cdotp}\mkern1mu{\cdotp}}%
 {{\cdotp}\mkern1mu{\cdotp}\mkern1mu{\cdotp}}}%
\def\RIfM@{\relax\protect\ifmmode}
\def\text{\RIfM@\expandafter\text@\else\expandafter\mbox\fi}
\let\nfss@text\text
\def\text@#1{\mathchoice
   {\textdef@\displaystyle\f@size{#1}}%
   {\textdef@\textstyle\tf@size{\firstchoice@false #1}}%
   {\textdef@\textstyle\sf@size{\firstchoice@false #1}}%
   {\textdef@\textstyle \ssf@size{\firstchoice@false #1}}%
   \glb@settings}
\def\textdef@#1#2#3{\hbox{{%
                    \everymath{#1}%
                    \let\f@size#2\selectfont
                    #3}}}
\newif\iffirstchoice@
\def\Let@{\relax\iffalse{\fi\let\\=\cr\iffalse}\fi}%
\def\vspace@{\def\vspace##1{\crcr\noalign{\vskip##1\relax}}}%
\def\multilimits@{\bgroup\vspace@\Let@
 \baselineskip\fontdimen10 \scriptfont\tw@
 \advance\baselineskip\fontdimen12 \scriptfont\tw@
 \lineskip\thr@@\fontdimen8 \scriptfont\thr@@
 \lineskiplimit\lineskip
 \vbox\bgroup\ialign\bgroup\hfil$\m@th\scriptstyle{##}$\hfil\crcr}%
\def\Sb{_\multilimits@}%
\def\endSb{\crcr\egroup\egroup\egroup}%
\def\Sp{^\multilimits@}%
\newdimen\ex@
\def\rightarrowfill@#1{$#1\m@th\mathord-\mkern-6mu\cleaders
 \hbox{$#1\mkern-2mu\mathord-\mkern-2mu$}\hfill
 \mkern-6mu\mathord\rightarrow$}%
\def\leftarrowfill@#1{$#1\m@th\mathord\leftarrow\mkern-6mu\cleaders
 \hbox{$#1\mkern-2mu\mathord-\mkern-2mu$}\hfill\mkern-6mu\mathord-$}%
\def\leftrightarrowfill@#1{$#1\m@th\mathord\leftarrow
\mkern-6mu\cleaders
 \hbox{$#1\mkern-2mu\mathord-\mkern-2mu$}\hfill
 \mkern-6mu\mathord\rightarrow$}%
\def\overrightarrow{\mathpalette\overrightarrow@}%
\def\overrightarrow@#1#2{\vbox{\ialign{##\crcr\rightarrowfill@#1\crcr
 \noalign{\kern-\ex@\nointerlineskip}$\m@th\hfil#1#2\hfil$\crcr}}}%
\def\overleftarrow{\mathpalette\overleftarrow@}%
\def\overleftarrow@#1#2{\vbox{\ialign{##\crcr\leftarrowfill@#1\crcr
 \noalign{\kern-\ex@\nointerlineskip}$\m@th\hfil#1#2\hfil$\crcr}}}%
\def\overleftrightarrow{\mathpalette\overleftrightarrow@}%
\def\overleftrightarrow@#1#2{\vbox{\ialign{##\crcr
   \leftrightarrowfill@#1\crcr
 \noalign{\kern-\ex@\nointerlineskip}$\m@th\hfil#1#2\hfil$\crcr}}}%
\def\underrightarrow{\mathpalette\underrightarrow@}%
\def\underrightarrow@#1#2{\vtop{\ialign{##\crcr$\m@th\hfil#1#2\hfil
  $\crcr\noalign{\nointerlineskip}\rightarrowfill@#1\crcr}}}%
\def\underleftarrow{\mathpalette\underleftarrow@}%
\def\underleftarrow@#1#2{\vtop{\ialign{##\crcr$\m@th\hfil#1#2\hfil
  $\crcr\noalign{\nointerlineskip}\leftarrowfill@#1\crcr}}}%
\def\underleftrightarrow{\mathpalette\underleftrightarrow@}%
\def\underleftrightarrow@#1#2{\vtop{\ialign{##\crcr$\m@th
  \hfil#1#2\hfil$\crcr
 \noalign{\nointerlineskip}\leftrightarrowfill@#1\crcr}}}%
\def\qopnamewl@#1{\mathop{\operator@font#1}\nlimits@}
\let\nlimits@\displaylimits
\def\setboxz@h{\setbox\z@\hbox}
\def\varlim@#1#2{\mathop{\vtop{\ialign{##\crcr
 \hfil$#1\m@th\operator@font lim$\hfil\crcr
 \noalign{\nointerlineskip}#2#1\crcr
 \noalign{\nointerlineskip\kern-\ex@}\crcr}}}}
 \def\rightarrowfill@#1{\m@th\setboxz@h{$#1-$}\ht\z@\z@
  $#1\copy\z@\mkern-6mu\cleaders
  \hbox{$#1\mkern-2mu\box\z@\mkern-2mu$}\hfill
  \mkern-6mu\mathord\rightarrow$}
\def\leftarrowfill@#1{\m@th\setboxz@h{$#1-$}\ht\z@\z@
  $#1\mathord\leftarrow\mkern-6mu\cleaders
  \hbox{$#1\mkern-2mu\copy\z@\mkern-2mu$}\hfill
  \mkern-6mu\box\z@$}
\def\projlim{\qopnamewl@{proj\,lim}}
\def\injlim{\qopnamewl@{inj\,lim}}
\def\varinjlim{\mathpalette\varlim@\rightarrowfill@}
\def\varprojlim{\mathpalette\varlim@\leftarrowfill@}
\def\varliminf{\mathpalette\varliminf@{}}
\def\varliminf@#1{\mathop{\underline{\vrule\@depth.2\ex@\@width\z@
   \hbox{$#1\m@th\operator@font lim$}}}}
\def\varlimsup{\mathpalette\varlimsup@{}}
\def\varlimsup@#1{\mathop{\overline
  {\hbox{$#1\m@th\operator@font lim$}}}}
\def\align{\@verbatim \frenchspacing\@vobeyspaces \@alignverbatim
You are using the "align" environment in a style in which it is not defined.}
\let\csname endalign*\endcsname =\endtrivlist
\def\alignat{\@verbatim \frenchspacing\@vobeyspaces \@alignatverbatim
You are using the "alignat" environment in a style in which it is not defined.}
\let\csname endalignat*\endcsname =\endtrivlist
\def\xalignat{\@verbatim \frenchspacing\@vobeyspaces \@xalignatverbatim
You are using the "xalignat" environment in a style in which it is not defined.}
\let\csname endxalignat*\endcsname =\endtrivlist
\def\gather{\@verbatim \frenchspacing\@vobeyspaces \@gatherverbatim
You are using the "gather" environment in a style in which it is not defined.}
\let\csname endgather*\endcsname =\endtrivlist
\def\multiline{\@verbatim \frenchspacing\@vobeyspaces \@multilineverbatim
You are using the "multiline" environment in a style in which it is not defined.}
\let\csname endmultiline*\endcsname =\endtrivlist
\def\arrax{\@verbatim \frenchspacing\@vobeyspaces \@arraxverbatim
You are using a type of "array" construct that is only allowed in AmS-LaTeX.}
\def\tabulax{\@verbatim \frenchspacing\@vobeyspaces \@tabulaxverbatim
You are using a type of "tabular" construct that is only allowed in AmS-LaTeX.}
\let\csname endarrax*\endcsname =\endtrivlist
\let\csname endtabulax*\endcsname =\endtrivlist
 \def\endequation{%
     \ifmmode\ifinner % FLEQN hack
      \iftag@
        \addtocounter{equation}{-1} % undo the increment made in the begin part
        $\hfil
           \displaywidth\linewidth\@taggnum\egroup \endtrivlist
        \global\tag@false
        \global\@ignoretrue   
      \else
        $\hfil
           \displaywidth\linewidth\@eqnnum\egroup \endtrivlist
        \global\tag@false
        \global\@ignoretrue 
      \fi
     \else   
      \iftag@
        \addtocounter{equation}{-1} % undo the increment made in the begin part
        \eqno \hbox{\@taggnum}
        \global\tag@false%
        $$\global\@ignoretrue
      \else
        \eqno \hbox{\@eqnnum}% $$ BRACE MATCHING HACK
        $$\global\@ignoretrue
      \fi
     \fi\fi
 } 
 \newif\iftag@ \tag@false
 \def\TCItag{\@ifnextchar*{\@TCItagstar}{\@TCItag}}
 \def\@TCItag#1{%
     \global\tag@true
     \global\def\@taggnum{(#1)}}
 \def\@TCItagstar*#1{%
     \global\tag@true
     \global\def\@taggnum{#1}}
     \def\tag{\@ifnextchar*{\@tagstar}{\@tag}}
     \def\@tag#1{%
         \global\tag@true
         \global\def\@taggnum{(#1)}}
     \def\@tagstar*#1{%
         \global\tag@true
         \global\def\@taggnum{#1}}
\def\tfrac#1#2{{\textstyle {#1 \over #2}}}%
\def\dfrac#1#2{{\displaystyle {#1 \over #2}}}%
\def\dbinom#1#2{{\displaystyle {#1 \choose #2}}}%
\begin{document}

\setcounter{page}{0} \topmargin0pt \oddsidemargin5mm \renewcommand{%
\thefootnote}{\fnsymbol{footnote}} \newpage \setcounter{page}{0} 
\begin{titlepage}
\begin{flushright}
% EMPG\\
\end{flushright}
\vspace{0.5cm}
\begin{center}
{\Large {\bf PT Symmetry of the non-Hermitian XX Spin-Chain: Non-local Bulk Interaction from Complex Boundary Fields}}

\vspace{0.8cm}
{ \large Christian Korff}

\vspace{0.5cm}
{\em
Department of Mathematics, University of Glasgow, \\
University Gardens, Glasgow G12 8QW, UK}
\end{center}
\vspace{0.2cm}

\renewcommand{\thefootnote}{\arabic{footnote}}
\setcounter{footnote}{0}

\begin{abstract}
The XX spin-chain with non-Hermitian diagonal boundary conditions is shown to be quasi-Hermitian
for special values of the boundary parameters. This is proved by explicit construction of a new inner
product employing a "quasi-fermion" algebra in momentum space where creation and annihilation
operators are not related via Hermitian conjugation. For a special example, when the boundary fields
lie on the imaginary axis, we show the spectral equivalence of the quasi-Hermitian XX spin-chain with
a non-local fermion model, where long range hopping of the particles occurs as the non-Hermitian
boundary fields increase in strength. The corresponding Hamiltonian interpolates between the open
XX and the quantum group invariant XXZ model at the free fermion point. For an even number of
sites the former is known to be related to a CFT with central charge $c=1$, while the latter has been
connected to a logarithmic CFT with central charge $c=-2$. We discuss the underlying algebraic
structures and show that for an odd number of sites the superalgebra symmetry
$U(\mathfrak{gl}(1|1))$ can be extended from the unit circle along the imaginary axis. We relate the vanishing of
one of its central elements to the appearance of Jordan blocks in the Hamiltonian.
\medskip
\par\noindent
\end{abstract}
\vfill{ \hspace*{-9mm}
\begin{tabular}{l}
\rule{6 cm}{0.05 mm}\\
c.korff@maths.gla.ac.uk
\end{tabular}}
\end{titlepage}\newpage

\section{Introduction}

Recent years have seen increasing interest in non-Hermitian quantum
Hamiltonians $H$\ and how to give them a physical sound interpretation. In
this article we revisit an exactly-solvable, one-dimensional, discrete
system, the XX-Hamiltonian with non-Hermitian diagonal boundary conditions,
i.e. 
\begin{equation}
H=\frac{1}{2}\sum_{m=1}^{M-1}\left[ \sigma _{m}^{x}\sigma _{m+1}^{x}+\sigma
_{m}^{y}\sigma _{m+1}^{y}\right] +\frac{\alpha \sigma _{1}^{z}+\beta \sigma
_{M}^{z}}{2}~,\qquad \alpha ,\beta \in \mathbb{C}~.  \label{H}
\end{equation}%
Here $\sigma _{m}^{x,y,z}$ denote the Pauli matrices acting in the $m^{\text{%
th}}$ copy of an $M$-fold tensor product $V^{\otimes M}$ of a
two-dimensional complex vector space $V=\mathbb{C}v_{+}\oplus \mathbb{C}%
v_{-} $. For real values the parameters $\alpha ,\beta \in \mathbb{R}$ have
the physical interpretation of external magnetic fields located at the
boundary. Here we are interested in investigating the case of \emph{complex}
fields, $\alpha ,\beta \in \mathbb{C}$. Another variant of the
XX-Hamiltonian we shall consider is%
\begin{equation}
H^{\prime }=\frac{1}{2}\sum_{m=1}^{M-1}\left[ \sigma _{m}^{x}\sigma
_{m+1}^{x}+\sigma _{m}^{y}\sigma _{m+1}^{y}-\beta ~\sigma _{m}^{z}-\alpha
~\sigma _{m+1}^{z}\right] =H-\frac{\alpha +\beta }{2}~\sum_{m=1}^{M}\sigma
_{m}^{z}  \label{Hprime}
\end{equation}%
which up to boundary terms coincides with the closed or periodic XX
spin-chain Hamiltonian in an external magnetic field. Both Hamiltonians can
be diagonalized via a Jordan-Wigner transformation for arbitrary complex
values of the boundary parameters $\alpha ,\beta $. However, the
Hamiltonians $H,H^{\prime }$ with $\alpha ,\beta \in \mathbb{C}$ are in
general non-Hermitian, 
\begin{equation}
H\neq H^{\ast }=H(\bar{\alpha},\bar{\beta}),  \label{Hcross}
\end{equation}%
and one needs to explain how a meaningful quantum mechanical system in terms
of $H,~H^{\prime }$ can be defined. On physical grounds the time evolution
operator $U(t)=\exp (itH),$ $t>0$ ought to be unitary and, hence, $H$ needs
to be Hermitian. Thus, a non-Hermitian Hamiltonian, $H\neq H^{\ast },$
appears at first sight to be in contradiction with conventional quantum
mechanics. However, under certain assumption the Hamiltonian might turn out
to be quasi-Hermitian \cite{SGH92, Most04}. That is, there exists a \emph{%
positive}, \emph{Hermitian} and \emph{invertible} operator $\eta $ satisfying%
\begin{equation}
\eta H=H^{\ast }\eta \ .  \label{eta0}
\end{equation}%
This allows one to either introduce a new inner product,%
\begin{equation}
\langle v,w\rangle _{\eta }:=\langle v,\eta w\rangle ,\qquad v,w\in 
\mathfrak{H}  \label{etaprod0}
\end{equation}%
on the state space $\mathfrak{H}$ with respect to which $H$ becomes
Hermitian or perform a similarity transformation to a new, Hermitian
Hamiltonian with respect to the original inner product,%
\begin{equation}
h=\eta ^{1/2}H\eta ^{-1/2}\ .  \label{h0}
\end{equation}

Given a non-Hermitian Hamiltonian it is in general rather difficult to
determine whether it is quasi-Hermitian. A special subclass of
quasi-Hermitian systems where this turns out to be easier are those
distinguished by PT-symmetry, i.e. Hamiltonian systems where the
eigenvectors can be chosen such that they are eigenvectors under a joint
parity and time reversal transformation; see e.g. \cite{Bender07} and
references therein. In this case the quasi-Hermiticity operator $\eta $
enjoys further constraints which we will discuss in the text below.
Employing the exact solvability of the non-Hermitian XX spin-chain we will
establish for which values of the boundary parameters it is quasi-Hermitian
and show its PT-symmetry.

\subsection{Long range bulk interaction from non-Hermitian boundary fields}

The above non-Hermitian Hamiltonians are of special interest because of
their underlying algebraic structure and we will show that it plays an
important role in the interpretation of the new inner product. Setting $%
\alpha =-\beta =\sqrt{-1}$ the Hamilton (\ref{H}) is the $U_{q}(sl_{2})$%
-invariant XXZ spin-chain evaluated at the free fermion point $q=\sqrt{-1}$ 
\cite{Alc87, PS90}. In the thermodynamic limit it has been suggested that
the system is closely connected with critical dense polymers effectively
described by a logarithmic conformal field theory with central charge $c=-2$ 
\cite{Saleur92, PRZ06,RS07}. Moreover, it has been stressed that such models
should be described in terms of non-local degrees of freedom, such as
\textquotedblleft connectivities\textquotedblright , see e.g. \cite{PRZ06}.

On the other hand, setting $\alpha =\beta =0$ the chain (\ref{H}) becomes
Hermitian and is related to the Ashkin-Teller model, see \cite{Alc87}. In
the thermodynamic limit the system is now described by an ordinary conformal
field theory with central charge $c=1$.

Thus, the non-Hermitian Hamiltonian%
\begin{equation}
H_{g}=\frac{1}{2}\sum_{m=1}^{M-1}\left[ \sigma _{m}^{x}\sigma
_{m+1}^{x}+\sigma _{m}^{y}\sigma _{m+1}^{y}+ig(\sigma _{m}^{z}-\sigma
_{m+1}^{z})\right] =H_{-g}^{\ast },\qquad 0<g<1,  \label{Hg}
\end{equation}%
interpolates between these two special cases. We will show for small values
of the coupling constant $g$ its spectral equivalence with the Hermitian
Hamiltonian%
\begin{equation}
h_{g^{2}}=-\sum_{n>0}\sum_{x=1}^{M-n}p_{x}^{(n)}(g^{2})\left[ c_{x}^{\ast
}c_{x+n}-c_{x}c_{x+n}^{\ast }\right] ,  \label{hg}
\end{equation}%
where $c_{x}^{\ast },c_{x}$ are fermionic creation and annihilation
operators at lattice site $x$. The hopping probability between a site $x$
and its $n^{\text{th}}$ neighbour is encoded in the real coefficients $%
p_{x}^{(n)}$ which only depend on $g^{2}$ and vanish for $n$ even, $%
p_{x}^{(2n)}=0$. At $g=0$ we only have nearest neighbour hopping,%
\begin{equation}
h_{0}=-\sum_{x=1}^{M-1}(c_{x}^{\ast }c_{x+1}-c_{x}c_{x+1}^{\ast })\;.
\end{equation}%
Our perturbative calculation will show that as $g$ increases the bulk
interaction becomes more and more long-range by successively
\textquotedblleft switching on\textquotedblright the various coefficients $%
p_{x}^{(2n+1)}$ starting from the boundary sites $x=1,M$. More precisely, we
find up to order $g^{8}$ that the nonvanishing contributions are%
\begin{eqnarray}
p_{x}^{(1)} &=&1-\frac{128g^{2}+8g^{4}+g^{6}}{512}(\delta _{x,1}+\delta
_{x,M-1})  \notag \\
&&-\frac{8g^{4}+3g^{6}}{512}(\delta _{x,2}+\delta _{x,M-2})+\frac{g^{6}}{256}%
(\delta _{x,3}+\delta _{x,M-3})+O(g^{8})  \notag \\
p_{x}^{(3)} &=&\frac{20g^{4}+3g^{6}}{256}~(\delta _{x,1}+\delta _{x,M-3})+%
\frac{5g^{6}}{512}(\delta _{x,2}+\delta _{x,M-4})+O(g^{8})  \notag \\
p_{x}^{(5)} &=&-\frac{23g^{6}}{512}(\delta _{x,1}+\delta _{x,M-5})+O(g^{8})
\end{eqnarray}%
Figure 1 indicates which hopping amplitudes, i.e. the probabilities for a
fermion to jump between two lattice sites, are modified up to order $g^{8}$.
We have only depicted the additional contributions for $g>0$: whenever an
arc connects two lattice sites $x$ and $x+n$ then there is a non-vanishing
coefficient $p_{x}^{(n)}$.\newpage 
\begin{equation*}
\includegraphics[scale=0.9]{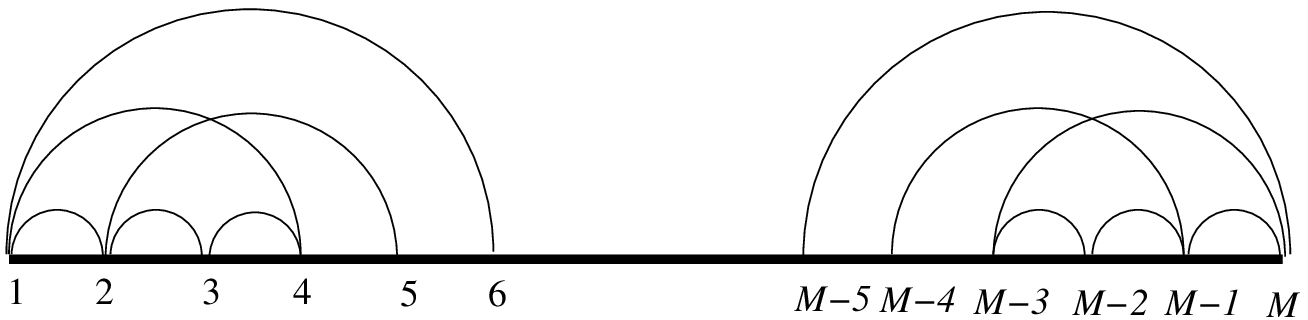}.
\end{equation*}%
{\small \medskip }

\begin{center}
{\small Figure 1. Graphical depiction of the long range hopping.\medskip }
\end{center}

\noindent Both variants (\ref{Hg}) and (\ref{hg}) of the Hamiltonian have
their advantages:

\begin{itemize}
\item The Hermitian Hamiltonian $h$ serves physical intuition. It can be
directly interpreted and clearly shows the long range nature of the
interaction which is not apparent in (\ref{Hg}). It suggests that the long
range nature of the interaction with increasing $g$ is behind the singular
change from an \textquotedblleft ordinary\textquotedblright\ conformal field
theory with $c=1$ for $0\leq g<1$ to a logarithmic one with $c=-2$ at $g=1$.

\item In contrast, the variant (\ref{Hg}) highlights the algebraic
properties, such as integrability and quantum group invariance. The $PT$%
-symmetry and quasi-Hermiticity of the quantum group invariant $XXZ$
spin-chain for higher roots of unity $q=\exp (i\pi /r),$ $r>2$ have been
discussed in another paper \cite{KW07}.
\end{itemize}

Note that the appearance of a long range bulk interaction does not
contradict physical intuition. Naively one would argue that boundary terms
will become unimportant in the thermodynamic limit when the number of sites
tends to infinity. However, the applicability of this statement crucially
depends on the nature of the boundary conditions, see for instance \cite%
{KZJ00}. In the present article we consider a different, novel example of
this phenomenon. The complex boundary terms render the Hamiltonian
non-Hermitian, hence they imply a drastic non-local change in the
mathematical structure of the state space: the introduction of a new inner
product. We will see that this is the case by observing that the similarity
transformation which maps $H_{g}$ into $h_{g^{2}}$ is highly
non-local.\smallskip

\subsection{Outline of the article and summary of results}

The purpose of the present article is to relate the notions of $PT$-symmetry
and quasi-Hermiticity to another simple but non-trivial model. While the $XX$
spin-chain with non-Hermitian boundary terms has been discussed previously
in the literature (see e.g. \cite{HR92, AR93, BW97,BW99, Bil00}) it has not
been investigated for which parameter values $\alpha ,\beta $ the
Hamiltonian is quasi-Hermitian and what the corresponding Hermitian systems
are. We will show that for $\alpha =\bar{\beta}$ inside the unit disc the
Hamiltonians (\ref{H}), (\ref{Hprime}) are quasi-Hermitian, this condition
on $\alpha ,\beta $ ensures also $PT$-symmetry. We then will discuss for
which values the mentioned Hamiltonians cease to be quasi-Hermitian and when
their spectrum contains complex eigenvalues. The special case $\alpha =\bar{%
\beta}$ of the non-Hermitian XX spin-chain, which includes (\ref{Hg}), has
not been discussed in detail previously.\smallskip

For the benefit of the reader we summarize the new results contained in this
article:

\begin{itemize}
\item After a brief review of the notions of quasi-Hermiticity and
PT-symmetry on the lattice in Section 2, we establish in Section 3 that the
spectra of the Hamiltonians (\ref{H}) and (\ref{Hprime}) are real for
boundary parameters within the unit disc, $\alpha =\bar{\beta}\in \mathbb{C}$
and $|\alpha |<1$. Performing a Jordan-Wigner transformation the XX
spin-chain can be reformulated as a non-trivial fermion model and its
spectrum can be described in terms of quasi-particle excitations in momentum
space. We will give an elementary proof that all quasi-momenta (Bethe roots)
lie on the unit circle and state for $\alpha =-\beta =ig,\;0<g<1$ a set of
palindromic polynomials whose roots give the quasi-momenta $k$ and the
associated energies $2\cos k$.

\item In Section 4 we highlight that unlike in the case of real boundary
fields $\alpha ,\beta \in \mathbb{R}$, the Jordan-Wigner transformation does 
\emph{not} lead to a well-defined fermion algebra in momentum space, but
instead one has two sets $\{\hat{c}_{k}^{\ast },\hat{c}_{k}\}$ and $\{\hat{d}%
_{k}^{\ast },\hat{d}_{k}\}$ of creation and annihilation operators which
satisfy the relations%
\begin{equation}
\lbrack \hat{c}_{k}^{\ast },\hat{d}_{k^{\prime }}]_{+}=\delta _{k,k^{\prime
}},\qquad \lbrack \hat{c}_{k},\hat{d}_{k^{\prime }}]_{+}=[\hat{c}_{k}^{\ast
},\hat{d}_{k^{\prime }}^{\ast }]_{+}=0,\qquad \hat{c}_{k}^{\ast }\neq \hat{d}%
_{k}\;.
\end{equation}%
Here $k=-i\ln z\in \mathbb{R}$ is a quasi-momentum, $[A,B]_{+}=AB+BA$ is the
anti-commutator and $\ast $ denotes the Hermitian adjoint with respect to
the original inner product, where the Hamiltonian is non-Hermitian.
Employing the above operators we explicitly construct an $\eta $ which not
only renders the Hamiltonian Hermitian but obeys the more restrictive
condition%
\begin{equation}
\eta \hat{c}_{k}^{\ast }=\hat{d}_{k}^{\ast }\eta \qquad \text{and}\qquad
\eta \hat{d}_{k}=\hat{c}_{k}\eta \ .
\end{equation}%
That is, with respect to the new inner product (\ref{etaprod0}) $\{\hat{c}%
_{k}^{\ast },\hat{d}_{k}\}$ satisfy the canonical anti-commutation relations 
\emph{and} are the Hermitian adjoint of each other.

\item Also in Section 4 we extend for $\alpha =-\beta =ig,\;0\leq g\leq 1$
the well known $U_{q=i}(sl_{2})$-symmetry of (\ref{Hg}) from the unit
circle, $g=1$, into the unit disc, $0\leq g<1$ and discuss the invariance of
the Hamiltonian for odd and even number of sites. Both cases show remarkable
differences. The quantum group symmetry is closely connected with so-called
fermionic zero modes and we show that at the coupling values $g$ where the
anticommutator of the associated fermionic creation and annihilation
operators vanishes the Hamiltonian possesses non-trivial Jordan blocks.
Thus, the Hamiltonian ceases to be quasi-Hermitian. For even numbers of
sites this happens at $g=1$, i.e. on the unit circle, as was observed
previously \cite{PRZ06}. Here we show that it also happens for odd numbers
of sites albeit at the value $g_{\text{max}}=\sqrt{(M+1)/(M-1)}$ which
approaches the value one in the limit $M\rightarrow \infty $. It also
signals the onset of complex eigenvalues for values of $g>g_{\max }$, i.e.
outside the unit disc.

\item In Section 5 we present a perturbative calculation of $\eta $ in terms
of the coupling parameter $0<g\ll 1$ in (\ref{Hg}). While we follow closely
the steps previously put forward in the literature \cite{BBJ04,CFMFAF06}, we
give a novel derivation of the coefficients in the perturbation series
expansion. Using these results we obtain the expression (\ref{hg}). We also
will derive some closed expressions for small number of sites $M=3,4$ and 5
in Section 6. For $g=1$ and odd lattice sites we formulate a conjecture on
an alternative way of computing the new inner product using the
Temperley-Lieb algebra.

\item In Section 7 we will investigate briefly the case of general complex
boundary fields inside the unit disc, i.e. when $\alpha =\bar{\beta}%
=ge^{i\theta }$ with $\theta \neq \pi /2$ and real in order to make contact
with the discussion in \cite{HR92}.

\item Section 8 contains the conclusions.
\end{itemize}

\section{Quasi-Hermiticity and PT-invariance}

The state space of the non-Hermitian systems (\ref{H}) and (\ref{Hprime}) is
a spin-chain of $M$ sites represented in terms of the tensor product $%
V^{\otimes M}$ of the two-dimensional complex vector space $V$ with
orthonormal basis $\{v_{\pm }\}$ such that $\sigma ^{z}v_{\pm }=\pm v_{\pm }$%
. The tensor product is then spanned by the vectors%
\begin{equation}
\{\left\vert \varepsilon _{1},...,\varepsilon _{M}\right\rangle \equiv
v_{\varepsilon _{1}}\otimes \cdots \otimes v_{\varepsilon
_{M}}:\;\varepsilon _{m}=\pm 1\}  \label{spinbasis}
\end{equation}%
This particular choice of basis vectors is motivated by the axial symmetry
of the Hamiltonians $H,$~$H^{\prime }$ which commute with the total spin
operator,%
\begin{equation}
S^{z}=\frac{1}{2}\tsum_{m=1}^{M}\sigma _{m}^{z},\qquad \lbrack
H,S^{z}]=[H^{\prime },S^{z}]=0\ .  \label{Sz}
\end{equation}%
The matrices (\ref{H}) and (\ref{Hprime}) are defined with respect to the
inner product%
\begin{equation}
\langle \varepsilon _{1},...,\varepsilon _{M}\left\vert \varepsilon
_{1}^{\prime },...,\varepsilon _{M}^{\prime }\right\rangle
=\prod_{i=1}^{M}\langle v_{\varepsilon _{i}},v_{\varepsilon _{i}^{\prime
}}\rangle ,\qquad \left\langle v_{\varepsilon },v_{\varepsilon ^{\prime
}}\right\rangle =\delta _{\varepsilon \varepsilon ^{\prime }}\;.
\label{origprod}
\end{equation}

As already pointed out in the introduction both Hamiltonians, $H$ and $%
H^{\prime }$, are in general non-Hermitian for arbitrary complex boundary
parameters $\alpha ,\beta $. However, in the above basis (\ref{spinbasis})
both of them are symmetric,%
\begin{equation*}
H=H^{t}\qquad \text{and}\qquad H^{\prime }=H^{\prime t}\;.
\end{equation*}%
We now wish to determine for which values there exists a map $\eta
:V^{\otimes M}\rightarrow V^{\otimes M}$ possessing the following properties:

\begin{enumerate}
\item $\eta $ is Hermitian, $\eta =\eta ^{\ast }$, invertible, $\det \eta >0$%
, and positive definite, $\eta >0$.

\item $\eta $ intertwines the Hamiltonian $H$ (respectively $H^{\prime }$)
with its Hermitian adjoint,%
\begin{equation}
\eta H=H^{\ast }\eta \;.  \label{etainter}
\end{equation}
\end{enumerate}

Because of the axial symmetry present in $H$ and $H^{\prime }$ we add the
following requirement which implies that $\eta $ intertwines $H$ and $%
H^{\prime }$ at the same time,%
\begin{equation}
\lbrack \eta ,S^{z}]=0\;.  \label{etaSz}
\end{equation}%
Provided that such a map $\eta $ exists we define a new inner product and
with it a new Hilbert space structure via%
\begin{equation}
\left\langle v,w\right\rangle _{\eta }:=\left\langle v,\eta w\right\rangle
,\qquad v,w\in V^{\otimes M}\ .  \label{etaprod}
\end{equation}%
Using the intertwining property (2) it is obvious that the Hamiltonian is
Hermitian with respect to the new inner product. The properties listed under
(1) ensure that the new inner product is well-defined.

There is an alternative to introducing a new Hilbert space structure. Since $%
\eta >0$ it follows that there exists a unique positive definite square root 
$\eta ^{\frac{1}{2}}>0$ which allows one to define the Hermitian Hamiltonian 
\begin{equation}
h=\eta ^{\frac{1}{2}}H\eta ^{-\frac{1}{2}}  \label{h}
\end{equation}%
with respect to the original inner product.

Both approaches have their advantages and disadvantages. At first glance one
might prefer to work in the original Hilbert space with the
\textquotedblleft gauge transformed\textquotedblright\; Hamiltonian $h$, as
this allows for a direct physical interpretation without having to insert
another operator when taking scalar products. However, for practical
purposes it is often more feasible to work with $H$ and $\eta $. Even if it
is possible to construct $\eta $ explicitly, the computation of its square
root and, thus, the calculation of $h$ is another technically complicated
step. We will perform this computation for (\ref{Hg}) to low orders in $g$
employing perturbation theory in Section 5. In contrast, the original
Hamiltonian $H$ has usually a simpler form which allows one to read off
certain symmetries. The XX-chain with imaginary boundary fields is a
concrete example.

\subsection{Parity, Time and Spin Reversal on the lattice}

On the set of basis vectors (\ref{spinbasis}) we define the parity reversal
operator $P$ by%
\begin{equation}
P\left\vert \varepsilon _{1},...,\varepsilon _{M}\right\rangle =\left\vert
\varepsilon _{M},\varepsilon _{M-1},...,\varepsilon _{1}\right\rangle
\label{P}
\end{equation}%
and extend its action to the whole space by linearity. From this definition
it is immediate to see that%
\begin{equation}
PHP=H(\beta ,\alpha )\ .  \label{PHP}
\end{equation}%
The time reversal operator $T$ acts on the basis vectors as identity, 
\begin{equation}
T\left\vert \varepsilon _{1},...,\varepsilon _{M}\right\rangle =\left\vert
\varepsilon _{1},...,\varepsilon _{M}\right\rangle ,  \label{T}
\end{equation}%
but is defined to be \emph{antilinear}, whence any matrix (such as the
Hamiltonian) is transformed into its complex conjugate under the adjoint
action of $T$,%
\begin{equation}
THT=\bar{H}=H(\bar{\alpha},\bar{\beta})=H^{\ast }\ .  \label{THT}
\end{equation}%
Below we will define discrete wave functions which will be transformed into
their complex conjugates under the operator $T$. This justifies the
identification of $T$ with time-reversal.

Upon imposing the constraint%
\begin{equation}
\alpha =\bar{\beta}  \label{con1}
\end{equation}%
the Hamiltonian turns out to be $PT$-invariant, 
\begin{equation}
\alpha =\bar{\beta}:\qquad \lbrack PT,H]=0\ .  \label{PT}
\end{equation}

Note, however, that $PT$-invariance of the Hamiltonian is not a sufficient
criterion to ensure real eigenvalues, this only follows once it is
established that the eigenvectors of the Hamiltonian can be chosen to be
simultaneously eigenvectors of the $PT$-operator. This is not automatically
implied by $PT$-invariance of the Hamiltonian as time reversal $T$ is an
antilinear operator; see e.g. \cite{Weig03}.

For later purposes we also discuss the behaviour under the spin-reversal
operator,%
\begin{equation}
RH(\alpha ,\beta )R=H(-\alpha ,-\beta ),\qquad R=\tprod_{m=1}^{M}\sigma
_{m}^{x}\ .  \label{RHR}
\end{equation}%
Thus, in general spin-reversal symmetry is broken in the presence of
boundary fields.

Provided that the Hamiltonian is not only $PT$-invariant but satisfies the
slightly more restrictive constraint%
\begin{equation}
PHP=THT=H^{\ast },  \label{PTH}
\end{equation}%
as it is the case here for $\alpha =\bar{\beta}$, it is natural to impose
further constraints on $\eta $. Namely, we wish to have%
\begin{equation}
P\eta P=T\eta T=\eta ^{-1}\;.  \label{PTeta}
\end{equation}%
These conditions are compatible with the intertwining property, e.g.%
\begin{equation*}
P\eta HP=P\eta P~H^{\ast }=H~P\eta P=PH^{\ast }\eta P\;.
\end{equation*}%
In essence we are demanding that parity-reversal gives the same $\eta $ up
to inversion. The same applies for time-reversal.

Often it is beneficial to define an additional operator $C$ introduced by
Bender and collaborators (for references see \cite{Bender07}) by setting%
\begin{equation}
C:=P\eta \;.  \label{Cdef}
\end{equation}%
This operator might turn out to have a simpler expression than $\eta $
itself. The aforementioned properties of $\eta $ imply the identities%
\begin{equation}
C^{2}=1,\qquad \lbrack PT,C]=0,\qquad \lbrack H,C]=0\ .  \label{C}
\end{equation}%
The C-operator turned out to have an elegant algebraic expression for the
quantum group invariant XXZ spin-chain; see \cite{KW07}.

\section{Spectrum and eigenvectors of the Hamiltonian}

As mentioned in the introduction the Hamiltonians $H,~H^{\prime }$ can be
diagonalised in terms of free fermions. Using the well known Jordan-Wigner
identities%
\begin{equation}
c_{x}=\left( \tprod_{y<x}\sigma _{y}^{z}\right) \sigma _{x}^{-},\qquad
c_{x}^{\ast }=\left( \tprod_{y<x}\sigma _{y}^{z}\right) \sigma
_{x}^{+},\qquad \text{and}\qquad n_{x}=c_{x}^{\dagger }c_{x}=\frac{1+\sigma
_{x}^{z}}{2}
\end{equation}%
we introduce fermion creation and annihilation operators in
\textquotedblleft position space\textquotedblright\ satisfying the canonical
anti-commutation relations (CAR),%
\begin{equation}
\lbrack c_{x},c_{y}]_{+}=[c_{x}^{\ast },c_{y}^{\ast }]_{+}=0\qquad \text{and}%
\qquad \lbrack c_{x}^{\ast },c_{y}]_{+}=\delta _{x,y},\quad \quad
x,y=1,2,...,M\ .  \label{CAR}
\end{equation}%
For later purposes we also state the transformation properties under parity,
time and spin-reversal,%
\begin{equation}
Pc_{x}=c_{M+1-x}P,\qquad Tc_{x}=c_{x}T,\qquad Rc_{x}=(-1)^{x+1}c_{x}^{\ast
}R\ .  \label{PTRc}
\end{equation}%
In terms of the fermion algebra the Hamiltonians can be rewritten as%
\begin{equation}
H=-\sum_{x=1}^{M-1}\left[ c_{x}^{\ast }c_{x+1}-c_{x}c_{x+1}^{\ast }\right]
+\alpha ~n_{1}+\beta ~n_{M}-\frac{\alpha +\beta }{2},
\end{equation}%
and%
\begin{equation}
H^{\prime }=-\sum_{x=1}^{M-1}\left[ c_{x}^{\ast }c_{x+1}-c_{x}c_{x+1}^{\ast
}+\beta ~n_{x}+\alpha ~n_{x+1}-\frac{\alpha +\beta }{2}\right] \ .
\end{equation}%
Below we shall constrain the boundary parameters $\alpha ,\beta $. For the
moment we leave them arbitrary. We now introduce a \textquotedblleft
discrete wave function\textquotedblright\ $\psi _{z}$ depending on a complex
parameter $z\in \mathbb{C}$ by defining%
\begin{equation}
\hat{c}_{z}^{\ast }=\sum_{x=1}^{M}\psi _{z}(x;\alpha ,\beta )c_{x}^{\ast },
\label{ccross}
\end{equation}%
with 
\begin{equation}
\psi _{z}(x;\alpha ,\beta )=z^{x}-A(z;\alpha ,\beta )~z^{-x}\ .  \label{psi}
\end{equation}%
This ansatz is physically motivated: it is a superposition of two (discrete)
plane waves, one incoming and one reflected. Here $z=\exp (ik)$ with $k$
being the quasi-momentum and $A$ is the reflection coefficient. Note that we
allow here for complex momenta $k$.

Employing the canonical anticommutation relations one easily finds that%
\begin{eqnarray*}
c_{x}^{\ast }H &=&Hc_{x}^{\ast }+c_{x-1}^{\ast }+c_{x+1}^{\ast },\qquad
1<x<M, \\
c_{1}^{\ast }H &=&Hc_{1}^{\ast }-\alpha c_{1}^{\ast }+c_{2}^{\ast }, \\
c_{M}^{\ast }H &=&Hc_{M}^{\ast }+c_{M-1}^{\ast }-\beta c_{M}^{\ast }\ .
\end{eqnarray*}%
Therefore, one has%
\begin{equation}
\lbrack H,\hat{c}_{z}^{\ast }]=-(z+z^{-1})\hat{c}_{z}^{\ast }  \label{specH}
\end{equation}%
provided the coefficient $A$ in the wave function (\ref{psi}) obeys the
identities \cite{Alc87}%
\begin{equation}
A=\frac{1+\alpha z}{1+\alpha /z}=z^{2M}\frac{\beta +z}{\beta +z^{-1}}\ .
\label{bae}
\end{equation}%
As both equations have to hold simultaneously this imposes a constraint on
the allowed values for the parameter $z$ which are specified as the roots of
a polynomial equation of order $2M+2$. From (\ref{specH}) we infer that the
spectrum of the Hamiltonian is composed of quasi-particle excitations with
energy $\varepsilon =z+z^{-1}$ by successively acting with (\ref{ccross}) on
the pseudo-vacuum vector $\left\vert 0\right\rangle =v_{-1}\otimes \cdots
\otimes v_{-1}$,%
\begin{equation}
H\left\vert z_{1},...,z_{l}\right\rangle =\left( -\frac{\alpha +\beta }{2}%
-\tsum_{i=1}^{l}\left( z_{i}+z_{i}^{-1}\right) \right) \left\vert
z_{1},...,z_{l}\right\rangle ,  \label{specH3}
\end{equation}%
where%
\begin{equation}
\left\vert z_{1},...,z_{l}\right\rangle =\hat{c}_{z_{1}}^{\ast }\cdots \hat{c%
}_{z_{l}}^{\ast }\left\vert 0\right\rangle \;,\;.  \label{particlebasis}
\end{equation}%
Thus, it depends on the nature of the solutions $z$ of the equation (\ref%
{bae}) whether the spectrum of the Hamiltonian is real.

\begin{proposition}
Let $\alpha =\bar{\beta}$ in (\ref{H}) and assume that $\alpha $ is inside
the closed unit disc, i.e. $|\alpha |\leq 1$. Then the solutions of (\ref%
{bae}) all lie on the unit circle, hence the quasi-momenta $k_{j}=-i\ln
z_{j} $ are real.
\end{proposition}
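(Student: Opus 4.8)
The plan is to reformulate the problem of locating the roots of the simultaneous equations in (\ref{bae}) as a question about the modulus of a certain product of Möbius factors, and then to exploit the symmetry $\alpha=\bar\beta$ together with the constraint $|\alpha|\le 1$ to show that no root can have $|z|\ne 1$. Eliminating $A$ between the two expressions in (\ref{bae}) gives the single polynomial equation
\begin{equation}
\frac{1+\alpha z}{1+\alpha/z}=z^{2M}\,\frac{\beta+z}{\beta+z^{-1}},\qquad \text{i.e.}\qquad z^{2M}=\frac{(1+\alpha z)(\beta+z^{-1})}{(1+\alpha/z)(\beta+z)}.
\label{plan:elim}
\end{equation}
Writing the right-hand side so that the powers of $z$ are displayed explicitly, one finds it equals $z^{-2}$ times a ratio of the form $\frac{(1+\alpha z)(\beta z+1)}{(z+\alpha)(\beta+z)}$, so (\ref{plan:elim}) becomes $z^{2M+2}=f(z)$ for an explicit rational $f$ built from the boundary factors. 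The strategy is then the standard one for ``Bethe-root-on-the-circle'' arguments: take moduli of both sides and show that $|f(z)|<1$ when $|z|>1$ and $|f(z)|>1$ when $|z|<1$, which forces every solution onto $|z|=1$.

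First I would set $z=re^{ik}$ and analyse the modulus of each Möbius factor $\frac{1+\alpha z}{z+\alpha}$ (and its $\beta$-counterpart) as a function of $r=|z|$. The key elementary fact is that for a single factor $g_a(z)=\frac{1+a z}{z+a}$ with $|a|\le 1$, one has $|g_a(z)|<1$ precisely when $|z|>1$ and $|g_a(z)|>1$ when $|z|<1$; this is the Schwarz–Pick/Blaschke observation that $z\mapsto \frac{z+a}{1+\bar a z}$ maps the unit disc to itself. Because $\alpha=\bar\beta$, the two boundary factors are complex conjugate-linked, and their combined contribution to $|f(z)|$ behaves monotonically in $r$ in the same direction. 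I would make this precise by computing $\frac{|1+\alpha z|^2}{|z+\alpha|^2}=\frac{1+2\operatorname{Re}(\alpha z)+|\alpha|^2|z|^2}{|z|^2+2\operatorname{Re}(\bar\alpha z)+|\alpha|^2}$ and checking the sign of the numerator-minus-denominator difference, which reduces to $(1-|\alpha|^2)(1-|z|^2)$ — strictly positive for $|z|<1$, strictly negative for $|z|>1$, and zero on the circle. This single clean identity is the crux of the whole argument.

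Having established the strict monotonicity of each factor's modulus across $|z|=1$, I would combine it with $|z^{2M+2}|=|z|^{2M+2}$, which for $|z|>1$ is strictly greater than one and for $|z|<1$ strictly less than one. Comparing the two sides of $z^{2M+2}=f(z)$: for $|z|>1$ the left side has modulus $>1$ while every boundary factor contributes $<1$, so the equation cannot hold; symmetrically for $|z|<1$. Hence all roots satisfy $|z|=1$, giving $k_j=-i\ln z_j\in\mathbb{R}$ as claimed. I expect the main obstacle to be bookkeeping the boundary degenerate cases $|\alpha|=1$ (where the sign $(1-|\alpha|^2)$ vanishes, so the factor moduli become exactly one off the circle and the strict inequality collapses to the pure power $|z|^{2M+2}$, which still does the job) and handling possible spurious roots $z=0,\infty$ or $z=-\alpha^{\pm1}$ introduced by clearing denominators in (\ref{plan:elim}); these must be excluded separately by checking they do not solve the original pair (\ref{bae}). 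I would also need to confirm that the polynomial of degree $2M+2$ has all its roots accounted for and none lost in the elimination, but the modulus comparison above is robust enough to cover every genuine solution.
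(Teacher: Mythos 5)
Your overall strategy --- eliminate $A$, take moduli, and show that the boundary factors push the modulus strictly to the opposite side of $1$ from $|z|^{2M}$ --- is sound, and it is in essence the same mechanism as the paper's proof, which phrases the disc-preserving property of $(z+\alpha)/(1+\bar{\alpha}z)$ as a mapping statement: $f(z)=z^{M}(z+\alpha)/(1+\bar{\alpha}z)$ sends the closed disc into itself, $g(z)=f(1/z)$ sends it outside the open disc, and a common value of the two must lie on the circle. However, the identity you single out as ``the crux of the whole argument'' is false as stated. For the factor $(1+\alpha z)/(z+\alpha)$ one has
\begin{equation*}
|1+\alpha z|^{2}-|z+\alpha|^{2}=(1-|\alpha|^{2})(1-|z|^{2})+2\operatorname{Re}\bigl((\alpha -\bar{\alpha})z\bigr),
\end{equation*}
because the numerator carries the cross term $2\operatorname{Re}(\alpha z)$ while the denominator carries $2\operatorname{Re}(\bar{\alpha}z)$; these cancel only when $\alpha $ is real. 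Concretely, for $\alpha =0.9i$ and $z=-1.1i$ one finds $|1+\alpha z|/|z+\alpha |=1.99/0.2\approx 9.95>1$ even though $|z|>1$, so the single-factor monotonicity you assert does not hold for non-real $\alpha$. The clean identity you want is obtained by pairing $1+\alpha z$ with $z+\bar{\alpha}$ (equivalently $1+\beta z$ with $z+\alpha$), for which both cross terms equal $2\operatorname{Re}(\alpha z)$ and cancel, leaving exactly $(1-|\alpha|^{2})(1-|z|^{2})$. Since $\beta =\bar{\alpha}$, the eliminated equation reads $z^{2M}=\frac{(1+\alpha z)(1+\bar{\alpha}z)}{(z+\alpha )(z+\bar{\alpha})}$ --- note also that the exponent is $2M$, not $2M+2$, since the powers of $z$ extracted from $1+\alpha /z$ and $\beta +z^{-1}$ cancel --- and the right-hand side is precisely the product of the two correctly paired ratios, so your modulus comparison goes through after this re-pairing. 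The gap is therefore not in the plan but in the stated key lemma: as written the crucial inequality fails, and it is exactly the hypothesis $\alpha =\bar{\beta}$ that rescues it by allowing the conjugate factors to be matched. Your handling of the boundary case $|\alpha |=1$ and of spurious roots introduced by clearing denominators is fine.
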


\begin{proof}
Let us rewrite the Bethe ansatz equations (\ref{bae}) as%
\begin{equation}
z^{M}~\frac{z+\alpha }{z\bar{\alpha}+1}=z^{-M}~\frac{z^{-1}+\alpha }{z^{-1}%
\bar{\alpha}+1}~.  \label{bae2}
\end{equation}%
and define the maps $f(z)=z^{M}(z+\alpha )/(z\bar{\alpha}+1)$ and $%
g(z)=f(z^{-1})$. Notice that the above Bethe ansatz equations are invariant
under complex conjugation and $z\rightarrow z^{-1}$, which reflects the $PT$%
-invariance of the Hamiltonian. Thus, we can assume without loss of
generality that there exists a solution $z_{0}$ with $|z_{0}|\leq 1$. For $%
|\alpha |\leq 1$ the image of the closed unit disc under the map $f$ lies
again in the closed unit disc. In contrast the image of the closed unit disc
under the map $g$ lies outside of the open disc. Hence, any solution to (\ref%
{bae2}) with $|\alpha |\leq 1$ must lie on the boundary of these two image
regions, i.e. the unit circle.
\end{proof}\smallskip

\textsc{Remark.} Motivated by the previous proposition we shall henceforth
use the parametrisation%
\begin{equation}
\alpha =\bar{\beta}=ge^{i\theta },\;g\geq 0  \label{boundpar}
\end{equation}%
for the boundary parameters. The Hamiltonian therefore depends now only on
two real parameters, $g\geq 0$ and $0\leq \theta <2\pi $.\smallskip

Note that there are only $M$ relevant solutions to the Bethe ansatz
equations. The trivial roots $z=\pm 1$ do not occur in the spectrum of the
Hamiltonian which allows one to reduce the problem of solving (\ref{bae}) to
finding the roots of the following palindromic or self-reciprocal polynomial%
\begin{equation}
f(z)=z^{2M}f(z^{-1})=z^{2M}+1+(1+g^{2})\sum_{m=1}^{M-1}z^{2m}+2g\cos \theta
\sum_{m=0}^{M-1}z^{2m+1}\ .  \label{f}
\end{equation}%
Thus, all roots $z_{i}$ occur in reciprocal pairs and there exists a unique
polynomial 
\begin{equation}
F(\varepsilon )=\prod_{i=1}^{M}(\varepsilon -\varepsilon _{i})
\end{equation}%
of order $M$ whose roots are given by the single-particle energies (compare
with (\ref{specH3}))%
\begin{equation}
\varepsilon _{i}=z_{i}+z_{i}^{-1}\ .
\end{equation}%
Note that the ambiguity in the definition of the Bethe root $z_{i}$ does not
matter, as the wavefunction (\ref{psi}) is simply rescaled by changing $%
z_{i}\rightarrow z_{i}^{-1}$. Thus, the problem of computing the
eigenvectors and spectrum of the Hamiltonian is reduced to finding the roots 
$\varepsilon _{i}$ of a polynomial $F$ of degree $M$. For example, setting $%
M=8$ and $\rho =\alpha +\beta ,\;\sigma =1+\alpha \beta $ we obtain%
\begin{equation}
F(\varepsilon )=\varepsilon ^{8}-\varepsilon ^{6}(8-\sigma )+5\varepsilon
^{4}(4-\sigma )-2\varepsilon ^{2}(8-3\sigma )+2-\sigma -4\rho \varepsilon
+10\rho \varepsilon ^{3}-6\rho \varepsilon ^{5}+\rho \varepsilon ^{7}\;.
\end{equation}%
We now specialize to the case of particular interest, $\alpha =-\beta
=ig,\;g\in \mathbb{R}$, and present general expressions for the reduced
polynomial $F$ for all $M$.

\subsection{Palindromic polynomials for purely imaginary boundary fields}

For the special choice $\theta =\pm \pi /2$ the palindromic polynomial (\ref%
{f}) simplifies since all terms involving odd powers disappear. One easily
convinces oneself, that this leads to the further simplification that all
roots occur in pairs $\pm \varepsilon _{i}$. Furthermore, if the number of
sites $M$ is odd one easily verifies that one has the roots $z_{i}=\pm \sqrt{%
-1}$. The latter give rise to a \textquotedblleft zero
mode\textquotedblright , $\varepsilon _{i}=0,$ of the Hamiltonian%
\begin{equation*}
\lbrack H,\hat{c}_{z=i}^{\ast }]=0,
\end{equation*}%
where the corresponding wave function is given by%
\begin{equation*}
\psi _{z=i}(x)=\frac{\sin \frac{\pi x}{2}-ig\cos \frac{\pi x}{2}}{\sqrt{%
\frac{M+1}{2}-\frac{M-1}{2}g^{2}}}\;.
\end{equation*}%
A similar expression holds for $z=-i$. As we will see below this solution
for the discrete wave function is connected with a $U(\mathfrak{gl}(1|1))$%
-symmetry of the Hamiltonian. Dividing out the zero mode, we end up with a
palindromic polynomial of degree $2(M-1)$,%
\begin{equation*}
f(z)=\sum_{k=0}^{m}z^{4k}+g^{2}\sum_{k=0}^{m-1}z^{4k+2},\qquad m=\frac{M-1}{2%
}
\end{equation*}%
Setting therefore 
\begin{equation}
m=\left\{ 
\begin{array}{cc}
\frac{M-1}{2}, & M\text{\ odd} \\ 
\frac{M}{2}, & M\text{ even}%
\end{array}%
\right. \;
\end{equation}%
we write once more%
\begin{equation}
f(z)=\prod_{i=1}^{2m}(z^{2}-\varepsilon _{i}z+1),\qquad \varepsilon
_{i}=z_{i}+z_{i}^{-1}
\end{equation}%
for the reduced polynomial. From the two alternative expressions for $f$ we
obtain a linear system of equations for the elementary symmetric polynomials 
$e_{k}=e_{k}(\varepsilon _{1},...,\varepsilon _{m})$ in the roots $%
\varepsilon _{i}$. Solving this system we then define the polynomial%
\begin{equation*}
F(\varepsilon )=\sum_{k=0}^{2m}(-1)^{k}e_{k}\varepsilon
^{m-k}=\prod_{i=1}^{m}(\varepsilon -\varepsilon _{i})(\varepsilon
+\varepsilon _{i})\;.
\end{equation*}%
We now explicitly state this polynomial $F$ for $\alpha =-\beta $ on the
imaginary axis. We have to distinguish the cases of odd and even sites:

\begin{description}
\item[$\boldsymbol{M=2m+1}$.] 
\begin{eqnarray}
F(\varepsilon ) &=&(-)^{m}\sum\limits_{k=0}^{m}\frac{(-)^{k}\varepsilon ^{2k}%
}{(2k+1)!}\left[ \frac{(m+1+k)!}{(m-k)!}-g^{2}\frac{(m+k)!}{(m-1-k)!}\right]
\label{Fodd} \\
&=&2\frac{\sin \left[ (M+1)\arccos \frac{\varepsilon }{2}\right] +g^{2}\sin %
\left[ (M-1)\arccos \frac{\varepsilon }{2}\right] }{\varepsilon \sqrt{%
4-\varepsilon ^{2}}}  \label{invFodd}
\end{eqnarray}

\item[$\boldsymbol{M=2m}$.] 
\begin{eqnarray}
F(\varepsilon ) &=&(-)^{m}\sum\limits_{k=0}^{m}\frac{(-)^{k}\varepsilon ^{2k}%
}{(2k)!}\left[ \frac{(m+k)!}{(m-k)!}-g^{2}\frac{(m-1+k)!}{(m-1-k)!}\right]
\label{Feven} \\
&=&2\frac{\sin \left[ (M+1)\arccos \frac{\varepsilon }{2}\right] +g^{2}\sin %
\left[ (M-1)\arccos \frac{\varepsilon }{2}\right] }{\sqrt{4-\varepsilon ^{2}}%
}  \label{invFeven}
\end{eqnarray}
\end{description}

The expressions in terms of the inverse function $\arccos $ can be checked
by rewriting the Bethe ansatz equations (\ref{bae}). At $\alpha =-\beta =ig$
the latter simplify to the transcendental equation,%
\begin{equation}
1=z^{2M}\frac{\beta +z}{\beta +z^{-1}}\frac{1+\alpha z^{-1}}{1+\alpha z}%
=z^{2M}\frac{z^{2}+g^{2}}{1+z^{2}g^{2}}\;\Leftrightarrow \;g^{2}=-\frac{\sin
[(M+1)\zeta ]}{\sin [(M-1)\zeta ]},\qquad z=\exp (i\zeta )\ .
\label{trigbae}
\end{equation}%
This is in agreement with the above expressions for the single particle
energies as zeroes of (\ref{invFodd}) and (\ref{invFeven}). Note however
that the polynomial expressions (\ref{Fodd}), (\ref{Feven}) are of advantage
in the numerical computation of the single particle energies. From (\ref%
{Fodd}) one also easily spots the occurrence of another zero mode $%
\varepsilon =0$ at $g^{2}=(M+1)/(M-1)$ for $M$ odd which we will connect
below with a representation of the universal enveloping algebra of the Lie
superalgebra $\mathfrak{gl}(1|1)$. We will also exploit the above
polynomials and their expressions in terms of inverse trigonometric
functions to derive approximations for the single particle energies in the
interval $0<g<1$.

\subsection{Groundstate eigenvalues and central charges}

Using the results from the previous section we can expand the single
particle energies as power series in the coupling parameter $g^{2}$.
Employing the equation%
\begin{equation*}
F(\varepsilon )=0
\end{equation*}%
we find the following approximated expressions in the vicinity of the points 
$g^{2}=0$ and $g^{2}=1$ where the exact solutions are known:%
\begin{equation}
0<g\ll 1:\;\varepsilon _{k}=2\cos \frac{\pi k}{M+1}-2g^{2}\frac{\sin \frac{%
\pi k}{M+1}\sin \frac{2\pi k}{M+1}}{M+1}+O(g^{4}),\qquad k=1,...,M
\end{equation}%
and 
\begin{equation}
0\ll g<1:\varepsilon _{k}=2\cos \frac{\pi k}{M}+\left( 1-g^{2}\right) \frac{%
\sin \frac{\pi k}{M}\tan \frac{\pi k}{M}}{M}+O((1-g^{2})^{2}),\qquad
k=1,...,M-1\;.
\end{equation}%
If $M$ is even we have to omit the value $k=M/2$ where the above
approximation is not valid since we then have a zero mode at $g=1$ and the
equation $F(0)=0$ becomes trivial. For $M$ even we thus approximate the two
missing eigenvalues which converge to $\varepsilon =0$ at $g=1$ by%
\begin{equation*}
\varepsilon _{M/2}^{\pm }=\pm 2\sqrt{\frac{2(1-g^{2})}{M(M+2-g^{2}(M-2))}}\;.
\end{equation*}%
This approximation is found by expanding the Bethe ansatz equations (\ref%
{invFeven}) at $\varepsilon =0$. An example, $M=8$, for these approximations
is shown in Figures 2 and 3. The solid lines indicate the exact single
particle energies $\varepsilon _{i}$ and the dashed lines the
approximations. In the vicinity of $g=0$ the approximations are depicted in
Figure 2.%
\begin{equation*}
\includegraphics[scale=0.9]{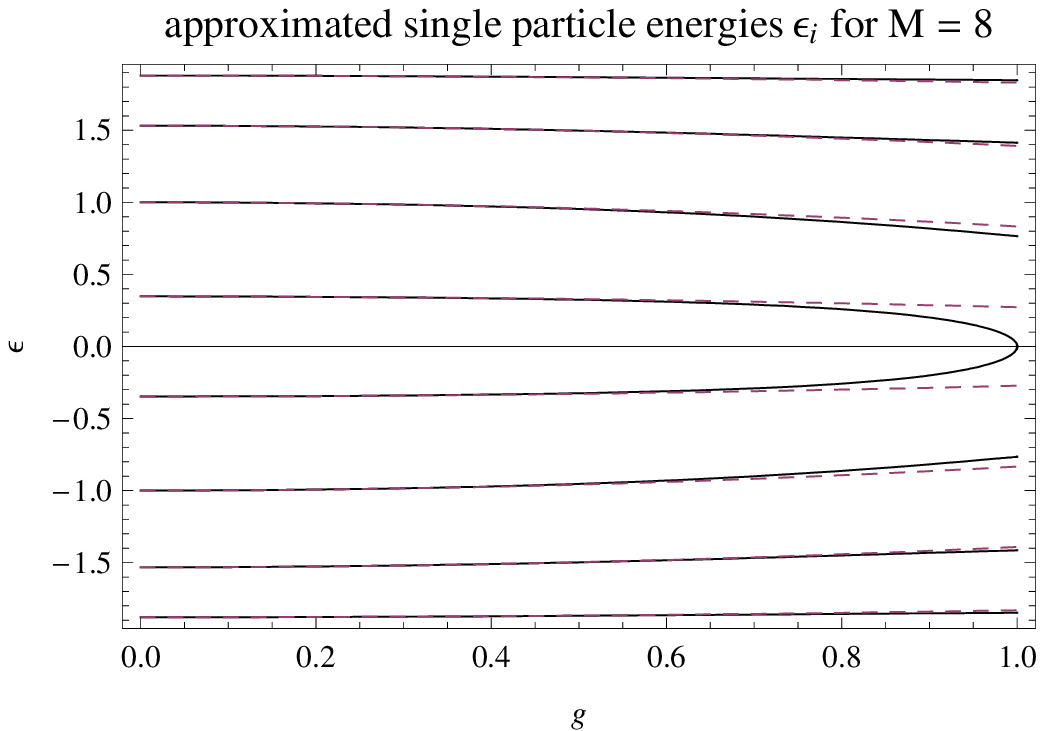}
\end{equation*}%
{\small \medskip }

\begin{center}
{\small Figure 2. Approximations of the single particle energies in the
vicinity of }${\small g}${\small \ = 0.\medskip }
\end{center}

\noindent For the approximations in the vicinity of $g=1$ see Figure 3.%
\begin{equation*}
\includegraphics[scale=0.9]{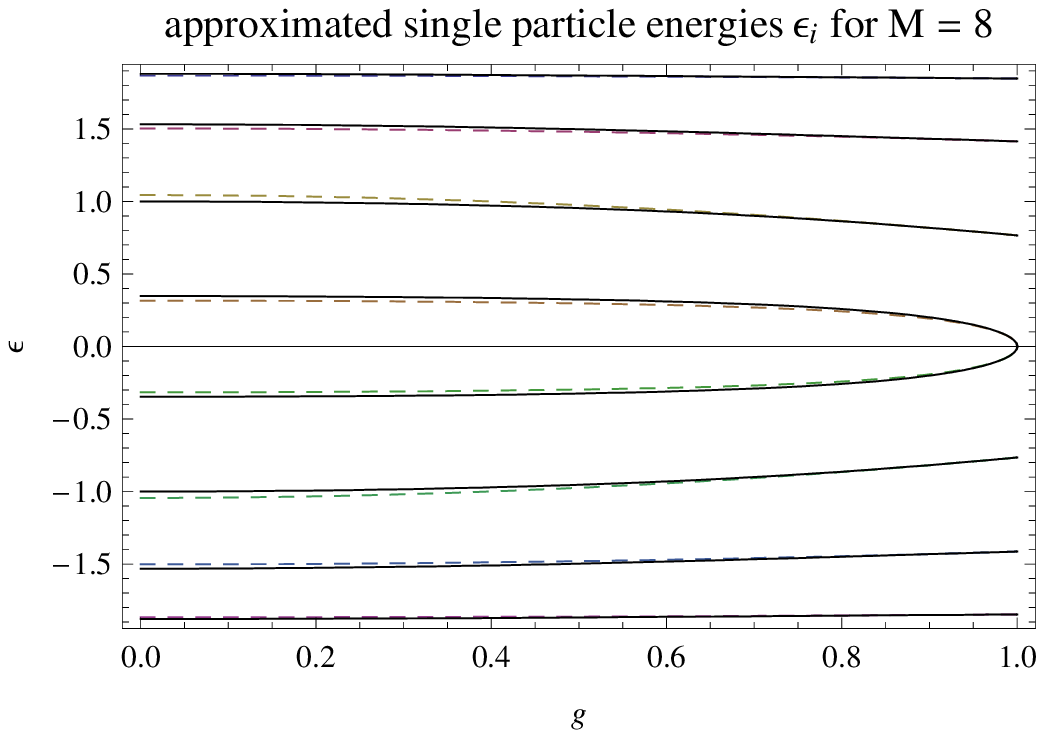}
\end{equation*}%
{\small \medskip }

\begin{center}
{\small Figure 3. Approximations of the single particle energies in the
vicinity of }${\small g}${\small \ = 1.\medskip }
\end{center}

\noindent Having found approximated expressions for the single particle
energies $\varepsilon _{i},$ we can derive approximations for the
groundstate eigenvalue $E_{0}$ of (\ref{Hg}) and discuss the finite size
corrections of the system as it approaches the two points $g=0$ and $g=1$.
For the latter values the result is well known, see Table 1.\medskip

\begin{center}
\begin{tabular}{||l||l|l|}
\hline\hline
& $g=0$ & $g=1$ \\ \hline\hline
$M$ odd & $E_{0}=1-\tfrac{2(M+1)}{\pi }+\tfrac{\pi }{6M}+O(\frac{1}{M^{2}})$
& $E_{0}=1-\tfrac{2M}{\pi }-\tfrac{\pi }{12M}+O(\frac{1}{M^{2}})$ \\ \hline
$M$ even & $E_{0}=1-\tfrac{2(M+1)}{\pi }-\tfrac{\pi }{12M}+O(\frac{1}{M^{2}}%
) $ & $E_{0}=1-\tfrac{2M}{\pi }+\tfrac{\pi }{6M}+O(\frac{1}{M^{2}})$ \\ 
\hline
\end{tabular}%
\medskip

Table 1. Expansion of the groundstate energy with respect to the lattice
size $M$. \medskip
\end{center}

\noindent Using these expansions in the system size $M$ one obtains via the
general formula \cite{BCN86,A86} 
\begin{equation}
E_{0}=2M~f_{\infty }+f_{s}-\frac{\pi c_{eff}}{12M}+O(\frac{1}{M^{2}})
\end{equation}%
the effective central charge $c_{eff}=c-12d_{\min }$ of the conformal field
theory describing the system in the thermodynamic limit. Here $d_{\min }$ is
the smallest scaling dimension occurring in the theory and $f_{\infty
},~f_{s}$ are the bulk and surface free energy, respectively. One recovers
the familiar central charges at $g=1$ mentioned in the introduction: for an
odd number of sites the conformal anomaly is $c=1$ ($d_{\min }=0$) while it
is $c=-2$ for $M$ even ($d_{\min }=0$). Note that for $M$ odd at $g=0$ we
also formally obtain $c_{eff}=-2$, it appears however that this case has not
been investigated further in the literature \cite{F93}.

Let us now consider the approximate expressions for $0<g<1$. Summing the
above expressions for the single particle energies we arrive at

\begin{description}
\item[$M\;$odd, $0<g\ll 1:$] 
\begin{align}
E_{0}& =-\sum_{k=1}^{\frac{M-1}{2}}\varepsilon _{k}=1-\dfrac{\cos \frac{\pi 
}{2(M+1)}}{\sin \frac{\pi }{2(M+1)}}+\frac{2\cos \frac{\pi }{M+1}\cos \frac{%
\pi }{2(M+1)}}{(M+1)\sin \frac{3\pi }{2(M+1)}}~g^{2}+O(g^{4})  \notag \\
& =1+\frac{4g^{2}}{3\pi }-\dfrac{2(M+1)}{\pi }+\dfrac{\pi }{6M}%
+O(M^{-2})+O(g^{4})
\end{align}

\item[$M$ even, $0<g\ll 1:$] 
\begin{align}
E_{0}& =-\sum_{k=1}^{\frac{M}{2}}\varepsilon _{k}=1-\dfrac{1}{\sin \frac{\pi 
}{2(M+1)}}+\frac{1+\cos \frac{\pi }{M+1}}{(M+1)\sin \frac{3\pi }{2(M+1)}}%
~g^{2}+O(g^{4})  \notag \\
& =1+\frac{4g^{2}}{3\pi }-\dfrac{2(M+1)}{\pi }-\dfrac{\pi }{12M}%
+O(M^{-2})+O(g^{4})
\end{align}
\end{description}

Thus in both cases we only see a change in the surface term $f_{s}$ up to
order $g^{2}$ in the vicinity of $g=0$. A computation of the second
derivative $\varepsilon _{k}^{\prime \prime }(g^{2}=0)$ using the equation $%
F(\varepsilon )=0$ shows that also the $g^{4}$ contribution to the
groundstate only contributes to the surface energy. The computation of
higher orders becomes cumbersome and it might be preferrable to rely on
field theoretic methods instead, we briefly comment on this in the
conclusions.

Let us now turn to the case when $g$ is the vicinity of $g=1$ to see whether
we encounter a shift in the central charge here. Again we sum the
approximate expressions of the single particle energies to obtain the
expression for the groundstate eigenvalue. Unlike for $g=0$ we cannot find
exact expressions at this point. Instead we split the sum over the
correction terms for $g<1$ into two parts,%
\begin{equation*}
\sum_{k}\frac{\sin \frac{\pi k}{M}\tan \frac{\pi k}{M}}{M}=\frac{1}{M}%
\sum_{k}\frac{1}{\cos \frac{\pi k}{M}}-\frac{1}{M}\sum_{k}\cos \frac{\pi k}{M%
}\ ,
\end{equation*}%
and for $M>4$ employ the Euler-Maclaurin formula to obtain the following
approximations for the first sum,%
\begin{equation}
M\;\text{odd, }0\ll g<1:\frac{1}{M}\sum_{k=1}^{\frac{M-1}{2}}\frac{1}{\cos 
\frac{\pi k}{M}}\approx \frac{4-3\ln \frac{1}{M}+3\ln \frac{4}{\pi }}{3\pi }-%
\frac{1}{2M}+O(\frac{1}{M^{2}})
\end{equation}%
and%
\begin{equation}
M\;\text{even, }0\ll g<1:\frac{1}{M}\sum_{k=1}^{\frac{M}{2}-1}\frac{1}{\cos 
\frac{\pi k}{M}}\approx \frac{7-12\ln \frac{1}{M}+12\ln \frac{2}{\pi }}{%
12\pi }-\frac{1}{2M}+O(\frac{1}{M^{2}})\;.
\end{equation}%
For $M$ even we have another contribution from the energy level which
becomes a zero mode at $g=1$. Keeping $\delta =1-g^{2}$ fixed and choosing $%
M $ large enough such that $\delta M\gg 1$ we find the asymptotic expansion%
\begin{equation}
\varepsilon _{M/2}=\frac{2\sqrt{2}}{M}-\frac{2\sqrt{2}(2-\delta )}{\delta
M^{2}}+~O(\frac{1}{\delta ^{2}M^{3}}),\qquad \delta :=1-g^{2}
\end{equation}%
However, for our purposes this is not a good approximation since it should
vanish at $g=1$. If we therefore expand first with respect to $\delta $ we
find instead, 
\begin{equation}
\varepsilon _{M/2}=\sqrt{\frac{2\delta }{M}}-\frac{(M-2)}{4\sqrt{2M}}~\delta
^{3/2}+O(\delta ^{5/2})\ .
\end{equation}%
From this approximation we can not infer the correct finite size scaling
properties. In order to extract conclusive results about the finite size
scaling behaviour we therefore need to find the exact solutions for the
single particle energies first. For now we have to leave this problem open,
but we hope to address it in future work by field theoretic methods and
exploiting the underlying algebraic structures which we highlight next.

\section{Quasi-Fermions and $U(\mathfrak{gl}(1|1))$-invariance}

As we saw in the previous section the spectrum of the Hamiltonian is
composed out of single particle excitations with energies $\varepsilon
_{i}=z_{i}+z_{i}^{-1}$ which are \textquotedblleft
created\textquotedblright\ from a pseudo-vacuum applying the creation
operator (\ref{ccross}). Naturally, we wish to find the corresponding
annihilation operator in order to set up a Fermion or CAR algebra in
quasi-momentum space which diagonalises the Hamiltonian. As the Hamiltonian
is non-Hermitian the annihilation operator of momentum $k=-i\ln z$ cannot
simply be the Hermitian adjoint of (\ref{ccross}) with respect to the
original inner product. Instead we introduce the time-reversed annihilation
operator 
\begin{equation}
\hat{d}_{z}:=T\hat{c}_{z}T=\sum_{x=1}^{M}\psi _{z}(x;\alpha ,\beta =\bar{%
\alpha})c_{x},  \label{d}
\end{equation}%
where the conjugation with the time reversal operator is motivated by
observing that 
\begin{equation*}
THT=H^{\ast }\qquad \text{for}\qquad \beta =\bar{\alpha}~.
\end{equation*}%
Obviously, we then obtain the desired commutation relation%
\begin{equation}
\lbrack H,\hat{d}_{z}]=(z+z^{-1})\hat{d}_{z}\ ,  \label{specH2}
\end{equation}%
i.e. $\hat{d}_{z}$ annihilates a quasi-particle of energy $\varepsilon
=z+z^{-1}$.

Alternatively, we could also have employed the parity operator $P$ in the
definition (\ref{d}), since\ $PHP=H^{\ast }$. However, up to a phase factor
this would lead to the same result due to the following identities of the
wave function $\psi _{z}(x;\alpha ):=\psi _{z}(x;\alpha ,\beta =\bar{\alpha}%
) $,%
\begin{equation}
\psi _{z}(M+1-x;\alpha )=-z^{M+1}~\frac{1+\bar{\alpha}/z}{1+\bar{\alpha}z}%
~\psi _{z}(x;\bar{\alpha})=-z^{M+1}\overline{\psi _{z}(x;\alpha )}\ .
\label{PTpsi}
\end{equation}%
Here we have assumed that $z$ is a solution of (\ref{bae}) and hence, lies
on the unit circle. From (\ref{PTpsi}) together with (\ref{PTRc}) we infer
that%
\begin{equation}
PT\hat{c}_{z}^{\ast }PT=-z^{-M-1}\hat{c}_{z}^{\ast }\qquad \text{and\qquad }P%
\hat{c}_{z}P=-z^{-M-1}\hat{d}_{z}\ .  \label{Pcd}
\end{equation}%
This in particular implies that the system possesses $PT$-symmetry, i.e. the
eigenstates of the Hamiltonian can be chosen to be eigenvectors of the $PT$%
-operator. However, this fact does not guarantee quasi-Hermiticity which we
show next.

Our strategy for establishing quasi-Hermiticity is to turn the operators (%
\ref{ccross}) and (\ref{d}) into a well defined representation of the
fermionic oscillator or CAR algebra. Under an appropriate renormalization of
the wavefunction (\ref{psi}),%
\begin{equation}
\psi _{z}(x;\alpha )\rightarrow \frac{a_{z}\psi _{z}(x;\alpha )}{%
(-2Ma_{z}a_{z^{-1}}+[M]_{z}(z^{M+1}a_{z^{-1}}^{2}+z^{-M-1}a_{z}^{2}))^{\frac{%
1}{2}}},\qquad a_{z}:=1+\alpha /z,
\end{equation}%
one verifies that the following anticommutation relations hold, 
\begin{equation}
\lbrack \hat{c}_{z_{1}}^{\ast },\hat{c}_{z_{2}}^{\ast }]_{+}=[\hat{d}%
_{z_{1}},\hat{d}_{z_{2}}]_{+}=0\qquad \text{and}\qquad \lbrack \hat{c}%
_{z_{1}}^{\ast },\hat{d}_{z_{2}}]_{+}=\delta _{z_{1},z_{2}}\ .  \label{car}
\end{equation}%
Here $z_{1},z_{2}$ are two solutions to the Bethe ansatz equations (\ref{bae}%
). The anticommutation relations (\ref{car}) are a direct consequence of the
wave function identities%
\begin{equation}
\sum_{x}\psi _{z_{k}}(x)\psi _{z_{l}}(x)=\delta _{kl}\quad \quad \text{and}%
\quad \quad \sum_{k}\psi _{z_{k}}(x)\psi _{z_{k}}(y)=\delta _{x,y},
\label{comp}
\end{equation}%
where the index $k$ labels the solutions to the Bethe ansatz equations.
Notice that these identities lead to the inversion formulae%
\begin{equation}
c_{x}^{\ast }=\sum_{k}\psi _{z_{k}}(x)\hat{c}_{z_{k}}^{\ast }=\sum_{k}%
\overline{\psi _{z_{k}}(x)}\hat{d}_{z_{k}}^{\ast }
\end{equation}%
and%
\begin{equation}
c_{x}=\sum_{k}\overline{\psi _{z_{k}}(x)}\hat{c}_{z_{k}}=\sum_{k}\psi
_{z_{k}}(x)\hat{d}_{z_{k}}\ .
\end{equation}%
While we have obtained the correct anticommuation relations in
quasi-momentum space, the algebras generated by $\{\hat{c}_{z}^{\ast
},d_{z}\}$ do not possess the right $\ast $-structure (anti-involution),
where creation and annihilation operators are related by Hermitian
conjugation. However, we now introduce the quasi-Hermiticity operator and
the associated inner product with respect to which (\ref{ccross}) and (\ref%
{d}) possess the right $\ast $-structure.

\begin{theorem}
The Hamiltonians (\ref{H}), (\ref{Hprime}) are quasi-Hermitian for $\alpha =%
\bar{\beta}$ and $|\alpha |<1$. Let $z_{j}=\exp ik_{j},\;j=1,...,M$ be the $%
M $ roots of the polynomial (\ref{f}) not containing a reciprocal pair ($%
k_{j}\neq -k_{j^{\prime }}$ for all $j,j^{\prime }$), then%
\begin{equation}
H=-\frac{\alpha +\bar{\alpha}}{2}-\sum_{j=1}^{M}2\cos k_{j}~\hat{c}%
_{k_{j}}^{\ast }\hat{d}_{k_{j}}\ .
\end{equation}%
A similar expression holds for (\ref{Hprime}). The quasi-Hermiticity
operator $\eta :V^{\otimes M}\rightarrow V^{\otimes M}$ and its inverse are
given by%
\begin{equation}
\eta =\sum_{n=0}^{M}\tsum_{k_{j_{1}}<...<k_{j_{n}}}\hat{d}_{k_{j_{1}}}^{\ast
}\cdots \hat{d}_{k_{j_{n}}}^{\ast }|0\rangle \langle 0|\hat{d}%
_{k_{j_{n}}}\cdots \hat{d}_{k_{j_{1}}}  \label{fermieta}
\end{equation}%
$\quad $and\quad 
\begin{equation}
\eta ^{-1}=\sum_{n=0}^{M}\tsum_{k_{j_{1}}<...<k_{j_{n}}}\hat{c}%
_{k_{j_{1}}}^{\ast }\cdots \hat{c}_{k_{j_{n}}}^{\ast }|0\rangle \langle 0|%
\hat{c}_{k_{j_{n}}}\cdots \hat{c}_{k_{j_{1}}}\;.
\end{equation}%
Besides quasi-Hermiticity of the Hamiltonian the map $\eta $ also induces
the conventional $\ast $-structure of free fermions in quasi-momentum space
via the identities%
\begin{equation}
\eta \hat{c}_{k}^{\ast }=\hat{d}_{k}^{\ast }\eta \qquad \text{and}\qquad
\eta \hat{d}_{k}=\hat{c}_{k}\eta ,\qquad k=1,...,M\ .
\end{equation}
\end{theorem}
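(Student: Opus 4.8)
The plan is to treat the $M$ quasi-fermion pairs $\{\hat{c}_{k_j}^{\ast},\hat{d}_{k_j}\}$ attached to the non-reciprocal roots as a genuine CAR algebra and to recognise $\eta$ as the operator interchanging its two Fock bases. First I would pin down the diagonal form of $H$. Writing $N_j=\hat{c}_{k_j}^{\ast}\hat{d}_{k_j}$ and using (\ref{car}) one checks the number-operator identities $[N_j,\hat{c}_{k_j}^{\ast}]=\hat{c}_{k_j}^{\ast}$ and $[N_j,\hat{d}_{k_j}]=-\hat{d}_{k_j}$, so that $\tilde{H}:=-\tfrac{\alpha+\bar{\alpha}}{2}-\sum_j 2\cos k_j\,N_j$ has exactly the same commutators with every $\hat{c}_{k_j}^{\ast},\hat{d}_{k_j}$ as $H$ has in (\ref{specH}), (\ref{specH2}). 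Since the inversion formulae express $c_x^{\ast}$ as a combination of the $\hat{c}_{k_j}^{\ast}$ and $c_x$ as a combination of the $\hat{d}_{k_j}$, and the $\{c_x^{\ast},c_x\}$ generate $\mathrm{End}(V^{\otimes M})$, the difference $H-\tilde{H}$ commutes with the whole algebra and is therefore scalar; evaluating on the pseudo-vacuum $|0\rangle$, where $\hat{d}_{k_j}|0\rangle=0$ and $H|0\rangle=-\tfrac{\alpha+\beta}{2}|0\rangle$ by (\ref{specH3}), fixes that scalar to zero once $\beta=\bar{\alpha}$. This gives the stated diagonal form.

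The core of the proof is a biorthogonality statement for the two Fock bases $|c_S\rangle:=\hat{c}_{k_{j_1}}^{\ast}\cdots\hat{c}_{k_{j_n}}^{\ast}|0\rangle$ and $|d_S\rangle:=\hat{d}_{k_{j_1}}^{\ast}\cdots\hat{d}_{k_{j_n}}^{\ast}|0\rangle$, indexed by subsets $S=\{j_1<\dots<j_n\}$ of $\{1,\dots,M\}$. Because $\hat{c}_k^{\ast},\hat{d}_k^{\ast}$ are the Hermitian adjoints of $\hat{c}_k,\hat{d}_k$ in the original product, the bra $\langle 0|\hat{d}_{k_{j_n}}\cdots\hat{d}_{k_{j_1}}$ occurring in (\ref{fermieta}) is precisely $\langle d_S|$, so $\eta=\sum_S|d_S\rangle\langle d_S|$ is manifestly Hermitian and positive semi-definite. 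Feeding the mixed relation $[\hat{c}_k^{\ast},\hat{d}_{k'}]_+=\delta_{k,k'}$ of (\ref{car}) through a standard fermionic contraction, together with $\hat{d}_k|0\rangle=0$ and $\langle 0|\hat{c}_k^{\ast}=0$, yields $\langle d_S|c_T\rangle=\delta_{S,T}$ and, from the adjoint relations, $\langle c_S|d_T\rangle=\delta_{S,T}$. This forces the $2^M$ vectors $|c_T\rangle$ (resp.\ $|d_T\rangle$) to be linearly independent, hence bases of $V^{\otimes M}$; consequently $\eta$ is positive definite and invertible with $\det\eta>0$, and $\eta|c_T\rangle=\sum_S|d_S\rangle\langle d_S|c_T\rangle=|d_T\rangle$, i.e.\ $\eta$ carries the $c$-basis onto the $d$-basis. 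The analogous computation shows $\eta^{-1}=\sum_S|c_S\rangle\langle c_S|$ sends $|d_T\rangle$ back to $|c_T\rangle$, confirming the stated inverse.

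The $\ast$-structure relations then follow by evaluating on the basis $\{|c_T\rangle\}$: both $\eta\hat{c}_k^{\ast}|c_T\rangle$ and $\hat{d}_k^{\ast}\eta|c_T\rangle$ equal $\pm|d_{T\cup\{k\}}\rangle$ (or $0$ if $k\in T$), with identical sign coming from the reordering of creation operators, and similarly $\eta\hat{d}_k|c_T\rangle=\hat{c}_k\eta|c_T\rangle$, using $\hat{c}_k|0\rangle=0$ and $[\hat{c}_k,\hat{d}_{k'}^{\ast}]_+=\delta_{k,k'}$. Hence $\eta\hat{c}_k^{\ast}=\hat{d}_k^{\ast}\eta$ and $\eta\hat{d}_k=\hat{c}_k\eta$, and quasi-Hermiticity drops out: the adjoint of the diagonal form is $H^{\ast}=-\tfrac{\alpha+\bar{\alpha}}{2}-\sum_j 2\cos k_j\,\hat{d}_{k_j}^{\ast}\hat{c}_{k_j}$ (the $k_j$ being real by the Proposition), and the one-line computation $\eta H=-\tfrac{\alpha+\bar{\alpha}}{2}\eta-\sum_j 2\cos k_j\,\hat{d}_{k_j}^{\ast}\eta\hat{d}_{k_j}=-\tfrac{\alpha+\bar{\alpha}}{2}\eta-\sum_j 2\cos k_j\,\hat{d}_{k_j}^{\ast}\hat{c}_{k_j}\eta=H^{\ast}\eta$ establishes (\ref{etainter}). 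Since every summand of $\eta$ carries a definite particle number, $[\eta,S^z]=0$ (cf.\ (\ref{Sz})), and because $H^{\prime}=H-(\alpha+\beta)S^z$ with $\alpha+\beta$ real, the same $\eta$ intertwines $H^{\prime}$ with $(H^{\prime})^{\ast}$.

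The step I expect to be most delicate is not any single identity but the bookkeeping that makes them simultaneously consistent: tracking the fermionic signs through the reordering in the intertwining relations, and---crucially---verifying non-degeneracy precisely on the \emph{open} disc. The strict inequality $|\alpha|<1$ is exactly what keeps the normalisation denominator of the renormalised wave function from vanishing, so that (\ref{car}) holds with no collapsed anticommutators and the $|c_T\rangle$ genuinely span a $2^M$-dimensional space. On the boundary $|\alpha|=1$ this fails: a vanishing anticommutator produces fermionic zero modes and Jordan blocks, which is precisely the breakdown of positive-definiteness of $\eta$ that the strict bound is designed to exclude.
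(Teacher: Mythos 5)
Your proposal is correct and follows essentially the same route as the paper's own (very terse) proof: both arguments rest on the Proposition placing all Bethe roots on the unit circle, on the quasi-particle relations (\ref{car}) and (\ref{specH3}), and on recognising that the $2^{M}$ states built from $\hat{c}^{\ast}$ and from $\hat{d}^{\ast}$ form two biorthogonal bases between which $\eta$ is the change-of-basis map. You have simply written out in full the computations (diagonalisation of $H$, biorthogonality $\langle d_S|c_T\rangle=\delta_{S,T}$, positive definiteness, and the intertwining identities) that the paper leaves implicit.
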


\begin{proof}
First we notice that the creation operators (\ref{ccross}) evaluated at the
solutions $z=z_{i}$ of the equations (\ref{bae}) provide us with a basis in
the state space, see (\ref{particlebasis}). Due to the symmetry of the equations (\ref{bae}) under the
replacement $z\rightarrow z^{-1}$, there are only $M$ relevant distinct
solutions, despite the fact that the polynomial order is $2M+2.$ According
to the anticommutation relations (\ref{car}) these solutions then yield $%
2^{M}=\dim V^{\otimes M}$ eigenvectors. The assertions then follow from the
previous proposition showing that all Bethe roots lie on the unit circle for
$\alpha =\bar{\beta}$ and $|\alpha |<1$ as well as employing (\ref{specH3})
and (\ref{car}).
\end{proof}

\begin{corollary}
The C-operator $C=P\eta $ has the following expression in terms of creation
and annihilation operators,%
\begin{equation}
C=\sum_{n=0}^{M}\tsum_{k_{j_{1}}<...<k_{j_{n}}}(-)^{n}e^{i(M+1)(k_{j_{1}}+%
\cdots +k_{j_{n}})}\hat{c}_{k_{j_{1}}}^{\ast }\cdots \hat{c}%
_{k_{j_{n}}}^{\ast }|0\rangle \langle 0|\hat{d}_{k_{j_{n}}}\cdots \hat{d}%
_{k_{j_{1}}}\ .
\end{equation}%
Thus, with respect to the quasi-particle basis $\{\left\vert
k_{j_{1}},...,k_{j_{n}}\right\rangle =\hat{c}_{k_{j_{1}}}^{\ast }\cdots \hat{%
c}_{k_{j_{n}}}^{\ast }\left\vert 0\right\rangle \}_{n=0}^{M}$ the $C$%
-operator is simply given in terms of the total quasi-momentum operator $%
\hat{P}=\sum_{r}k_{j_{r}}$ and the quasi-particle number $\hat{N}=\sum_{j}%
\hat{c}_{k_{j}}^{\ast }\hat{d}_{k_{j}}$ as%
\begin{equation}
C=(-1)^{\hat{N}}e^{i(M+1)\hat{P}}\;.
\end{equation}
\end{corollary}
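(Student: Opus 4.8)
The plan is to evaluate $C=P\eta$ directly from the fermionic expression (\ref{fermieta}) for $\eta$ obtained in the Theorem, by commuting the parity operator $P$ to the right through the string of creation operators and the vacuum projector. Since $P$ multiplies (\ref{fermieta}) only from the left, the bra factor $\langle 0|\hat d_{k_{j_n}}\cdots \hat d_{k_{j_1}}$ is left untouched, and the whole computation reduces to understanding how $P$ acts on $\hat d_{k}^{\ast}$ and on the pseudo-vacuum $|0\rangle$.

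The two facts I need are $P|0\rangle=|0\rangle$, which is immediate since $|0\rangle=v_{-1}\otimes\cdots\otimes v_{-1}$ is invariant under reversal of the tensor factors, and the single-operator rule
\begin{equation*}
P\hat d_{z}^{\ast }P=-z^{-M-1}\hat c_{z}^{\ast }\ .
\end{equation*}
The latter I would obtain by taking the Hermitian adjoint of the second identity in (\ref{Pcd}), using that $P$ is a real symmetric involution ($P^{\ast}=P$, $P^{2}=1$) and that on a Bethe root $|z|=1$, so $\bar z=z^{-1}$. Equivalently one rearranges (\ref{Pcd}) into $P\hat d_{z}P=-z^{M+1}\hat c_{z}$ and conjugates.

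Inserting $P^{2}=1$ between consecutive factors, each $\hat d_{k_{j_r}}^{\ast}$ is converted into $-z_{j_r}^{-M-1}\hat c_{k_{j_r}}^{\ast}=-e^{-i(M+1)k_{j_r}}\hat c_{k_{j_r}}^{\ast}$ in its original position, with no fermionic reordering incurred, so the string $\hat d_{k_{j_1}}^{\ast}\cdots \hat d_{k_{j_n}}^{\ast}|0\rangle$ becomes $(-)^{n}e^{-i(M+1)(k_{j_1}+\cdots+k_{j_n})}\,\hat c_{k_{j_1}}^{\ast}\cdots \hat c_{k_{j_n}}^{\ast}|0\rangle$. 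Collecting these scalar factors produces exactly the stated explicit expansion for $C$; the sign of the exponent is pinned down by the freedom, allowed by the reciprocal-pair symmetry $k_{j}\leftrightarrow -k_{j}$, in the choice of representative Bethe roots.

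Finally, to pass to the compact form I would recognise each summand $\hat c_{k_{j_1}}^{\ast}\cdots \hat c_{k_{j_n}}^{\ast}|0\rangle\langle 0|\hat d_{k_{j_n}}\cdots \hat d_{k_{j_1}}$ as the rank-one projector onto the quasi-particle state $|k_{j_1},\ldots,k_{j_n}\rangle$ of (\ref{particlebasis}); this uses the canonical anticommutation relations (\ref{car}) together with the fact, established in the Theorem, that these states form a basis, so that the sum over all ordered subsets is a resolution of the identity. On this basis the number operator $\hat N=\sum_{j}\hat c_{k_j}^{\ast}\hat d_{k_j}$ has eigenvalue $n$ and $\hat P=\sum_{r}k_{j_r}$ the total quasi-momentum, so the scalar weight $(-)^{n}e^{i(M+1)(k_{j_1}+\cdots+k_{j_n})}$ is precisely the eigenvalue of $(-1)^{\hat N}e^{i(M+1)\hat P}$, giving $C=(-1)^{\hat N}e^{i(M+1)\hat P}$. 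I expect the main bookkeeping obstacle to be verifying the projector/biorthogonality structure cleanly from (\ref{car}) and keeping track of the fermionic signs and the exponent convention; the operator-commutation part is routine once $P\hat d_{z}^{\ast}P$ is in hand.
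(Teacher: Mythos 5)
Your proof is correct and follows exactly the route the paper intends: the paper's own proof of this corollary is the single line ``an immediate consequence of (\ref{Pcd})'', and your argument is precisely the careful expansion of that remark — conjugating each $\hat d^{\ast}_{k}$ in (\ref{fermieta}) by $P$, using $P|0\rangle=|0\rangle$, and recognising the resulting rank-one operators as the spectral projectors of $\hat N$ and $\hat P$. The one point of friction, that your computation naturally yields $e^{-i(M+1)k}$ per mode where the corollary prints $e^{+i(M+1)k}$, traces back to the phase conventions buried in the normalisation of $\psi_{z}$ and in (\ref{PTpsi}) rather than to any flaw in your reasoning, so flagging it as a convention in the choice of Bethe-root representatives is the appropriate resolution.
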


\begin{proof}
An immediate consequence from (\ref{Pcd}).
\end{proof}\medskip

Let us summarize the result: the fermion or CAR algebra $\{c_{x},c_{x}^{\ast
}\}_{x=1}^{M}$ in position space with respect to the original inner product
is replaced by a CAR algebra $\{\hat{d}_{k},\hat{c}_{k}^{\ast }\}_{k=1}^{M}$
in quasi-momentum space with respect to the $\eta $-product,%
\begin{equation}
\langle \hat{c}_{k}^{\ast }v,w\rangle _{\eta }=\langle v,\hat{c}_{k}\eta
w\rangle =\langle v,\hat{d}_{k}w\rangle _{\eta }
\end{equation}%
and%
\begin{equation}
\langle \hat{d}_{k}v,w\rangle _{\eta }=\langle v,\hat{d}_{k}^{\ast }\eta
w\rangle =\langle v,\hat{c}_{k}^{\ast }w\rangle _{\eta },\qquad v,w\in
V^{\otimes M}\ .
\end{equation}%
Note that there is another copy of this CAR algebra, namely $\{\hat{d}%
_{k}^{\ast },\hat{c}_{k}\}_{k=1}^{M}$ but with respect to the $\eta ^{-1}$%
-product.

We have excluded the unit circle from the allowed range of the boundary
parameter $\alpha $. While the spectrum of $H$ is also real in this case, it
does not necessarily imply the existence of a positive \emph{definite}
operator $\eta $, i.e. quasi-Hermiticity is a stronger condition. This is
linked to the appearance of Jordan blocks in the Jordan normal form of the
Hamiltonian as we will discuss now for the special case when $\alpha =-\beta
=ig,\;g\geq 0$. Without loss of generality we can restrict ourselves to
positive values of $g$, since $g$ and $-g$ are related via spin-reversal. We
will show that the appearance of non-trivial Jordan blocks coincides with
the vanishing of a central element in a representation of a specific
superalgebra.

\subsection{\textquotedblleft Deformed\textquotedblright quantum group
symmetry on the imaginary axis}

Since the Hamiltonians (\ref{H}), (\ref{Hprime}) and (\ref{Hg}) can be
expressed in terms of free fermions as we have just seen, there are several
symmetries and algebraic structures associated with them. These have been
previously investigated for $\alpha ,\beta $ on the unit circle. Here we
shall show that these symmetries and algebras can be extended inside the
unit disc along the imaginary axis setting $\alpha =-\beta =ig,\;g\geq 0$.
We shall also relate the new inner product to the representations of these
algebras. We start by reviewing the known symmetries and algebras for the
case $g=1$, i.e. $\alpha ,\beta $ on the unit circle.

It is well-known that for $g=1$ the Hamiltonian (\ref{Hg}) possesses a $%
U_{q}(sl_{2})$-symmetry with $q=i\equiv \sqrt{-1}$ \cite{PS90}. Below we
shall see that for an odd number of sites, $M\in 2\mathbb{N}+1$, this
symmetry can be extended along the imaginary axis for $0\leq g<\sqrt{%
(M+1)/(M-1)}$. We also comment on a deformation of the Temperley-Lieb
algebra \cite{TL71,Jo83} (see \cite{PPM91} for a text book), which is the
commutant of $U_{q}(sl_{2})$. First we recall the basic algebraic
definitions.

\begin{definition}
The q-deformed enveloping algebra $U_{q}(sl_{2})$ is defined in terms of the
Chevalley generators $\{E,F,K^{\pm 1}\}$ and the relations%
\begin{equation}
KK^{-1}=K^{-1}K=1,\qquad KE=q^{2}EK,\qquad KF=q^{-2}FK,\qquad \lbrack E,F]=%
\frac{K-K^{-1}}{q-q^{-1}}\;.  \label{Usl2def}
\end{equation}
\end{definition}

\begin{definition}
The Temperley-Lieb algebra $TL_{M}(q)$ is the associative, unital algebra
generated by $\{e_{i}\}_{i=1}^{M-1}$ subject to the identities%
\begin{equation}
e_{i}^{2}=-(q+q^{-1})e_{i},\qquad e_{i}e_{i\pm 1}e_{i}=e_{i},\qquad
e_{i}e_{j}=e_{j}e_{i},\quad |i-j|>1\;.  \label{TLdef}
\end{equation}
\end{definition}

Setting the deformation parameter to $q=\sqrt{-1}$ we obtain at $g=1$ the
following representations of the two algebras%
\begin{equation}
K^{\pm 1}\mapsto i^{\pm M}\prod_{x=1}^{M}\sigma _{x}^{z},\qquad E\mapsto
\sum_{x}i^{x-1}c_{x}^{\ast },\qquad F\mapsto \sum_{x}i^{x}c_{x}K^{-1}
\label{unitrep}
\end{equation}%
and%
\begin{equation}
e_{x}\mapsto c_{x}c_{x+1}^{\ast }-c_{x}^{\ast
}c_{x+1}+i(n_{x}-n_{x+1}),\qquad x=1,2,...,M-1\;.  \label{unitTLrep}
\end{equation}%
One easily verifies that the above algebraic relation of $U_{q}(sl_{2})$ and 
$TL_{M}(q)$ are satisfied. In addition we have the identities%
\begin{equation}
E^{2}=F^{2}=0\qquad \text{and}\qquad K^{2}=(-1)^{M}\;.  \label{centre}
\end{equation}%
Moreover, we have that the action of the Temperley-Lieb and the quantum
algebra commute (this is a special case of the quantum analogue of
Schur-Weyl duality \cite{Jim86b}),%
\begin{equation}
\lbrack U_{q}(sl_{2}),TL_{M}(q)]=0,
\end{equation}%
whence the Hamiltonian%
\begin{equation}
g=1:\qquad H=H^{\prime }=\sum_{x=1}^{M-1}e_{x}
\end{equation}%
is quantum group invariant. For $g=1$ these relations hold for both, odd and
even numbers of sites. However, note that we have%
\begin{equation}
\lbrack E,F]=\left\{ 
\begin{array}{cc}
0, & M\text{ even} \\ 
i^{M-1}K, & M\text{ odd}%
\end{array}%
\right. \;.
\end{equation}%
As we shall see below the vanishing of the commutator for $M$ even is
closely related to the fact that the discrete wave function of an associated
fermionic zero mode has vanishing norm, 
\begin{equation*}
\sum_{x}\psi _{k=i}(x)^{2}=0\;.
\end{equation*}%
This has profound consequences for the quasi-Hermiticity of the Hamiltonian,
since the quasi-Hermiticity operator $\eta $ ceases to be positive \emph{%
definite}, $\eta >0,$ and instead becomes positive \emph{semi-definite}, $%
\eta \geq 0$. One of the novel results in this article is the observation
that a similar scenario also happens for $M$ odd, albeit at the value $g=%
\sqrt{\frac{M+1}{M-1}}>1$. In order to discuss these issues and to interpret
the new inner product in terms of representation theory it is favourable to
introduce yet another algebra.

\begin{definition}
The universal enveloping algebra $U(\mathfrak{gl}(1|1))$ of the superalgebra 
$\mathfrak{gl}(1|1)$ is the $\mathbb{Z}_{2}$-graded associative algebra
(over $\mathbb{C}$) generated by the odd (fermionic) elements $\{X^{\pm }\}$
and the even (bosonic) elements $\{Y,Z\}$ satisfying the relations%
\begin{equation}
\lbrack Y,X^{\pm }]=\pm X^{\pm },\qquad \lbrack Z,Y]=[Z,X^{\pm }]=0,\qquad
\lbrack X^{+},X^{-}]_{+}=Z\;.  \label{gl11def}
\end{equation}%
There is a natural anti-involution or $\ast $-structure on $U(\mathfrak{gl}%
(1|1))$ by setting%
\begin{equation}
(X^{\pm })^{\ast }=X^{\mp },\qquad Y^{\ast }=Y,\qquad Z^{\ast }=Z\;.
\label{*gl11}
\end{equation}
\end{definition}

The following $U(\mathfrak{gl}(1|1))$-representation at $g=1$ is a special
case of the representation discussed in \cite{HR92} for the deformed case $%
U_{q}(\mathfrak{gl}(1|1))$ on which we will comment later,%
\begin{equation}
X^{+}\mapsto \tsum_{x}i^{x-1}c_{x}^{\ast },\quad X^{-}\mapsto
\tsum_{x}i^{x-1}c_{x},\quad Y\mapsto S^{z},\quad Z\mapsto \left\{ 
\begin{array}{cc}
0, & M\text{ even} \\ 
M~\mathbf{1}, & M\text{ odd}%
\end{array}%
\right. \;.  \label{g1rep}
\end{equation}%
Again, we observe a crucial difference for the two cases $M$ even and odd:
the vanishing of the central element $Z$. We will now discuss for $g\neq 1$
an extension of the above representation. For $M$ even we shall see that (%
\ref{g1rep}) is a singular limit of a representation with $Z\neq 0$.
However, we start our discussion with the case $M$ odd, since it is more
well-behaved and shall then move on to $M$ even pointing out similarities
and differences for $0\leq g\leq 1$.

\subsection{Odd number of sites}

For $M$ odd the quantum group symmetry can be extended. As noted earlier we
have for $\theta =\pi /2$ and $M$ odd the following two Bethe roots $%
z_{1}=\pm \sqrt{-1}$ giving rise to a zero mode of the Hamiltonian (we shall
focus on $z=i$)%
\begin{equation}
\lbrack H,\hat{c}_{z=i}^{\ast }]=[H,\hat{d}_{z=i}]=0,  \label{0mode}
\end{equation}%
where the corresponding wave function is given by%
\begin{equation}
\psi _{z=i}(x)=\frac{\sin \frac{\pi x}{2}-ig\cos \frac{\pi x}{2}}{\sqrt{%
\frac{M+1}{2}-\frac{M-1}{2}g^{2}}}\;.  \label{0wave}
\end{equation}%
This solution prompts the following operator definitions%
\begin{equation}
U=\sum_{x}\sin \frac{\pi x}{2}~c_{x}=\sum_{x\text{ odd}}(-1)^{\frac{x-1}{2}%
}c_{x}  \label{U}
\end{equation}%
and%
\begin{equation}
V=\sum_{x}\cos \frac{\pi x}{2}~c_{x}=\sum_{x\text{ even}}(-1)^{\frac{x}{2}%
}c_{x}  \label{V}
\end{equation}%
such that%
\begin{equation}
\hat{c}_{z=i}^{\ast }=\frac{U^{\ast }-igV^{\ast }}{\sqrt{\frac{M+1}{2}-\frac{%
M-1}{2}g^{2}}}\qquad \text{and}\qquad \hat{d}_{z=i}=\frac{U-igV}{\sqrt{\frac{%
M+1}{2}-\frac{M-1}{2}g^{2}}}\;.
\end{equation}%
The operators $U,V$ satisfy the relations%
\begin{equation}
U^{2}=V^{2}=[U,V]_{+}=[U,V^{\ast }]_{+}=[U^{\ast },V]_{+}=0  \label{UVrel1}
\end{equation}%
and%
\begin{equation}
\lbrack U,U^{\ast }]_{+}=\frac{M+1}{2}~\mathbf{1},\qquad \lbrack V,V^{\ast
}]_{+}=\frac{M-1}{2}~\mathbf{1}\;.  \label{UVrel2}
\end{equation}%
Thus, up to a trivial renormalization, $U\rightarrow U/\sqrt{(M+1)/2}$ and $%
V\rightarrow V/\sqrt{(M-1)/2}$, we can think of $U,V$ as two fermionic
oscillators. Alternatively, we can also interpret them as two
representations of the \emph{non-deformed} superalgebra $\mathfrak{gl}(1|1)$
introduced above by identifying%
\begin{equation}
Y\mapsto S^{z},\qquad Z\mapsto \frac{M+1}{2}~\mathbf{1},\qquad X^{+}\mapsto
U^{\ast },\qquad X^{-}\mapsto U  \label{Urep}
\end{equation}%
and%
\begin{equation}
Y\mapsto S^{z},\qquad Z\mapsto \frac{M-1}{2}~\mathbf{1},\qquad X^{+}\mapsto
V^{\ast },\qquad X^{-}\mapsto V\;.  \label{Vrep}
\end{equation}%
We thus obtain two distinct representations of $\mathfrak{gl}(1|1)$ which in
addition preserve the $\ast $-structure (\ref{*gl11}) with respect to the
original inner product. For $0<g<1$ neither of these two representations by
itself give rise to a symmetry of the Hamiltonian, instead we have to
consider the combined representation%
\begin{equation}
Y\mapsto S^{z},\qquad Z\mapsto \left( \tfrac{M+1}{2}-g^{2}\tfrac{M-1}{2}%
\right) ~\mathbf{1},\qquad X^{+}\mapsto U^{\ast }-igV^{\ast },\qquad
X^{-}\mapsto U-igV\;.  \label{gl11rep}
\end{equation}%
According to (\ref{0mode}) the generators $X^{\pm }$ create respectively
annihilate zero modes of the Hamiltonian and we therefore have an $U(%
\mathfrak{gl}(1|1))$-symmetry of the Hamiltonian (\ref{Hg}),%
\begin{equation}
\lbrack H_{g},U(\mathfrak{gl}(1|1))]=0,\qquad 0\leq g<\sqrt{\frac{M+1}{M-1}}%
\;.
\end{equation}%
Note that in the representation (\ref{gl11rep}) the $\ast $-structure (\ref%
{*gl11}) is not preserved with respect to the original inner product,%
\begin{equation}
X^{\pm }\neq (X^{\mp })^{\ast },
\end{equation}%
but with respect to the new inner product induced by $\eta $,%
\begin{equation}
\eta X^{\pm }=(X^{\mp })^{\ast }\eta \;.
\end{equation}%
This allows one to interpret the new Hilbert space structure induced by $%
\eta $ in a purely representation theoretic setting. It also singles out the 
$U(\mathfrak{gl}(1|1))$-symmetry over the $U_{q}(sl_{2})$-symmetry whose
extension we discuss next.

Namely, for $M\in 2\mathbb{N}+1$, $0\leq g<\sqrt{\frac{M+1}{M-1}},$ and $q=%
\sqrt{-1}$ we now set%
\begin{equation}
K^{\pm 1}\mapsto i^{\pm M}\prod_{x=1}^{M}\sigma _{x}^{z},\qquad E\mapsto 
\frac{U^{\ast }-igV^{\ast }}{\sqrt{\frac{M+1}{2}-\frac{M-1}{2}g^{2}}},\qquad
F\mapsto \frac{-(gV+iU)K}{\sqrt{\frac{M+1}{2}-\frac{M-1}{2}g^{2}}}\ .
\label{QGodd}
\end{equation}%
Employing the anticommutation relations for $U,V$ and their Hermitian
adjoints, one easily verifies that this representation is well-defined and
that the relations (\ref{centre}) continue to hold. Apparently, we recover
the familiar representation (\ref{unitrep}) on the unit circle in the limit $%
g\rightarrow 1$.

The obvious guess for an extension of the Temperley-Lieb algebra for $g\geq
0 $,%
\begin{equation}
e_{x}\mapsto c_{x}c_{x+1}^{\ast }-c_{x}^{\ast
}c_{x+1}+ig(n_{x}-n_{x+1}),\qquad x=1,2,...,M-1\;,  \label{TLg}
\end{equation}%
does yield the modified commutation relations%
\begin{equation}
e_{x}^{2}=(1-g^{2})[(1-n_{x})n_{x+1}+n_{x}(1-n_{x+1})]
\end{equation}%
and%
\begin{equation}
e_{x}e_{x\pm 1}e_{x}=g^{2}e_{x}+ig(1-g^{2})(n_{x}-n_{x+1})[1+(n_{x\pm
1}-n_{x+1\pm 1})(n_{x}-n_{x+1})]\ .
\end{equation}%
This extension of the Temperley-Lieb algebra in terms of \textquotedblleft
local Hamiltonians\textquotedblright , i.e. the nearest neighbour terms,
does not preserve the algebraic structure at $g=1$. Furthermore, Schur-Weyl
duality is broken for $g<1$: the commutation relations between the fermionic
oscillators and and the extended Temperley-Lieb generators (\ref{TLg}) are%
\begin{eqnarray}
\lbrack U,e_{x}] &=&\left\{ 
\begin{array}{cc}
(-)^{\frac{x+1}{2}}(c_{x+1}-igc_{x}), & x\text{ odd} \\ 
(-)^{\frac{x+2}{2}}(c_{x}+igc_{x+1}), & x\text{ even}%
\end{array}%
\right. , \\
\lbrack V,e_{x}] &=&\left\{ 
\begin{array}{cc}
(-)^{\frac{x-1}{2}}(c_{x}+igc_{x+1}), & x\text{ odd} \\ 
(-)^{\frac{x+2}{2}}(c_{x+1}-igc_{x}), & x\text{ even}%
\end{array}%
\right. ,
\end{eqnarray}%
whence we now have for the quantum group generators the identities%
\begin{equation}
0\leq g\leq \sqrt{\frac{M+1}{M-1}}:\;[E,e_{x}+e_{x+1}]=[F,e_{x}+e_{x+1}]=0,%
\qquad x=1,3,5,...,M-2\;.
\end{equation}%
Thus, while for $M\in 2\mathbb{N}+1$ the Hamiltonian remains quantum group
invariant for all values $0\leq g\leq \sqrt{\frac{M+1}{M-1}}$, we now have
to consider pairs of the extended Temperley-Lieb generators.

Note that the representation (\ref{QGodd}) becomes singular at $%
g^{2}=(M+1)/(M-1)$, this is also the value where the fermionic modes $%
X^{+}=U^{\ast }-igV^{\ast }$ and $X^{-}=U-igV$ anticommute$,$%
\begin{equation}
g^{2}=\frac{M+1}{M-1}:\qquad \lbrack X^{+},X^{-}]_{+}=Z=0\;.
\end{equation}%
This is precisely the scenario mentioned above for $M$ even at $g=1$. We see
that the associated wavefunction (\ref{0wave}) becomes singular as its norm
vanishes. Moreover, the Jordan normal form $J$ of the Hamiltonian possesses
now non-trivial $3\times 3$ blocks. For instance, we have for $M=5$ and the
sector $S^{z}=1/2$ the Jordan normal form,%
\begin{equation*}
J=\left( 
\begin{array}{cccccccccc}
0 & 0 & 0 & 0 & 0 & \cdots &  &  & \cdots & 0 \\ 
0 & 0 & 1 & 0 & 0 &  &  &  &  & \vdots \\ 
\vdots & 0 & 0 & 1 & 0 &  &  &  &  &  \\ 
& \vdots & 0 & 0 & 0 & 0 &  &  &  &  \\ 
&  & \vdots & 0 & -\sqrt{\frac{5}{2}} & 1 & 0 &  &  &  \\ 
&  &  & \vdots & 0 & -\sqrt{\frac{5}{2}} & 1 & 0 &  & \vdots \\ 
&  &  &  & \vdots & 0 & -\sqrt{\frac{5}{2}} & 0 & 0 & 0 \\ 
&  &  &  &  & \vdots & 0 & \sqrt{\frac{5}{2}} & 1 & 0 \\ 
\vdots &  &  &  &  &  & 0 & 0 & \sqrt{\frac{5}{2}} & 1 \\ 
0 & \cdots &  &  &  &  & 0 & 0 & 0 & \sqrt{\frac{5}{2}}%
\end{array}%
\right) ~.
\end{equation*}%
Some of the eigenvalues smoothly join up as can be seen in Figure 4.%
\begin{equation*}
\includegraphics[scale=0.9]{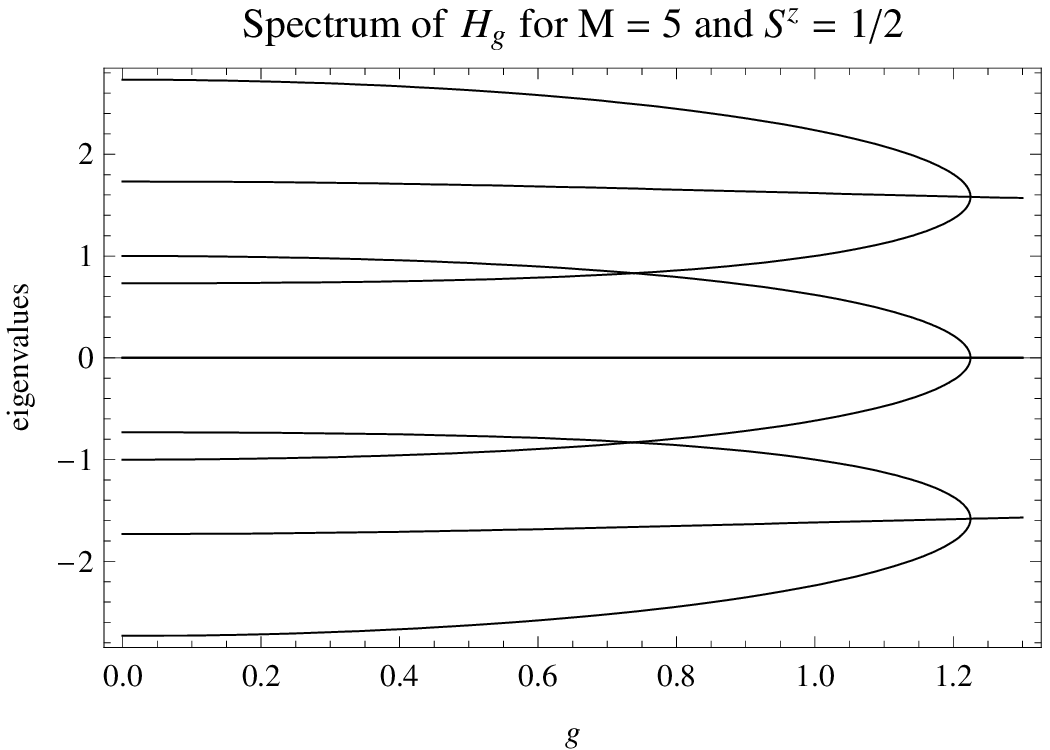}
\end{equation*}%
{\small \medskip }

\begin{center}
{\small Figure 4. Spectrum of the Hamiltonian (\ref{Hg}).\medskip }
\end{center}

\noindent Beyond the threshold value $g^{2}=\frac{M+1}{M-1}$ the Hamiltonian
has complex eigenvalues. Thus, quasi-Hermiticty of the Hamiltonian can only
hold for $g^{2}<\frac{M+1}{M-1}$ and this value approaches the unit circle
as $M\rightarrow \infty $.

\subsection{Even number of sites}

For $M$ even we define the operators $U,V$ analogously to (\ref{U}) and (\ref%
{V}), and obtain again representations of $\mathfrak{gl}(1|1)$. But we now
have%
\begin{equation}
\lbrack U,U^{\ast }]_{+}=[V,V^{\ast }]_{+}=\frac{M}{2}~\mathbf{1}\qquad 
\text{and\qquad }Z\mapsto \frac{M}{2}(1-g^{2})~\mathbf{1}\;.
\end{equation}%
The definition of the other operators in (\ref{gl11rep}) is unchanged. Due
to the absence of a zero mode, we do not obtain as before a $U(\mathfrak{gl}%
(1|1))$-symmetry of the \emph{full} Hamiltonian. Instead the generators $%
X^{\pm }$ only commute with the \emph{truncated} Hamiltonian where the last
Temperley-Lieb generator is omitted,%
\begin{equation}
\lbrack H_{g}^{\text{trunc}},U(\mathfrak{gl}(1|1))]=0
\end{equation}%
with%
\begin{equation}
H_{g}^{\text{trunc}}=H_{g}-e_{M-1}=\sum_{x=1}^{M-2}e_{x}=\frac{1}{2}%
\sum_{x=1}^{M-2}\left[ \sigma _{m}^{x}\sigma _{m+1}^{x}+\sigma
_{m}^{y}\sigma _{m+1}^{y}+ig(\sigma _{m}^{z}-\sigma _{m+1}^{z})\right] \;.
\end{equation}%
This already signals that this representation mimics the one of the chain
with $M-1$ sites. There is more evidence for this interpretation: for an
extension of the $U_{q}(sl_{2})$-representation one now needs to introduce
the operators%
\begin{equation}
M\in 2\mathbb{N}:\quad K^{\pm 1}=i^{\pm M\mp 1}\prod_{x=1}^{M}\sigma
_{x}^{z},\qquad E=\frac{U^{\ast }-igV^{\ast }}{\sqrt{\frac{M}{2}(1-g^{2})}}%
,\qquad F=-\frac{(gV+iU)K}{\sqrt{\frac{M}{2}(1-g^{2})}},
\end{equation}%
which again satisfy the right commutation relations when setting $q=\sqrt{-1}
$. Note, however, that the limit $g\rightarrow 1$ is now ill-defined and
that we have replaced $M\rightarrow M-1$ in the Cartan generator $K$. As for 
$M$ odd we have the relations%
\begin{equation}
0\leq g\leq 1:\quad \lbrack E,e_{x}+e_{x+1}]=[F,e_{x}+e_{x+1}]=0,\qquad
x=1,3,5,...,M-3\;.  \label{modSW}
\end{equation}

An alternative way to restore quantum group or $U(\mathfrak{gl}(1|1))$%
-symmetry is by adding for $M/2$ even the generator%
\begin{equation}
e_{M}=c_{M}c_{1}^{\ast }-c_{M}^{\ast }c_{1}+ig(n_{M}-n_{1})
\end{equation}%
and consider the Hamiltonian with periodic boundary conditions%
\begin{equation}
H_{\text{periodic}}=\sum_{x=1}^{M}e_{x}=\sum_{x=1}^{M}(c_{x}c_{x+1}^{\ast
}-c_{x}^{\ast }c_{x+1}),\qquad M+1\equiv 1\;.
\end{equation}%
The latter apparently does not depend on $g$ and is simply the Hermitian
Hamiltonian which describes free fermions on a lattice. We will not discuss
this Hamiltonian further in the present context.

Finally, for $M$ even there is an alternative representation of the quantum
group symmetry which is obtained from $PT$-reversal. Noting the
transformation identities%
\begin{equation}
PUP=\left\{ 
\begin{array}{cc}
(-)^{\frac{M-1}{2}}U, & M\text{ odd} \\ 
(-)^{\frac{M}{2}}V, & M\text{ even}%
\end{array}%
\right. \qquad \text{and}\qquad PVP=\left\{ 
\begin{array}{cc}
(-)^{\frac{M-1}{2}}V, & M\text{ odd} \\ 
(-)^{\frac{M}{2}}U, & M\text{ even}%
\end{array}%
\right.
\end{equation}%
one easily verifies that%
\begin{equation}
K^{\pm 1}\mapsto i^{\mp M\pm 1}\prod_{x=1}^{M}\sigma _{x}^{z},\qquad
E\mapsto \frac{V^{\ast }+igU^{\ast }}{\sqrt{\frac{M}{2}(1-g^{2})}},\qquad
F\mapsto -\frac{(gU-iV)K}{\sqrt{\frac{M}{2}(1-g^{2})}}
\end{equation}%
yields another representation of $U_{q}(sl_{2})$ with $q=-\sqrt{-1}$. The
commutation relations (\ref{modSW}) are then modified accordingly by
conjugation with the $PT$ operator.

Let us return to the case $g=1$ for an even number of sites. If we only
consider the fermionic oscillator modes $X^{\pm }$ as before for $M$ odd,
then we see that they anticommute at $g=1$, i.e. we have again $Z=0$. We
encounter the same scenario as before for $M$ odd, the corresponding
discrete wave function has zero norm,%
\begin{equation*}
M\in 2\mathbb{N}:\quad \sum_{x=1}^{M}\psi _{k=i}^{2}(x)=0\;,
\end{equation*}%
and the Hamiltonian has non-trivial Jordan blocks, however these are now of
size $2\times 2$. To be concrete we state here the Jordan normal form $J$ of
the Hamiltonian for $M=4$ when restricted to the $S^{z}=0$ subspace,%
\begin{equation*}
J|_{S^{z}=0}=\left( 
\begin{array}{cccccc}
0 & 0 & 0 & 0 & 0 & 0 \\ 
0 & 0 & 0 & 0 & 0 & 0 \\ 
0 & 0 & -\sqrt{2} & 1 & 0 & 0 \\ 
0 & 0 & 0 & -\sqrt{2} & 0 & 0 \\ 
0 & 0 & 0 & 0 & \sqrt{2} & 1 \\ 
0 & 0 & 0 & 0 & 0 & \sqrt{2}%
\end{array}%
\right) \ .
\end{equation*}%
Looking at the spectrum of the Hamiltonian we see once more that some
eigenvalues smoothly join up at $g=1$, compare with Figure 5.%
\begin{equation*}
\includegraphics[scale=0.9]{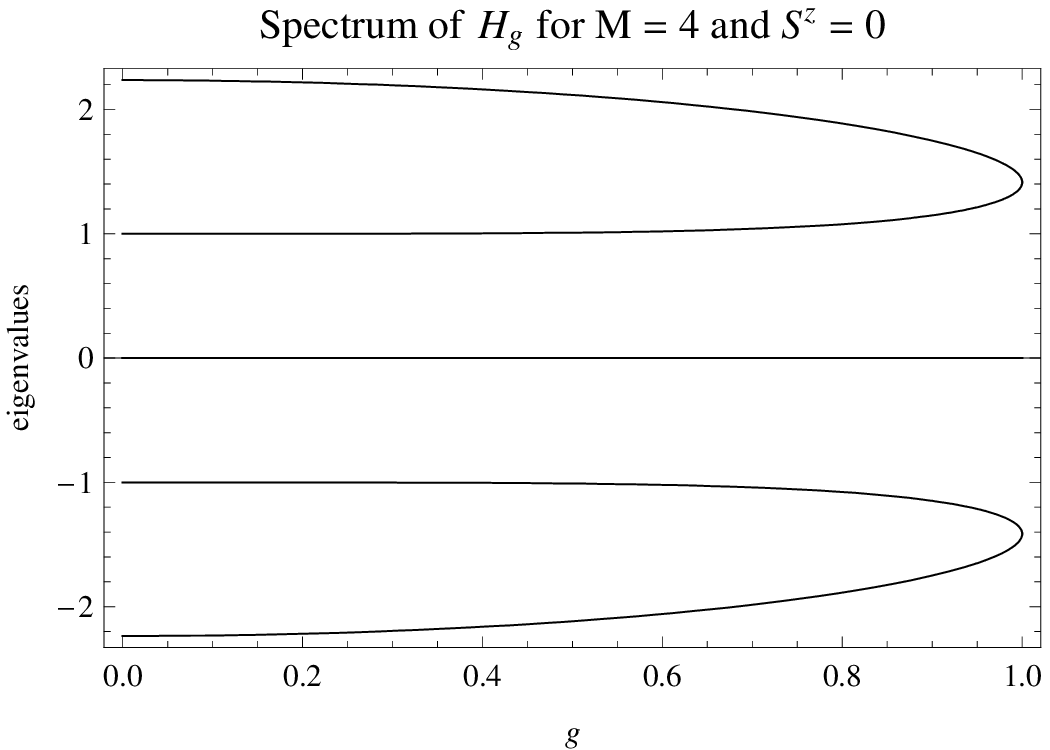}
\end{equation*}%
{\small \medskip }

\begin{center}
{\small Figure 5. Spectrum of the Hamiltonian (\ref{Hg}).\medskip }
\end{center}

\noindent This signals the appearance of complex eigenvalues for $|g|>1$.
Thus, quasi-Hermiticity only holds true inside the unit disc for $M$ even.

\section{Perturbation theory on the imaginary axis}

While we have shown the existence of the map $\eta $ for boundary fields $%
\alpha =\bar{\beta}$ inside the unit disc, its expression in terms of
projectors on quasi-fermion states does not allow to easily compute the
corresponding Hermitian Hamiltonian%
\begin{equation*}
h=\eta ^{1/2}H\eta ^{-1/2}\;.
\end{equation*}%
We therefore specialize now to the case $0\leq g<1$ and $\theta =\pi /2$ in (%
\ref{boundpar}) and compute it perturbatively. We will arrive at the result
already presented in the introduction: the presence of non-Hermitian
boundary terms leads to a long range interaction in the bulk.

Since $\eta >0$ we can write it in exponential form $\eta =\exp A$. The
quasi-Hermiticity relation then translates into%
\begin{equation}
H^{\ast }=e^{A}He^{-A}\;.
\end{equation}%
Employing the Baker-Campbell-Hausdorff formula the above identity can be
expanded into multiple commutators%
\begin{equation}
H+\sum_{k>0}\frac{1}{k!}\limfunc{ad}\nolimits_{A}^{k}H=H^{\ast },\qquad 
\limfunc{ad}\nolimits_{X}Y=[X,Y],\quad \eta =e^{A}\;.
\end{equation}%
This allows to determine the operator $A$ perturbatively by separating the
non-Hermitian Hamiltonian $H$ into an Hermitian and non-Hermitian part of
the following form,%
\begin{equation}
H=H_{0}+igH_{1},\quad \quad H_{0}=H_{0}^{\ast },\;\quad H_{1}=H_{1}^{\ast
},\quad \;0\leq g\ll 1\;.
\end{equation}%
The Hamiltonian (\ref{Hg}) has precisely this structure. Setting the
boundary parameters in (\ref{H}) to purely imaginary values, $\alpha =-\beta
=ig,\ g\in \mathbb{R}_{+}$, we define\ 
\begin{equation}
H_{0}=-\sum_{x=1}^{M-1}\left[ c_{x}^{\ast }c_{x+1}-c_{x}c_{x+1}^{\ast }%
\right] ,\qquad H_{1}=\frac{\sigma _{1}^{z}-\sigma _{M}^{z}}{2}=n_{1}-n_{M}
\end{equation}%
Since we have $H_{g}^{\ast }=H_{-g}$ we demand $\eta _{-g}=\eta _{g}^{-1}$
leading to the following ansatz for a power series expansion of the metric
operator in the coupling constant $g$,%
\begin{equation}
A=\sum_{n\geq 0}g^{2n+1}A_{2n+1}\ .  \label{Aexpansion}
\end{equation}%
From our previous considerations related to the appearance of Jordan blocks
when the central element $Z\in \mathfrak{gl}(1|1)$ vanishes, we infer that
the region of convergence for the above series is $0\leq g<1$ for $M$ even
and $0\leq g<\sqrt{\left( M+1\right) /\left( M-1\right) }$ for $M$ odd.

\begin{theorem}
Collecting terms order by order in $g$ we obtain the following equations for
the coefficients in the expansion of the operator $A$,%
\begin{equation}
\lbrack H_{0},A_{1}]=2iH_{1}  \label{A1}
\end{equation}%
and for $n\geq 1$%
\begin{equation}
\lbrack H_{0},A_{2n+1}]=i\sum_{k=1}^{n}\lambda
_{k}\sum_{p_{1}+...+p_{2k}=2n}[A_{p_{1}},...,[A_{p_{2k}},H_{1}]...]\;.
\label{Aeqns}
\end{equation}%
Here the sum runs over all \emph{compositions} $p=(p_{1},...,p_{2k})$ of $2n$
with $A_{p_{i}}=0$ if $p_{i}$ is even. The coefficients $\lambda _{k}$ are
determined recursively via%
\begin{equation}
\lambda _{1}=1/6\quad \text{and}\quad \quad \lambda _{k}=\frac{2k-1}{(2k+1)!}%
-\sum_{j=1}^{k-1}\frac{\lambda _{k-j}}{(2j+1)!}~,\;k>1\ .
\label{coefficients}
\end{equation}%
The first elements in this sequence are%
\begin{equation*}
\lambda =(\frac{1}{6},-\frac{1}{360},\frac{1}{15120},-\frac{1}{604800},\frac{%
1}{23950080},-\frac{691}{653837184000},\frac{1}{37362124800},...)
\end{equation*}
\end{theorem}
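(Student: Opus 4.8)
The plan is to recast the quasi-Hermiticity relation as a single super-operator identity and then read off the coefficients by a parity argument in $g$. Writing $\eta=e^{A}$ and using Baker--Campbell--Hausdorff, the relation $\eta H=H^{\ast}\eta$ is equivalent to $e^{\mathcal{A}}H=H^{\ast}$ with $\mathcal{A}:=\mathrm{ad}_{A}$. I would substitute the split $H=H_{0}+igH_{1}$ together with $H^{\ast}=H_{0}-igH_{1}$ to obtain $e^{\mathcal{A}}H_{0}+ig\,e^{\mathcal{A}}H_{1}=H_{0}-igH_{1}$. By the ansatz (\ref{Aexpansion}), which is itself forced by the symmetry $\eta_{-g}=\eta_{g}^{-1}$, the operator $A$ and hence the super-operator $\mathcal{A}$ carry only odd powers of $g$; consequently $\cosh\mathcal{A}$ is even and $\sinh\mathcal{A}$ is odd as a formal power series in $g$.

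The key step is to separate this identity according to its parity in $g$. Collecting the odd-in-$g$ contributions (note that $H_{0}$ is even, $igH_{1}$ is $i$ times an odd quantity, and $\mathcal{A}$ is odd) gives
\begin{equation*}
\sinh(\mathcal{A})\,H_{0}=-ig\,(1+\cosh\mathcal{A})\,H_{1}\;,
\end{equation*}
while the even part yields a consistency relation that is not needed. The point of this split is that $H_{0}$ now enters only through $\mathcal{A}H_{0}=[A,H_{0}]$, and the right-hand side involves $H_{1}$ alone, which is exactly the form required by (\ref{Aeqns}).

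Next I would solve for $[H_{0},A]$. Factoring $\sinh\mathcal{A}=\tfrac{\sinh\mathcal{A}}{\mathcal{A}}\,\mathcal{A}$ and inverting the even series $\tfrac{\sinh\mathcal{A}}{\mathcal{A}}$ (which has constant term $1$ and so is invertible over formal power series in $g$) produces
\begin{equation*}
[H_{0},A]=-\mathcal{A}H_{0}=ig\bigl(2+\Lambda(\mathcal{A})\bigr)H_{1},\qquad
\Lambda(\mathcal{A}):=\frac{\mathcal{A}(1+\cosh\mathcal{A})}{\sinh\mathcal{A}}-2 .
\end{equation*}
Because $\Lambda$ is an even power series $\Lambda(\mathcal{A})=\sum_{k\geq1}\lambda_{k}\mathcal{A}^{2k}$, comparing the coefficient of $g^{2n+1}$ on both sides yields (\ref{A1}) at order $g$ and (\ref{Aeqns}) for $n\geq1$: the coefficient of $g^{2n}$ in $\mathcal{A}^{2k}H_{1}$ is precisely the composition sum $\sum_{p_{1}+\cdots+p_{2k}=2n}[A_{p_{1}},\ldots,[A_{p_{2k}},H_{1}]\ldots]$, with only all-odd compositions surviving since $A_{p}=0$ for $p$ even.

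It then remains to match the $\lambda_{k}$ with (\ref{coefficients}). Clearing the denominator in the definition of $\Lambda$ gives $\Lambda(x)\tfrac{\sinh x}{x}=1+\cosh x-\tfrac{2\sinh x}{x}$; reading off the coefficient of $x^{2k}$ gives $\tfrac{1}{(2k)!}-\tfrac{2}{(2k+1)!}=\tfrac{2k-1}{(2k+1)!}$ on the right and the convolution $\sum_{j=0}^{k-1}\lambda_{k-j}/(2j+1)!$ on the left, which is exactly the recursion (\ref{coefficients}) with $\lambda_{1}=1/6$. I expect the main obstacle to be the third step: recognizing that the entire BCH series collapses, after the parity split and the inversion of the $\sinh$ super-operator, into the single closed form $\mathcal{A}(1+\cosh\mathcal{A})/\sinh\mathcal{A}$, and then carefully matching the $g$-grading of the resulting operator identity to the nested-commutator sums in (\ref{Aeqns}); once this is in place, the parity bookkeeping and the generating-function identity for the $\lambda_{k}$ are routine.
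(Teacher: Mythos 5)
Your argument is correct, but it takes a genuinely different route from the paper's. The paper proves (\ref{Aeqns}) and (\ref{coefficients}) by induction on the order: assuming the identities up to order $2n+1$, it expands the Baker--Campbell--Hausdorff relation to order $g^{2n+3}$, substitutes the induction hypothesis into every inner factor $[A_{p_{2k+1}},H_{0}]$ (using $[A_{1},H_{0}]=-2iH_{1}$ when $p_{2k+1}=1$), and collects terms to arrive at $\lambda _{k}=\tfrac{1}{(2k)!}-\tfrac{2}{(2k+1)!}-\sum_{j=1}^{k-1}\tfrac{\lambda _{k-j}}{(2j+1)!}$, which is (\ref{coefficients}). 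You instead resum the whole series into the closed super-operator identity $[H_{0},A]=ig\,\mathcal{A}\coth (\mathcal{A}/2)H_{1}$ by splitting $e^{\mathcal{A}}H=H^{\ast }$ into its even and odd parts in $g$ --- legitimate, since the relation is an identity of power series in $g$ and the ansatz (\ref{Aexpansion}) makes $\mathcal{A}$ odd --- and then inverting the unit formal power series $\sinh \mathcal{A}/\mathcal{A}$; the recursion drops out of the coefficient identity $\Lambda (x)\sinh (x)/x=1+\cosh x-2\sinh (x)/x$, whose right-hand side reproduces exactly the combination $\tfrac{1}{(2k)!}-\tfrac{2}{(2k+1)!}=\tfrac{2k-1}{(2k+1)!}$ appearing in the paper's induction, and your identification of the coefficient of $g^{2n}$ in $\mathcal{A}^{2k}H_{1}$ with the sum over compositions of $2n$ into $2k$ necessarily odd parts is the same bookkeeping the paper uses. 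What your route buys is a closed form: $2+\Lambda (x)=x\coth (x/2)=2\sum_{k\geq 0}B_{2k}x^{2k}/(2k)!$, so $\lambda _{k}=2B_{2k}/(2k)!$ in terms of Bernoulli numbers, which indeed matches all the listed values; the paper's induction is more elementary but yields only the recursion. Your dismissal of the even-in-$g$ part as a consistency condition is also justified, since $(\cosh \mathcal{A}-1)H_{0}=\tfrac{\sinh \mathcal{A}}{1+\cosh \mathcal{A}}\sinh (\mathcal{A})H_{0}=-ig\sinh (\mathcal{A})H_{1}$ follows from the odd part.
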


\begin{proof}
The recursion formula for the coefficients $\lambda _{k}$ is proved via
induction as follows. The case $n=1$ is easily verified. Now assume that (%
\ref{Aeqns}) holds true for all orders $k=1,2,...,n$. From the
Baker-Campbell-Hausdorff formula%
\begin{equation*}
-2iH_{1}=\sum_{k>0}\frac{1}{k!}\limfunc{ad}\nolimits_{A}^{k}H
\end{equation*}%
we find that%
\begin{eqnarray*}
\lbrack H_{0},A_{2n+3}] &=&i\sum_{k=1}^{n+1}\frac{1}{(2k)!}%
\sum_{p_{1}+...+p_{2k}=2n+2}[A_{p_{1}},...,[A_{p_{2k}},H_{1}]...] \\
&&+\sum_{k=1}^{n+1}\frac{1}{(2k+1)!}%
\sum_{p_{1}+...+p_{2k+1}=2n+3}[A_{p_{1}},...,[A_{p_{2k+1}},H_{0}]...]\ .
\end{eqnarray*}%
Employing our assumption we can rewrite the second term on the right hand
side in terms of $H_{1}$. For $p_{2k+1}=1$ we find
\begin{equation*}
[A_{p_{2k+1}},H_{0}]=-2iH_{1}
\end{equation*}
and for $1<p_{2k+1}\leq 2(n+1-k)+1$ we have
\begin{equation*}
[A_{p_{2k+1}},H_{0}]=-i\sum_{j=1}^{\frac{%
p_{2k+1}-1}{2}}\lambda
_{j}\sum_{q_{1}+...+q_{2j}=p_{2k+1}-1}[A_{q_{1}},...,[A_{q_{2j}},H_{1}]...]\
.
\end{equation*}%
Inserting these expressions, rearranging sums and collecting terms we find%
\begin{equation*}
\lbrack H_{0},A_{2n+3}]=i\sum_{k=1}^{n+1}\left\{ \frac{1}{(2k)!}-\frac{2}{%
(2k+1)!}-\tsum_{j=1}^{k-1}\frac{\lambda _{k-j}}{(2j+1)!}\right\}
\sum_{p_{1}+...+p_{2k}=2n+2}[A_{p_{1}},...,[A_{p_{2k}},H_{1}]...]
\end{equation*}%
which yields the desired recursion formula for the coefficients.
\end{proof}\medskip

In order to facilitate the comparison with results in the literature, we
explicitly state the identities up to order seven:%
\begin{eqnarray}
\lbrack H_{0},A_{1}] &=&2iH_{1}  \notag \\
\lbrack H_{0},A_{3}] &=&\frac{i}{6}[A_{1},[A_{1},H_{1}]]  \notag \\
\lbrack H_{0},A_{5}] &=&\frac{i}{6}[A_{1},[A_{3},H_{1}]]+\frac{i}{6}%
[A_{3},[A_{1},H_{1}]]-\frac{i}{360}[A_{1},[A_{1},[A_{1},[A_{1},H_{1}]]]]
\end{eqnarray}%
and%
\begin{eqnarray*}
\lbrack H_{0},A_{7}] &=&\frac{i}{6}[A_{3},[A_{3},H_{1}]]+\frac{i}{6}%
[A_{1},[A_{5},H_{1}]]+\frac{i}{6}[A_{5},[A_{1},H_{1}]] \\
&&-\frac{i}{360}%
\sum_{p_{1}+p_{2}+p_{3}+p_{4}=6}[A_{p_{1}},[A_{p_{2}},[A_{p_{3}},[A_{p_{4}},H_{1}]]]]
\\
&&+\frac{i}{15120}[A_{1},[A_{1},[A_{1},[A_{1},[A_{1},[A_{1},H_{1}]]]]]]
\end{eqnarray*}%
Note that our general formula differs from equation (3.20) in \cite{CFMFAF06}
but reproduces the identities in equation (34) of \cite{BBJ04}.

One can now use the identities (\ref{Aeqns}) to compute the metric operator $%
A$ order by order. In addition to the latter equations one also easily
deduces from the properties $\eta $ ought to obey that we must have%
\begin{equation}
\lbrack A,S^{z}]=0,\text{\qquad }PA=-AP\text{\qquad and\qquad }A^{\ast }=A\ .
\end{equation}%
Define 
\begin{equation}
a_{x,y}^{\pm }=c_{x}^{\ast }c_{y}\pm c_{x}c_{y}^{\ast },\qquad x<y
\label{axy}
\end{equation}%
then we find that up to order 11 the terms in the expansion of the matrix $A$
are of the general form%
\begin{equation}
A_{2n+1}=\frac{(-1)^{n}i}{(2n+1)}\sum_{x=1}^{M-2n-1}a_{x,x+2n+1}^{+}+%
\sum_{p=0}^{n-1}\sum_{x=1}^{n-p}i\kappa
_{x}^{(n,p)}(a_{x,x+2p+1}^{+}+a_{M-x-2p,M+1-x}^{+})\ .  \label{Aform}
\end{equation}%
Explicitly they read,%
\begin{eqnarray*}
A_{1} &=&i\sum_{x=1}^{M-1}a_{x,x+1}^{+} \\
A_{3} &=&\frac{1}{3i}\sum_{x=1}^{M-3}a_{x,x+3}^{+}-\frac{1}{6i}\left(
a_{1,2}^{+}+a_{M-1,M}^{+}\right)
\end{eqnarray*}%
\begin{equation*}
A_{5}=\frac{i}{5}\sum_{x=1}^{M-5}a_{x,x+5}^{+}+\frac{i}{24}\left(
a_{1,2}^{+}+a_{M-1,M}^{+}\right) +\frac{i}{120}\left(
a_{2,3}^{+}+a_{M-2,M-1}^{+}\right) -\frac{11i}{120}\left(
a_{1,4}^{+}+a_{M-3,M}^{+}\right) \ 
\end{equation*}%
\begin{eqnarray*}
A_{7} &=&\frac{1}{7i}\sum_{x=1}^{M-7}a_{x,x+7}^{+} \\
&&-\frac{7}{240i}\left( a_{1,2}^{+}+a_{M-1,M}^{+}\right) -\frac{1}{48i}%
\left( a_{2,3}^{+}+a_{M-2,M-1}^{+}\right) -\frac{13}{840i}\left(
a_{3,4}^{+}+a_{M-3,M-2}^{+}\right) \\
&&+\frac{1}{60i}\left( a_{1,4}^{+}+a_{M-3,M}^{+}\right) -\frac{3}{560i}%
\left( a_{2,5}^{+}+a_{M-4,M-1}^{+}\right) -\frac{103}{1680i}\left(
a_{1,6}^{+}+a_{M-5,M}^{+}\right) \ .
\end{eqnarray*}%
and%
\begin{multline*}
A_{9}=\frac{i}{9}\sum_{x=1}^{M-9}a_{x,x+9}^{-}+\tfrac{i}{64}\left(
a_{1,2}^{-}+a_{M-1,M}^{-}\right) +\tfrac{23i}{2240}\left(
a_{2,3}^{-}+a_{M-2,M-1}^{-}\right) \\
+\tfrac{17i}{1920}\left( a_{3,4}^{-}+a_{M-3,M-2}^{-}\right) +\tfrac{25i}{8064%
}\left( a_{4,5}^{-}+a_{M-4,M-3}^{-}\right) +\tfrac{11}{560i}\left(
a_{1,4}^{-}+a_{M-3,M}^{-}\right) \\
+\tfrac{29}{1920i}\left( a_{2,5}^{-}+a_{M-4,M-1}^{-}\right) +\tfrac{587}{%
40320i}\left( a_{3,6}^{-}+a_{M-5,M-2}^{-}\right) + \\
+\tfrac{113i}{13440}\left( a_{1,6}^{-}+a_{M-5,M}^{-}\right) +\tfrac{59}{8064i%
}\left( a_{2,7}^{-}+a_{M-6,M-1}^{-}\right) +\tfrac{1823}{40320i}\left(
a_{1,8}^{-}+a_{M-7,M}^{-}\right)
\end{multline*}%
The stated solutions can be checked by employing the relations%
\begin{eqnarray*}
\lbrack H_{0},a_{x,y}^{\pm }] &=&a_{x,y+1}^{\mp }-a_{x-1,y}^{\mp
}+a_{x,y-1}^{\mp }-a_{x+1,y}^{\mp } \\
\lbrack H_{1},a_{x,y}^{\pm }] &=&(\delta _{x,1}+\delta _{y,M})a_{x,y}^{\mp
},\qquad x<y
\end{eqnarray*}%
and%
\begin{equation*}
\lbrack a_{x,y}^{+},a_{r,s}^{-}]=\delta _{y,r}a_{x,s}^{-}+\delta
_{y,s}a_{x,r}^{-}-\delta _{x,s}a_{r,y}^{-}-\delta _{x,r}a_{s,y}^{-}\ .
\end{equation*}%
Note that some of the terms in the general formula (\ref{Aform}) vanish if $%
2n+1>M$. The general solution still needs to be found. Once the metric
operator $A$ is obtained, the corresponding Hermitian Hamiltonian $h$ can be
computed order by order. Define the Hermitian Hamiltonian $h$ according to%
\begin{equation}
h=e^{\frac{A}{2}}He^{-\frac{A}{2}}=H_{0}+\sum_{n=1}^{\infty }g^{2n}h_{2n}\;,
\end{equation}%
where the same region of convergence is implied as in the case of $A$.

\begin{theorem}
The terms in the series expansion of $h$ are given by%
\begin{equation}
h_{2n}=i\sum_{k=1}^{n}\lambda _{k}^{\prime
}\sum_{p_{1}+...+p_{2k-1}=2n-1}[A_{p_{1}},...,[A_{p_{k}},H_{1}]...]\;,
\label{hcoeff}
\end{equation}%
where the sum runs over all \emph{compositions} $p=(p_{1},...,p_{2k-1})$ of $%
2n-1$ and the coefficients $\lambda _{k}^{\prime }$ are computed from the
coefficients $\lambda _{k}$ in (\ref{Aeqns}) via the formula%
\begin{equation}
\lambda _{k}^{\prime }=\frac{2k-1}{2^{2k-1}(2k)!}-\sum_{j=1}^{k-1}\frac{%
\lambda _{k-j}}{2^{2j}(2j)!}\ .
\end{equation}%
The first terms in the above sequence are%
\begin{equation*}
\lambda ^{\prime }=(\frac{1}{4},-\frac{1}{192},\frac{1}{7680},-\frac{17}{%
5160960},\frac{31}{371589120},\frac{691}{326998425600},...)
\end{equation*}
\end{theorem}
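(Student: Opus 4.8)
The plan is to mirror the proof of the preceding theorem: expand $h=\eta^{1/2}H\eta^{-1/2}$ by the Baker--Campbell--Hausdorff formula and then use the already-established equations (\ref{A1})--(\ref{Aeqns}) to trade every occurrence of $H_{0}$ for nested commutators terminating in $H_{1}$. Since $A^{\ast}=A$ and $\eta=e^{A}>0$, the positive square root is $\eta^{1/2}=e^{A/2}$, so that
\begin{equation*}
h=e^{A/2}He^{-A/2}=\sum_{k\geq 0}\frac{1}{2^{k}k!}\limfunc{ad}\nolimits_{A}^{k}H\;.
\end{equation*}
Writing $H=H_{0}+igH_{1}$ and $A=\sum_{m\geq 0}g^{2m+1}A_{2m+1}$ as in (\ref{Aexpansion}), the contributions $\limfunc{ad}\nolimits_{A}^{k}(igH_{1})$ already have the target shape. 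The remaining pieces $\limfunc{ad}\nolimits_{A}^{k}H_{0}$ I would process by peeling off the innermost bracket, $\limfunc{ad}\nolimits_{A}^{k}H_{0}=\limfunc{ad}\nolimits_{A}^{k-1}[A,H_{0}]$, and substituting $[A_{1},H_{0}]=-2iH_{1}$ together with the reduction of $[A_{2m+1},H_{0}]=-[H_{0},A_{2m+1}]$ furnished by (\ref{Aeqns}), which inserts an even number $2j$ of further $A$-factors in front of an $H_{1}$.

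Next I would collect, at fixed order $g^{2n}$, every contribution producing a nested commutator $[A_{p_{1}},\ldots,[A_{p_{2k-1}},H_{1}]\ldots]$ with exactly $2k-1$ factors of $A$; the count is forced to be odd because each index $p_{i}$ is odd and the degrees must satisfy $p_{1}+\cdots+p_{2k-1}=2n-1$ (the missing power of $g$ riding with the coupling on $H_{1}$). Exactly three families feed this coefficient: the explicit term $\limfunc{ad}\nolimits_{A}^{2k-1}(igH_{1})$, of weight $1/(2^{2k-1}(2k-1)!)$; the part of $\limfunc{ad}\nolimits_{A}^{2k}H_{0}$ whose innermost bracket collapses through $[A_{1},H_{0}]=-2iH_{1}$, of weight $-1/(2^{2k-1}(2k)!)$; and, for each $j=1,\ldots,k-1$, the $\lambda_{j}$-piece of $\limfunc{ad}\nolimits_{A}^{2(k-j)}H_{0}$, whose reduction inserts $2j$ extra $A$'s and contributes weight $-\lambda_{j}/(2^{2(k-j)}(2(k-j))!)$.

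Adding these weights and extracting the common prefactor $i$ gives
\begin{equation*}
\lambda_{k}^{\prime}=\frac{1}{2^{2k-1}(2k-1)!}-\frac{1}{2^{2k-1}(2k)!}-\sum_{j=1}^{k-1}\frac{\lambda_{j}}{2^{2(k-j)}(2(k-j))!}\;.
\end{equation*}
The first two terms combine through $\tfrac{1}{(2k-1)!}-\tfrac{1}{(2k)!}=\tfrac{2k-1}{(2k)!}$ into $\tfrac{2k-1}{2^{2k-1}(2k)!}$, and re-indexing the last sum by $j\mapsto k-j$ turns it into $\sum_{j=1}^{k-1}\lambda_{k-j}/(2^{2j}(2j)!)$, which is precisely the asserted recursion for $\lambda_{k}^{\prime}$. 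That only even powers of $g$ survive I would read off separately: $\eta H=H^{\ast}\eta$ forces $h^{\ast}=h$, while $A(-g)=-A(g)$ and $H(-g)=H^{\ast}$ give $h(-g)=h(g)^{\ast}=h(g)$, so $h=H_{0}+\sum_{n\geq 1}g^{2n}h_{2n}$ with $h_{2n}$ of the claimed form.

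The hard part will be the combinatorial rearrangement behind the third family: I must verify that wrapping $\limfunc{ad}\nolimits_{A}^{2(k-j)-1}$ around the $2j$-fold $H_{1}$-commutator delivered by (\ref{Aeqns}) reproduces every composition $p_{1}+\cdots+p_{2k-1}=2n-1$ exactly once, with the outer $\limfunc{ad}$-indices and the inner $q$-indices merging into a single ordered nested bracket. This is the analogue of the ``rearranging sums and collecting terms'' step in the previous proof, and it is where a slip in the degree bookkeeping --- in particular the convention that the coupling $g$ is carried by $H_{1}$, so that each $H_{1}$ counts one unit of $g$-degree --- would most easily corrupt the cancellation $\tfrac{1}{(2k-1)!}-\tfrac{1}{(2k)!}$ that produces the factor $2k-1$.
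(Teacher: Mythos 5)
Your proposal is correct and follows essentially the same route as the paper, whose proof of this theorem is simply ``an induction proof similar to the previous one'': you expand $h=e^{A/2}He^{-A/2}$ by Baker--Campbell--Hausdorff, reduce every innermost $[A_{p},H_{0}]$ via the already-established equations for the $A_{2n+1}$, and collect the three families of contributions, which reproduces the stated recursion for $\lambda_{k}^{\prime}$ (and your numerical checks against $\tfrac{1}{4},-\tfrac{1}{192},\tfrac{1}{7680}$ confirm the bookkeeping). Your write-up is in fact more explicit than the paper's one-sentence proof, and the composition-matching step you flag as the ``hard part'' does go through, since the $l=j$ term of the reduction of $[A_{2j''+1},H_{0}]$ is always available because $j''\geq j$ for any composition into $2j$ positive odd parts.
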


\begin{proof}
An induction proof similar to the previous one for the series expansion of
the operator $A$.
\end{proof}\medskip

Again, we state the first terms explicitly in order to allow for comparison
with the literature (see equation (3.24) in \cite{CFMFAF06}),%
\begin{eqnarray*}
h_{2} &=&\frac{i}{4}~[A_{1},H_{1}] \\
h_{4} &=&\frac{i}{4}~[A_{3},H_{1}]-\frac{i}{192}%
~[A_{1},[A_{1},[A_{1},H_{1}]]] \\
h_{6} &=&\frac{i}{4}~[A_{5},H_{1}]+\frac{i}{192}%
\sum_{p_{1}+p_{2}+p_{3}=5}[A_{p_{1}},[A_{p_{2}},[A_{p_{3}},H_{1}]]] \\
&&+\frac{i}{7680}[A_{1},[A_{1},[A_{1},[A_{1},[A_{1},H_{1}]]]]]\;.
\end{eqnarray*}%
Note that $h$ only depends on $g^{2}$ as it must due to $H_{g}^{\ast
}=H_{-g} $. Inserting the explicit expressions for the series expansion of $%
A $ up to order $n=5$ we find the Hamiltonian stated in the introduction.

\section{Exact results for short spin-chains}

In this section we present exact expressions for the quasi-Hermiticity
operator $\eta $ and the operator $A$ employing the fact that for small $M$
there are only a few terms appearing in the expressions for the operator $A$%
, compare with (\ref{Aform}). That is, motivated by our perturbation theory
results we make the ansatz%
\begin{equation}
A=\sum_{p=0}^{\left\lfloor M/2-1\right\rfloor }\sum_{x=1}^{M-2p-1}\varkappa
_{x}^{(p)}(a_{x,x+2p+1}^{+}+a_{M-x-2p,M+1-x}^{+})\ .  \label{Aansatz}
\end{equation}%
For $M$ small enough one can exponentiate the resulting matrix and solve the
intertwining relation in order to determine the coefficients $\varkappa
_{x}^{(p)}$. From our perturbative computation we can also make an educated
guess about the form of the Hermitian Hamiltonian $h$ in order to facilitate
the computation of the similarity transformation. Namely, we make the ansatz%
\begin{equation}
h=\sum_{p=0}^{\left\lfloor M/2-1\right\rfloor }\sum_{x=1}^{M-2p-1}\xi
_{x}^{(p)}(a_{x,x+2p+1}^{-}+a_{M-x-2p,M+1-x}^{-})\;.  \label{hansatz}
\end{equation}%
Here we have used that $h$ is parity invariant. This follows from the
identities%
\begin{equation*}
P\eta ^{\frac{1}{2}}=Pe^{A/2}=e^{-A/2}P=\eta ^{-\frac{1}{2}}P
\end{equation*}%
and%
\begin{equation}
Ph=P\eta ^{\frac{1}{2}}H\eta ^{-\frac{1}{2}}=\eta ^{-\frac{1}{2}}H^{\ast
}\eta ^{\frac{1}{2}}P=hP\;.
\end{equation}

In the last part of this section we present another exact approach which is
valid only at $g=1$ for $M$ odd but it exploits the graphical calculus
attached to the Temperley-Lieb algebra and is interesting in its own right.

\subsection{Example $M=3$}

It might be instructive to start out with the easiest, albeit not the most
interesting, case $M=3$. The non-Hermitian Hamiltonian reads%
\begin{equation}
H=-a_{1,2}^{-}-a_{2,3}^{-}+ig(n_{1}-n_{3})
\end{equation}%
where $a_{x,y}^{-}$ has been defined in (\ref{axy}). The operator $A=\ln
\eta $ is then computed to be%
\begin{equation}
A=-\frac{\arccos \frac{2+g^{2}}{2-g^{2}}}{\sqrt{2}}%
~(a_{1,2}^{+}+a_{2,3}^{+})\;.
\end{equation}%
After exponentiation we obtain the quasi-Hermiticity operator and from its
explicit form,%
\begin{equation}
\eta =\left( 
\begin{array}{llllllll}
1 & 0 & 0 & 0 & 0 & 0 & 0 & 0 \\ 
0 & -\frac{2}{g^{2}-2} & -\frac{2ig}{g^{2}-2} & \frac{g^{2}}{g^{2}-2} & 0 & 0
& 0 & 0 \\ 
0 & \frac{2ig}{g^{2}-2} & -1-\frac{4}{g^{2}-2} & -\frac{2ig}{g^{2}-2} & 0 & 0
& 0 & 0 \\ 
0 & \frac{g^{2}}{g^{2}-2} & \frac{2ig}{g^{2}-2} & -\frac{2}{g^{2}-2} & 0 & 0
& 0 & 0 \\ 
0 & 0 & 0 & 0 & -\frac{2}{g^{2}-2} & -\frac{2ig}{g^{2}-2} & \frac{g^{2}}{%
g^{2}-2} & 0 \\ 
0 & 0 & 0 & 0 & \frac{2ig}{g^{2}-2} & -1-\frac{4}{g^{2}-2} & -\frac{2ig}{%
g^{2}-2} & 0 \\ 
0 & 0 & 0 & 0 & \frac{g^{2}}{g^{2}-2} & \frac{2ig}{g^{2}-2} & -\frac{2}{%
g^{2}-2} & 0 \\ 
0 & 0 & 0 & 0 & 0 & 0 & 0 & 1%
\end{array}%
\right) ,
\end{equation}%
one can explicitly verify that $\eta $ satisfies all the necessary
conditions. Moreover, as predicted from our algebraic analysis related to
the $U(\mathfrak{gl}(1|1))$-symmetry of the Hamiltonian, we see that $\eta $
has a pole precisely at the value $g^{2}=(M+1)/(M-1)=2$ where the central
element $Z\in U(\mathfrak{gl}(1|1))$ vanishes. The renormalized $\eta $,%
\begin{equation*}
\eta \rightarrow (g^{2}-2)\eta ^{\prime },
\end{equation*}%
still intertwines the Hamiltonian, $\eta ^{\prime }H=H^{\ast }\eta ^{\prime
} $ at $g^{2}=2$, but is not any longer positive definite but semi-definite.
In principle one can also compute the square root for $\eta ^{1/2}$ for
general values of $g$ but as the expressions become rather unwieldy we
restrict ourselves to stating the result for $g=1,$%
\begin{equation}
\lim_{g\rightarrow 1}\eta ^{\frac{1}{2}}=\left( 
\begin{array}{llllllll}
1 & 0 & 0 & 0 & 0 & 0 & 0 & 0 \\ 
0 & \frac{1}{2}+\frac{1}{\sqrt{2}} & \frac{i}{\sqrt{2}} & \frac{1}{2}-\frac{1%
}{\sqrt{2}} & 0 & 0 & 0 & 0 \\ 
0 & -\frac{i}{\sqrt{2}} & \sqrt{2} & \frac{i}{\sqrt{2}} & 0 & 0 & 0 & 0 \\ 
0 & \frac{1}{2}-\frac{1}{\sqrt{2}} & -\frac{i}{\sqrt{2}} & \frac{1}{2}+\frac{%
1}{\sqrt{2}} & 0 & 0 & 0 & 0 \\ 
0 & 0 & 0 & 0 & \frac{1}{2}+\frac{1}{\sqrt{2}} & \frac{i}{\sqrt{2}} & \frac{1%
}{2}-\frac{1}{\sqrt{2}} & 0 \\ 
0 & 0 & 0 & 0 & -\frac{i}{\sqrt{2}} & \sqrt{2} & \frac{i}{\sqrt{2}} & 0 \\ 
0 & 0 & 0 & 0 & \frac{1}{2}-\frac{1}{\sqrt{2}} & -\frac{i}{\sqrt{2}} & \frac{%
1}{2}+\frac{1}{\sqrt{2}} & 0 \\ 
0 & 0 & 0 & 0 & 0 & 0 & 0 & 1%
\end{array}%
\right) \;.
\end{equation}%
After performing the similarity transformation we end up with the Hermitian
Hamiltonian%
\begin{equation}
\lim_{g\rightarrow 1}h=\frac{a_{1,2}^{-}+a_{2,3}^{-}}{\sqrt{2}}\;.
\end{equation}%
Arguably, we do not see here the manifestation of the long-range nature of
the interaction, since the spin-chain is simply too short. However, this
case illustrates the working of the general ideas and concepts introduced
earlier.

\subsection{Example $M=4$}

We saw previously that the cases $M$ even and odd are different and it is
therefore important to have another simple but concrete example. We now find 
\begin{equation}
A=i\frac{\xi \sqrt{1+\omega ^{2}}}{2}(a_{1,2}^{+}+a_{3,4}^{+})+\frac{\xi
\zeta +\omega }{2}~a_{2,3}^{+}+\frac{\xi \zeta -\omega }{2}~a_{1,4}^{+}
\end{equation}%
where the coefficients are given by%
\begin{equation}
\xi =2\func{arctanh}\frac{g\sqrt{5-g^{2}}}{1+g^{2}+\sqrt{1-g^{2}}},\qquad
\zeta =2i\func{arctanh}\frac{-1+\sqrt{1-g^{2}}}{g}\qquad \omega =i\sqrt{%
\frac{1-g^{2}}{5-g^{2}}}\;.
\end{equation}%
Inserting these expressions into the formula for $A$ does not lead to any
further drastic simplifications. Also the expression for the
quasi-Hermiticity operator in the spin-sector $S^{z}=0$ looks now more
complicated: after rescaling%
\begin{equation*}
\gamma =\sqrt{1-g^{2}},\qquad \eta ^{\prime }=\gamma (3-g^{2}-2\gamma )\eta
\end{equation*}%
one finds the matrix%
\begin{equation}
\eta ^{\prime }=\left( 
\begin{array}{cccccc}
1 & ig & 1-g^{2}-\gamma & 1-g^{2}-\gamma & ig\left( 2-g^{2}-2\gamma \right)
& g^{2}+2\gamma -2 \\ 
\ast & 3-2\gamma & -ig\left( \gamma -2\right) & -ig\left( \gamma -2\right) & 
-g^{2} & ig\left( 2-g^{2}-2\gamma \right) \\ 
\ast & \ast & \gamma -g^{2}\left( \gamma -2\right) & 2-2\gamma & -ig\left(
\gamma -2\right) & 1-g^{2}-\gamma \\ 
\ast & \ast & \ast & \gamma -g^{2}\left( \gamma -2\right) & -ig\left( \gamma
-2\right) & 1-g^{2}-\gamma \\ 
\ast & \ast & \ast & \ast & 3-2\gamma & ig \\ 
\ast & \ast & \ast & \ast & \ast & 1%
\end{array}%
\right) \ .
\end{equation}%
Here we have omitted the lower diagonal part, since $\eta ^{t}=\bar{\eta}$.
Again one can explicitly check that this operator satisfies all the
necessary requirements. From our previous discussion we expect that in the
limit $g\rightarrow 1$ the operator $\eta ^{\prime }$ should cease to be
positive definite and become semi-definite. In fact, one verifies that this
is the case by using the expression%
\begin{equation}
\lim_{g\rightarrow 1}\eta ^{\prime }=\left( 
\begin{array}{llllll}
1 & i & 0 & 0 & i & -1 \\ 
-i & 3 & 2i & 2i & -1 & i \\ 
0 & -2i & 2 & 2 & 2i & 0 \\ 
0 & -2i & 2 & 2 & 2i & 0 \\ 
-i & -1 & -2i & -2i & 3 & i \\ 
-1 & -i & 0 & 0 & -i & 1%
\end{array}%
\right) \;.
\end{equation}

\subsection{Example $M=5$}

Finally to present one example where the long range nature at $g=1$ is
slightly more apparent we consider the $M=5$ chain. We specialize from the
start to the case $g=1$ and find that%
\begin{equation}
\lim_{g\rightarrow 1}A=\frac{\xi +\zeta \omega }{2}(a_{1,2}^{-}+a_{4,5}^{-})+%
\frac{\zeta \sqrt{1-\omega ^{2}}}{\sqrt{2}}(a_{2,3}^{-}+a_{3,4}^{-})+\frac{%
\xi -\zeta \omega }{2}(a_{1,4}^{-}+a_{2,5}^{-})
\end{equation}%
with%
\begin{equation*}
\xi =\frac{i\ln 5}{2},\qquad \zeta =i\ln \left[ 2+\frac{7}{\sqrt{5}}+2\sqrt{%
\tfrac{16}{5}+\tfrac{7}{\sqrt{5}}}\right] ,\qquad \omega =\sqrt{\frac{4+%
\sqrt{5}}{11}}\;.
\end{equation*}%
After exponentiation we obtain the quasi-Hermiticity operator. In the spin $%
S^{z}=3/2$ sector it reads explicitly 
\begin{equation}
\lim_{g\rightarrow 1}\eta =\left( 
\begin{array}{lllll}
1+\frac{3}{\sqrt{5}} & i+\frac{3i}{\sqrt{5}} & -1-\frac{1}{\sqrt{5}} & \frac{%
1}{i\sqrt{5}}-i & 1 \\ 
\frac{3}{i\sqrt{5}}-i & 1+\sqrt{5} & i+\frac{3i}{\sqrt{5}} & -1-\frac{2}{%
\sqrt{5}} & \frac{1}{i\sqrt{5}}-i \\ 
-1-\frac{1}{\sqrt{5}} & \frac{5+3\sqrt{5}}{5i} & 1+\frac{4}{\sqrt{5}} & i+%
\frac{3i}{\sqrt{5}} & -1-\frac{1}{\sqrt{5}} \\ 
i+\frac{i}{\sqrt{5}} & -1-\frac{2}{\sqrt{5}} & \frac{5+3\sqrt{5}}{5i} & 1+%
\sqrt{5} & i+\frac{3i}{\sqrt{5}} \\ 
1 & i+\frac{i}{\sqrt{5}} & -1-\frac{1}{\sqrt{5}} & \frac{3}{i\sqrt{5}}-i & 1+%
\frac{3}{\sqrt{5}}%
\end{array}%
\right)  \label{eta5ex}
\end{equation}%
and its square root is computed to be%
\begin{equation}
\lim_{g\rightarrow 1}\eta ^{\frac{1}{2}}=\left( 
\begin{array}{ccccc}
\frac{7+9\sqrt[4]{5}-\sqrt{5}+\frac{14}{5^{1/4}}}{22} & \frac{i\sqrt[4]{5}}{2%
} & \frac{1-5\sqrt[4]{5}+3\sqrt{5}-9/5^{1/4}}{22} & \frac{-i}{2\sqrt[4]{5}}
& \frac{7-2\sqrt[4]{5}-\sqrt{5}+\frac{3}{5^{1/4}}}{22} \\ 
-\frac{i\sqrt[4]{5}}{2} & \frac{\sqrt{4+\frac{9}{\sqrt{5}}}}{2} & \frac{i}{%
\sqrt[4]{5}} & \frac{-1}{2\sqrt[4]{5}} & -\frac{i}{2\sqrt[4]{5}} \\ 
\frac{1-5\sqrt[4]{5}+3\sqrt{5}-\frac{9}{5^{1/4}}}{22} & \frac{-i}{\sqrt[4]{5}%
} & \frac{4+2\sqrt[4]{5}+\sqrt{5}+8/5^{1/4}}{11} & \frac{i}{\sqrt[4]{5}} & 
\frac{1-5\sqrt[4]{5}+3\sqrt{5}-\frac{9}{5^{1/4}}}{22} \\ 
\frac{i}{2\sqrt[4]{5}} & \frac{-1}{2\sqrt[4]{5}} & \frac{-i}{\sqrt[4]{5}} & 
\frac{\sqrt{4+\frac{9}{\sqrt{5}}}}{2} & \frac{i\sqrt[4]{5}}{2} \\ 
\frac{7-2\sqrt[4]{5}-\sqrt{5}+\frac{3}{5^{1/4}}}{22} & \frac{i}{2\sqrt[4]{5}}
& \frac{1-5\sqrt[4]{5}+3\sqrt{5}-9/5^{1/4}}{22} & \frac{\sqrt[4]{5}}{2i} & 
\frac{7+9\sqrt[4]{5}-\sqrt{5}+\frac{14}{5^{1/4}}}{22}%
\end{array}%
\right) \;.
\end{equation}%
Using these matrix expressions one can compute the Hermitian Hamiltonian $h$
and we find that it is of the expected form,%
\begin{equation}
\lim_{g\rightarrow 1}h=\rho _{1}(a_{1,2}^{-}+a_{4,5}^{-})+\rho _{2}\left(
a_{2,3}^{-}+a_{3,4}^{-}\right) +\rho _{3}\left(
a_{1,4}^{-}+a_{2,5}^{-}\right)
\end{equation}%
with%
\begin{equation}
\rho _{1}=\tfrac{9-6\sqrt{5}-\sqrt{2\left( 15+23\sqrt{5}\right) }}{22},\quad
\rho _{2}=\tfrac{3-2\sqrt{5}-\sqrt{40+21\sqrt{5}}}{11},\quad \rho _{3}=%
\tfrac{-2+5\sqrt{5}-\sqrt{2\left( 15+23\sqrt{5}\right) }}{22}\;.
\end{equation}%
The non-local nature of the interaction is now visible in terms of the
nonvanishing coefficient $\rho _{3}$. Below we show the result for the spin
sector $S^{z}=1/2$, however we compute it via different means.

\subsection{The quasi-Hermiticity operator and the Temperley-Lieb algebra}

In this section we indicate an alternative way to compute the
quasi-Hermiticity operator $\eta $ at $g=1$ when $M$ is odd employing the
Temperley-Lieb algebra. From the expressions (\ref{fermieta}) we expect that
in general $\eta $ will be highly non-local in terms of the spin basis,
since the quasi-particle creation and annihilation operators involve sums
over the argument of the discrete wave functions. Thus, if we compute the
matrix elements of $\eta $ in a fixed spin sector then all elements will be
non-vanishing. Our example above for $M=5$ and $S^{z}=3/2$ confirms this,
compare with (\ref{eta5ex}). However, there is another choice of basis in
which $\eta $ simplifies, that is where many of its matrix elements are
zero. Namely, we consider the $q\rightarrow i=\sqrt{-1}$ limit of the dual
canonical basis \cite{Lu,FK} which we denote by $\{t_{i}\}$. The elements of
this basis are in one-to-one correspondence with algebra elements $a_{i}\in
TL_{M}$ and the latter can be written down in terms of Young tableaux, see
e.g. \cite{CS}.

Fix a spin sector, $S^{z}=const.$, and set $m=M/2-S^{z}$. Let $\lambda _{n}$
be the rectangular Young diagram with $n$ rows of $N-n$ boxes,%
\begin{equation*}
\lambda _{m}=\underset{M-m}{\underbrace{\left. 
\begin{tabular}{|l|l|l|l|l|}
\hline
&  &  &  &  \\ \hline
&  &  &  &  \\ \hline
&  &  &  &  \\ \hline
&  &  &  &  \\ \hline
\end{tabular}%
\right\} }}~m
\end{equation*}%
Then we assign to each subdiagram $\lambda ^{\prime }\subset \lambda _{m}$ a
vector as follows. Let $t$ be the unique standard tableau (column and row
strict) of shape $\lambda ^{\prime }$ whose entries are consecutive integers
with entry $n$ in the upper left corner. For example,%
\begin{equation}
t=%
\begin{tabular}{|c|cccc}
\hline
$m$ & $m+1$ & \multicolumn{1}{|c}{$m+2$} & \multicolumn{1}{|c}{$\cdots $} & 
\multicolumn{1}{|c|}{$s$} \\ \hline
$m-1$ & $m$ & \multicolumn{1}{|c}{$\cdots $} & \multicolumn{1}{|c}{$s-2$} & 
\multicolumn{1}{|c}{} \\ \cline{1-4}
$\vdots $ &  & \multicolumn{1}{|c}{} & \multicolumn{1}{|c}{} &  \\ 
\cline{1-3}
$s^{\prime }$ &  &  &  &  \\ \cline{1-1}
\end{tabular}%
~,\quad m<s<M,\quad 1\leq s^{\prime }<m\;.
\end{equation}%
Reading the entries of the tableau from left to right and top to bottom we
set%
\begin{equation}
t\mapsto e_{s^{\prime }}e_{s^{\prime }-1}\cdots e_{s-2}\cdots
e_{m-1}e_{s}\cdots e_{m+1}e_{m}\Omega _{m},
\end{equation}%
where 
\begin{equation}
\Omega _{m}=\underset{m}{\underbrace{v_{-}\otimes v_{-}\cdots \otimes v_{-}}}%
\otimes v_{+}\otimes v_{+}\cdots \otimes v_{+}
\end{equation}%
is the vector corresponding to $\lambda ^{\prime }=\varnothing $. Note that
for fixed $m$ there are as many of these tableaux as the dimension of the
spin sector, namely $\dbinom{M}{m}$. \smallskip

\noindent \textbf{Example}. Let $M=5$ and $m=2$ then we have the following
Young diagrams and tableaux:

\begin{equation*}
\Yvcentermath1t=\varnothing ,\quad \young(2),\quad \young(2,1),\quad \young%
(23),\quad \young(23,1),\quad \young(23,12),\quad \young(234),\quad \young%
(234,1),\quad \young(234,12),\quad \young(234,123)\;.
\end{equation*}%
The corresponding algebra elements $a\in TL_{M}(q)$ are 
\begin{equation*}
a=1,\;e_{2},\;e_{1}e_{2},\;e_{3}e_{2},\;e_{4}e_{3}e_{2},\;e_{1}e_{3}e_{2},%
\;e_{2}e_{1}e_{3}e_{2},\;e_{1}e_{4}e_{3}e_{2},\;e_{2}e_{1}e_{4}e_{3}e_{2},%
\;e_{3}e_{2}e_{1}e_{4}e_{3}e_{2}\ .
\end{equation*}%
Each of these algebra elements we can also represent as a link or Kauffman
diagram. Identifying%
\begin{equation*}
\includegraphics[scale=0.7]{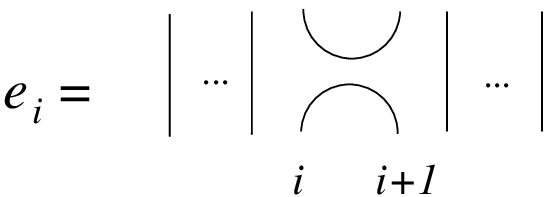}
\end{equation*}%
and realizing multiplication by concatenation from above, we find the
following diagrams for the algebra elements,

\begin{center}
\begin{tabular}{||l||c|c|c|c|c|c|}
\hline
$a\in TL_{5}$ & $a_{1}=1$ & $a_{2}=e_{2}$ & $a_{3}=e_{1}e_{2}$ & $%
a_{4}=e_{3}e_{2}$ & $a_{5}=e_{4}e_{3}e_{2}$ & $a_{6}=e_{1}e_{3}e_{2}$ \\ 
\hline
diagram & $\includegraphics[scale=0.5]{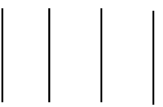}$ & $%
\includegraphics[scale=0.5]{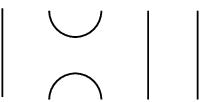}$ & %
\includegraphics[scale=0.5]{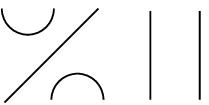} & %
\includegraphics[scale=0.5]{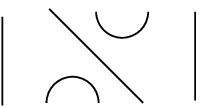} & %
\includegraphics[scale=0.5]{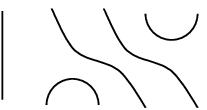} & %
\includegraphics[scale=0.5]{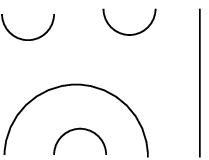} \\ \hline
\end{tabular}
\end{center}

and

\begin{center}
\begin{tabular}{||l|c|c|c|c|}
\hline
$a\in TL_{5}$ & $a_{7}=e_{2}e_{1}e_{3}e_{2}$ & $a_{8}=e_{1}e_{4}e_{3}e_{2}$
& $a_{9}=e_{2}e_{1}e_{4}e_{3}e_{2}$ & $a_{10}=e_{3}e_{2}e_{1}e_{4}e_{3}e_{2}$
\\ \hline
diagram & $\includegraphics[scale=0.5]{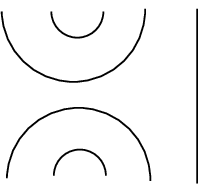}$ & %
\includegraphics[scale=0.5]{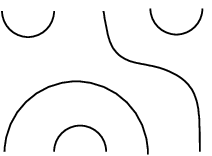} & %
\includegraphics[scale=0.5]{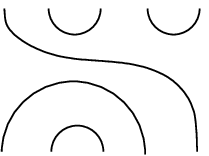} & %
\includegraphics[scale=0.5]{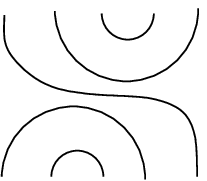} \\ \hline
\end{tabular}
\end{center}

\noindent We will make use of these diagrams momentarily. Employing the
representation $TL_{M}\rightarrow \limfunc{End}V^{\otimes M}$ introduced
earlier, 
\begin{equation*}
e_{x}\mapsto c_{x}c_{x+1}^{\ast }-c_{x}^{\ast }c_{x+1}+i(n_{x}-n_{x+1}),
\end{equation*}%
we can now generate the basis vectors $t_{i}$ by acting with each
corresponding algebra element $a_{i}$ onto the vector $\Omega _{m}$.
Computing from our previous expression (\ref{fermieta}) the
quasi-Hermiticity operator $\eta $ we can evaluate its matrix elements in
this new basis. Based on numerical computations for $M=3,5,7$ we arrive at
the following conjecture.

\begin{conjecture}
Denote by $G$ the Gram matrix of the dual canonical basis vectors $\{t_{i}\}$
with respect to the $\eta $-product, i.e.%
\begin{equation*}
G_{ij}=\langle t_{i},\eta t_{j}\rangle \ .
\end{equation*}%
Then we have%
\begin{equation*}
G_{ij}=0\qquad \text{whenever}\qquad \limfunc{tr}(a_{i}a_{j})=0\func{mod}2,
\end{equation*}%
where $a_{i},a_{j}$ are the algebra elements corresponding to $t_{i},t_{j}$
and 
\begin{equation*}
\limfunc{tr}a=\limfunc{tr}ae_{M}=\text{number of closed loops}
\end{equation*}%
which are obtained by closing the planar diagram associated with $a$. An
example for $a=e_{2}$ is shown below,%
\begin{equation*}
\includegraphics[scale=0.5]{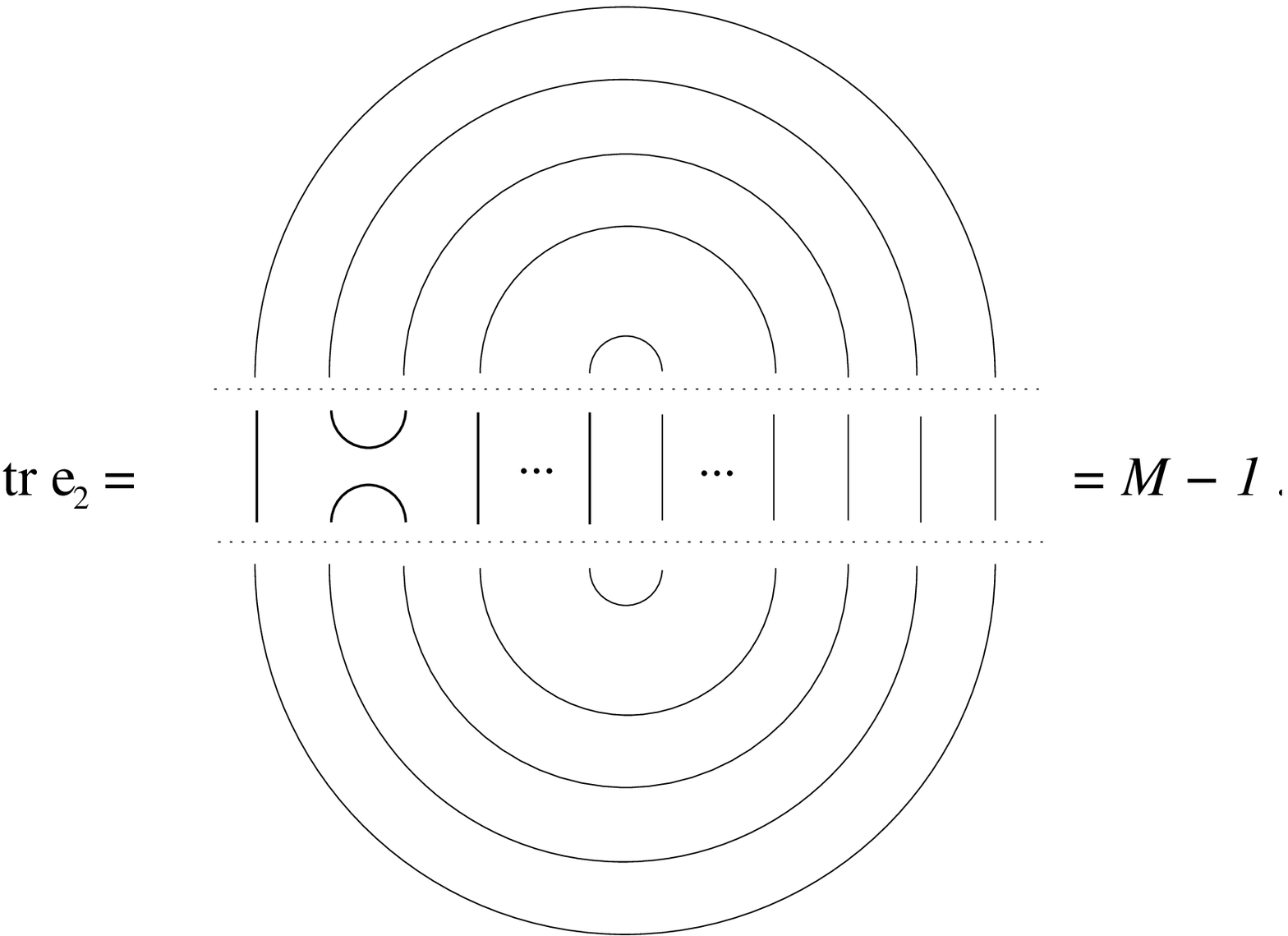}
\end{equation*}
\end{conjecture}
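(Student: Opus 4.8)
The plan is to turn $G_{ij}=\langle t_i,\eta\,t_j\rangle=\langle a_i\Omega_m,\eta\,a_j\Omega_m\rangle$ into a free-fermion (Gaussian) expectation value and then extract a $\mathbb{Z}_2$ loop-parity selection rule from the diagrammatics. Since we are at $g=1$ with $M$ odd, we sit strictly inside the region where $\eta>0$, so the fermionic expression (\ref{fermieta}) is valid, and the representation (\ref{unitTLrep}) exhibits every generator $e_x$ as \emph{quadratic} in the Jordan--Wigner fermions $c_x,c_x^{\ast}$. Consequently each basis vector $t_i=a_i\Omega_m$ is produced from the reference vector $\Omega_m$ by a product of fermion bilinears. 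First I would record that $\eta$, being assembled in (\ref{fermieta}) from quasi-fermion number projectors, is itself a \emph{Gaussian} (even, quadratic-exponential) operator, so that the $\eta$-functional $O\mapsto\langle\Omega_m,\eta\,O\,\Omega_m\rangle$ obeys Wick's theorem; the relations $\eta\hat c_k^{\ast}=\hat d_k^{\ast}\eta$ and $\eta\hat d_k=\hat c_k\eta$ from the Theorem guarantee that on the $\eta$-product the operators $\hat c_k^{\ast},\hat d_k$ are genuine adjoint CAR generators, which fixes the two-point contractions. Hence $G_{ij}$ equals the Pfaffian of the antisymmetric matrix of $\eta$-contractions of the modes occurring in $a_i$ and $a_j$.

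Second, I would set up the dictionary between this Pfaffian and the planar diagram of $a_ia_j$. Each $e_x$ is diagrammatically a cap--cup, and the elementary $\eta$-contractions that survive are precisely the nearest-neighbour pairings carried by the arcs of the Temperley--Lieb diagrams; the nonvanishing Wick pairings in $\langle a_i\Omega_m,\eta\,a_j\Omega_m\rangle$ should therefore be in bijection with the admissible concatenations of the arcs of $a_i$ and $a_j$. Closing the concatenated diagram with $e_M$, exactly as in the definition $\limfunc{tr}(a_ia_j)=\limfunc{tr}(a_ia_j\,e_M)$, organises the surviving contractions into the closed loops that the conjecture counts. The upshot of these two steps is that $G_{ij}$ factorises, up to an overall Wick sign, into one scalar weight per closed loop, reducing the whole statement to the evaluation of a single-loop weight at $q=\sqrt{-1}$.

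The decisive and, I expect, hardest step is to prove that this per-loop weight forces $G_{ij}$ to vanish \emph{exactly} when the loop number $\ell$ is even. Note that the naive Temperley--Lieb fugacity $-(q+q^{-1})=0$ at $q=\sqrt{-1}$ would kill \emph{any} diagram carrying a loop, which is the wrong answer: the conjecture requires odd-loop diagrams to survive and only even-loop diagrams to vanish. The refinement must come from the signed, free-fermion evaluation rather than from the loop fugacity, so the task is to show that a closed fermion loop of the XX propagators at the free-fermion point contributes a phase (a definite power of $\sqrt{-1}$) whose accumulation over $\ell$ loops interacts with the Hermiticity of $G$ (inherited from $\eta=\eta^{\ast}$) and with the reflection symmetry $P$ so as to annihilate the even-$\ell$ entries. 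I would try to organise this as a conserved $\mathbb{Z}_2$ charge under the $\eta$-product whose value on $t_i$ is read off from the cup/through-strand content of $a_i$, so that $G_{ij}=0$ whenever $t_i,t_j$ lie in mismatched sectors; the combinatorial identity to be proved is that this mismatch occurs precisely when the closure of $a_ia_j$ has an even number of loops.

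I would fix conventions and all signs by first verifying the per-loop phase directly for $M=3$ and $M=5$ using the explicit matrices already computed in Section 6, and then attempt an induction on the number of generators in $a_ia_j$, removing one cap--cup at a time and tracking how each removal changes both the loop count and the accumulated phase. The genuine obstacle is to match these two bookkeepings \emph{uniformly in $M$}: establishing merely that some loop suppresses an entry is easy, but proving the clean even/odd dichotomy, rather than a more complicated cancellation that only numerically looks like a parity rule, is exactly what the evidence for $M=3,5,7$ leaves open and what makes the statement a conjecture rather than a theorem.
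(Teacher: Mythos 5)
The first thing to say is that the paper offers no proof of this statement: it is explicitly presented as a conjecture, supported only by numerical computations of $G$ for $M=3,5,7$, so there is no argument in the text against which your attempt can be matched. What you have written is therefore being judged on its own, and on its own it is a strategy outline rather than a proof. Its first two steps are plausible and consistent with the structure of the paper: $\eta=e^{A}$ with $A$ quadratic in the Jordan--Wigner fermions (this is what the perturbative expansion (\ref{Aform}) and the exact $M=3,4,5$ computations exhibit), so $\eta$ is indeed a Gaussian operator and the functional $O\mapsto\langle\Omega_{m},\eta\,O\,\Omega_{m}\rangle$ obeys Wick's theorem; since each $e_{x}$ in (\ref{unitTLrep}) is a fermion bilinear, $G_{ij}$ is in principle a Pfaffian of two-point $\eta$-contractions. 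But the two load-bearing claims after that are asserted, not established. The assertion that the surviving Wick pairings are in bijection with arc concatenations of the \emph{closed} diagram, and that $G_{ij}$ then factorises into one scalar weight per closed loop, is exactly the content that would need proof: the loops counted by $\limfunc{tr}(a_{i}a_{j})=\limfunc{tr}(a_{i}a_{j}e_{M})$ include the artificial closure arcs and $e_{M}$, which the Gaussian expectation value does not see in any obvious way, and a Pfaffian generically does \emph{not} factorise over loops of a diagram unless the contraction matrix has a very special block structure that you have not exhibited. Your own observation that the naive loop fugacity $-(q+q^{-1})$ vanishes at $q=\sqrt{-1}$ already shows that the per-loop-weight picture cannot be the naive one, so the claimed factorisation is doing all the work and none of it is justified.

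The decisive third step --- that the accumulated per-loop phase annihilates precisely the even-loop entries --- is stated as a goal, and you candidly admit it is the open part. That candour is to your credit, but it means the proposal does not upgrade the conjecture to a theorem; it reproduces the situation the paper is already in (a parity pattern verified in small cases, with no uniform-in-$M$ mechanism identified). If you want to push this further, the most promising concrete target suggested by your own outline is to identify the $\mathbb{Z}_{2}$ charge you postulate explicitly as an operator commuting with $\eta$ in the dual canonical basis, and to prove combinatorially that its sector mismatch equals the loop-parity of the closure; absent that, the argument has a gap exactly where the conjecture is hard.
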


\noindent \textbf{Remark}. Note that the above relation does not necessarily
imply that $G_{ij}\neq 0$ if $\limfunc{tr}(a_{i}a_{j})=1\func{mod}2$. The
Gram matrix inherits from $\eta $ the further properties%
\begin{equation*}
\det G=1,\qquad G>0\qquad \text{and}\qquad G_{ij}=G_{ji}\in \mathbb{R}\ .
\end{equation*}

The knowledge of the Gram matrix is sufficient to compute matrix elements
with respect to the $\eta $-product. In particular the intertwining
property, $\eta H=H^{\ast }\eta $, translates into the following identity
for the Gram matrix%
\begin{equation}
G\mathcal{H}=\mathcal{H}^{t}G  \label{GH=HtG}
\end{equation}%
where the matrix $\mathcal{H}$ has only integer entries,\ $\mathcal{H}%
_{ji}\in \mathbb{Z}$, and is defined via 
\begin{equation}
Ht_{i}=\sum_{j}t_{j}\mathcal{H}_{ji}\;.
\end{equation}%
Note that $\mathcal{H}$\ can be computed graphically using link diagrams.
For instance, we find for our example $M=5$ and $S^{z}=1/2$ stated above that%
\begin{equation}
\mathcal{H}=\left( 
\begin{array}{llllllllll}
0 & 0 & 0 & 0 & 0 & 0 & 0 & 0 & 0 & 0 \\ 
1 & 0 & 1 & 1 & 0 & 0 & 0 & 0 & 0 & 0 \\ 
0 & 1 & 0 & 0 & 0 & 0 & 0 & 0 & 0 & 0 \\ 
0 & 1 & 0 & 0 & 1 & 0 & 0 & 0 & 0 & 0 \\ 
0 & 0 & 0 & 1 & 0 & 0 & 0 & 0 & 0 & 0 \\ 
0 & 0 & 1 & 1 & 0 & 0 & 2 & 1 & 0 & 1 \\ 
0 & 0 & 0 & 0 & 0 & 1 & 0 & 0 & 0 & 0 \\ 
0 & 0 & 0 & 0 & 1 & 1 & 0 & 0 & 1 & 0 \\ 
0 & 0 & 0 & 0 & 0 & 0 & 1 & 1 & 0 & 2 \\ 
0 & 0 & 0 & 0 & 0 & 0 & 0 & 0 & 1 & 0%
\end{array}%
\right)
\end{equation}%
The corresponding Gram matrix is%
\begin{equation}
G=\left( 
\begin{array}{llllllllll}
\frac{2\left( 3+\sqrt{5}\right) }{5} & 0 & \frac{2\left( 1+\sqrt{5}\right) }{%
5} & \frac{3}{5}+\frac{1}{\sqrt{5}} & 0 & 0 & -\frac{2}{5} & \frac{2}{5} & 0
& \frac{3}{5} \\ 
0 & 1+\frac{3}{\sqrt{5}} & 0 & 0 & 1+\frac{1}{\sqrt{5}} & 0 & 0 & 0 & 1 & 0
\\ 
\frac{2\left( 1+\sqrt{5}\right) }{5} & 0 & \frac{2\left( 2+\sqrt{5}\right) }{%
5} & \frac{1}{5}+\frac{1}{\sqrt{5}} & 0 & 0 & \frac{1}{5} & \frac{4}{5} & 0
& \frac{1}{5} \\ 
\frac{3}{5}+\frac{1}{\sqrt{5}} & 0 & \frac{1}{5}+\frac{1}{\sqrt{5}} & \frac{%
3\left( 3+\sqrt{5}\right) }{5} & 0 & 0 & \frac{4}{5} & \frac{1}{5} & 0 & 
\frac{9}{5} \\ 
0 & 1+\frac{1}{\sqrt{5}} & 0 & 0 & 1+\frac{2}{\sqrt{5}} & 0 & 0 & 0 & 1 & 0
\\ 
0 & 0 & 0 & 0 & 0 & 1 & 0 & 0 & 0 & 0 \\ 
-\frac{2}{5} & 0 & \frac{1}{5} & \frac{4}{5} & 0 & 0 & \frac{9}{5} & \frac{1%
}{5} & 0 & \frac{4}{5} \\ 
\frac{2}{5} & 0 & \frac{4}{5} & \frac{1}{5} & 0 & 0 & \frac{1}{5} & \frac{4}{%
5} & 0 & \frac{1}{5} \\ 
0 & 1 & 0 & 0 & 1 & 0 & 0 & 0 & 1 & 0 \\ 
\frac{3}{5} & 0 & \frac{1}{5} & \frac{9}{5} & 0 & 0 & \frac{4}{5} & \frac{1}{%
5} & 0 & \frac{9}{5}%
\end{array}%
\right) \ .
\end{equation}%
One can check that the above conjecture holds true for this example using
the stated link diagrams.

In addition to the identity (\ref{GH=HtG}) we obtain further constraints on
the Gram matrix by employing $PT$-invariance, $\eta ^{-1}=P\eta P=\bar{\eta}%
=\eta ^{t}$, from which we conclude that%
\begin{equation*}
G_{ij}=\langle Tt_{i},\eta ^{-1}Tt_{j}\rangle
\end{equation*}%
and hence%
\begin{equation}
\mathcal{M}^{\ast }G\mathcal{M}=G,\qquad PTt_{i}=\sum_{j}t_{j}\mathcal{M}%
_{ji}\;.  \label{PTG}
\end{equation}%
It is desirable to find a closed formula for the Gram matrix elements
similar to the case treated in \cite{CK07,CKproc}.

\section{Results for boundary fields off the imaginary axis}

In this section we briefly discuss some aspects when $\alpha =\bar{\beta}$
do not lie on the imaginary axis. This case for $\alpha =\bar{\beta}$ on the
unit circle has been investigated previously \cite{HR92,PPM92,AR93} albeit
not in the context of quasi-Hermiticity. Here we relate our discussion to
these previous results.

\subsection{$U_{q}(\mathfrak{gl}(1|1))$ invariance and Hecke algebra}

In order to discuss the quantum group symmetry we consider the Hamiltonian $%
H^{\prime }$ instead of $H$,%
\begin{equation*}
H^{\prime }=\sum_{m=1}^{M}H_{m}^{\prime },\qquad H_{m}^{\prime }=-\frac{%
\sigma _{m}^{x}\sigma _{m+1}^{x}+\sigma _{m}^{y}\sigma _{m+1}^{y}-\alpha
^{-1}~\sigma _{m}^{z}-\alpha ~\sigma _{m+1}^{z}}{2}\;.
\end{equation*}%
This Hamiltonian can be viewed as an element in the Hecke Algebra, compare
with the discussion in \cite{S89,PPM92}.

\begin{definition}
The Hecke algebra $\mathcal{H}_{n}(q)$ with $q\in \mathbb{C}$ is the
associative algebra (over $\mathbb{C}$) obtained from the generators $%
b_{1},...,b_{n-1}$ subject to the relations%
\begin{equation*}
b_{i}b_{i}^{-1}=b_{i}^{-1}b_{i}=1,\qquad
b_{j}b_{j+1}b_{j}=b_{j+1}b_{j}b_{j+1},\qquad b_{i}b_{j}=b_{j}b_{i}\qquad 
\text{for}\qquad |i-j|>1
\end{equation*}%
and%
\begin{equation*}
b_{i}^{2}+(q-q^{-1})b_{i}=1\;.
\end{equation*}
\end{definition}

Setting $n=M$ and $q=-\alpha ^{-1}$ one easily verifies that%
\begin{equation*}
b_{i}\mapsto c_{i}c_{i+1}^{\ast }-c_{i}^{\ast }c_{i+1}-\alpha
^{-1}n_{i}-\alpha (n_{i+1}-1)=\left( 
\begin{array}{cccc}
q & 0 & 0 & 0 \\ 
0 & q-q^{-1} & 1 & 0 \\ 
0 & 1 & 0 & 0 \\ 
0 & 0 & 0 & -q^{-1}%
\end{array}%
\right) _{i,i+1}
\end{equation*}%
and%
\begin{equation*}
b_{i}^{-1}\mapsto c_{i}c_{i+1}^{\ast }-c_{i}^{\ast }c_{i+1}-\alpha
^{-1}(n_{i}-1)-\alpha n_{i+1}=\left( 
\begin{array}{cccc}
-q^{-1} & 0 & 0 & 0 \\ 
0 & 0 & 1 & 0 \\ 
0 & 1 & q^{-1}-q & 0 \\ 
0 & 0 & 0 & q%
\end{array}%
\right) _{i,i+1}
\end{equation*}%
yield a representation of the Hecke algebra $\mathcal{H}_{M}(q=-\alpha
^{-1}) $ over $V^{\otimes M}$ and, furthermore, we have the identity%
\begin{equation}
H^{\prime }=\sum_{i=1}^{M-1}\frac{b_{i}+b_{i}^{-1}}{2}\;.
\end{equation}%
There is a action of the quantum group $\mathcal{U}_{q}(\mathfrak{gl}(1|1))$ 
\cite{S89,Kulish89} which is \textquotedblleft dual\textquotedblright\ to
the action of the Hecke algebra in the sense that both of them commute, i.e.
the Hamiltonian is quantum group invariant.

\begin{definition}
Consider the associative algebra $\mathcal{U}_{q}(\mathfrak{gl}(1|1))$ over $%
\mathbb{C}(q)$ generated by the elements $\{\mathcal{X}^{\pm },\mathcal{Y}%
^{\pm 1},\mathcal{Z}^{\pm 1}\}$ subject to the relations%
\begin{equation}
\mathcal{ZZ}^{-1}=\mathcal{Z}^{-1}\mathcal{Z}=\mathcal{YY}^{-1}=\mathcal{Y}%
^{-1}\mathcal{Y}=1,\mathcal{\quad YX}^{\pm }\mathcal{Y}^{-1}=q^{\pm 1}%
\mathcal{X}^{\pm },\mathcal{\quad \lbrack Y},\mathcal{Z]=[Z},\mathcal{X}%
^{\pm }\mathcal{]}=0
\end{equation}%
and%
\begin{equation}
\lbrack \mathcal{X}^{\pm },\mathcal{X}^{\pm }]_{+}=0,\quad \quad \lbrack 
\mathcal{X}^{+},\mathcal{X}^{-}]_{+}=\frac{\mathcal{Z}-\mathcal{Z}^{-1}}{%
q-q^{-1}}\;.
\end{equation}
\end{definition}

Setting $q=-\alpha ^{-1}$ and identifying%
\begin{equation*}
\mathcal{Y}^{\pm 1}\mapsto q^{\pm S^{z}},\qquad \mathcal{Z}^{\pm 1}\mapsto
q^{\pm M},\qquad \mathcal{X}^{+}\mapsto \sum_{x}q^{\frac{M+1}{2}%
-x}c_{x}^{\ast },\qquad \mathcal{X}^{-}\mapsto \sum_{x}q^{\frac{M+1}{2}%
-x}c_{x}
\end{equation*}%
we obtain a representation of this algebra over $V^{\otimes M}$. Moreover,
its action commutes with the action of the Hecke algebra,%
\begin{equation*}
\lbrack \mathcal{U}_{q}(\mathfrak{gl}(1|1)),\mathcal{H}_{M}(q)]=0,\;
\end{equation*}%
and, thus, with the Hamiltonian $H^{\prime }$. As before the quantum group
invariance is reflected in the existence of a particularly simple solution
for the discrete wave function. Setting as before $q=-\alpha
^{-1}=e^{i\theta }$ the mentioned wave function and creation operator are%
\begin{equation}
\psi _{\theta }(x)=\sqrt{\frac{\sin \theta }{\sin M\theta }}~e^{i(x-\frac{M+1%
}{2})\theta },\qquad \hat{c}_{\theta }^{\ast }=\sqrt{\frac{\sin \theta }{%
\sin M\theta }}~\mathcal{X}^{+}=\sum_{x=1}^{M}\psi _{\theta }(x)c_{x}^{\ast
}\;.  \label{phi}
\end{equation}%
Using these expressions one finds%
\begin{equation}
\lbrack H,\hat{c}_{\theta }^{\ast }]=-2\cos \theta ~\hat{c}_{\theta }^{\ast
}\qquad \text{and}\qquad \lbrack H^{\prime },\hat{c}_{\theta }^{\ast }]=0\ .
\end{equation}%
Similar as in the case $\theta =\pi /2$ and $M$ even the Hamiltonian has
non-trivial Jordan blocks when the norm of the wave function becomes
singular, that is if we choose $\theta $ such that 
\begin{equation*}
\frac{\sin M\theta }{\sin \theta }=0\;.
\end{equation*}

Also the representation theoretic interpretation of $\eta $ extends from $%
\theta =i\pi /2$ to the present, general case: the new inner product $%
\langle \cdot ,\eta \cdot \rangle $ gives rise to a representation of the
quantum group $\mathcal{U}_{q}(\mathfrak{gl}(1|1))$ with the correct $\ast $%
-structure,%
\begin{equation}
\eta \mathcal{X}^{+}=(\mathcal{X}^{-})^{\ast }\eta \;.
\end{equation}%
The above identity has another consequence. Since we also have that $P%
\mathcal{X}^{+}P=(\mathcal{X}^{-})^{\ast }$ it follows that the $C$-operator
obeys 
\begin{equation*}
\lbrack C,\mathcal{U}_{q}(\mathfrak{gl}(1|1))]=0,\qquad C=P\eta
\end{equation*}%
and hence must be an element in the Hecke algebra. This is precisely the
line of reasoning employed in \cite{KW07} to find an algebraic expression
for $C$. However, we need an additional ingredient to complete the analogous
computation: the decomposition of the state space into $\mathcal{U}_{q}(%
\mathfrak{gl}(1|1))$-modules \cite{RS92}. We leave this problem to future
work.

\subsection{Perturbation theory for arbitrary $\protect\theta $}

In this section we wish to confirm that our previous picture for $\theta
=\pi /2$ stays intact for general $\theta $. Namely, we wish to show that
also here the non-Hermitian boundary fields in $H,H^{\prime }$ correspond to
non-local hopping terms in $h$.

For arbitrary $\theta $ we have to modify our previous approach to the
perturbative computation of $\eta =\exp A$. We now define\ 
\begin{equation*}
\alpha =\beta ^{\ast }=-ge^{i\theta },\quad 0\leq g<1:\quad \quad
H_{g}=H_{0}+gH_{1}
\end{equation*}%
with%
\begin{equation}
H_{0}=-\sum_{x=1}^{M-1}\left[ c_{x}^{\ast }c_{x+1}-c_{x}c_{x+1}^{\ast }%
\right] ,\qquad H_{1}=-\frac{e^{i\theta }\sigma _{1}^{z}+e^{-i\theta }\sigma
_{M}^{z}}{2}=-(e^{i\theta }n_{1}+e^{-i\theta }n_{M})\ .
\end{equation}%
For $0<g\ll 1$ we again expand the operator $\eta =\exp A$ using the
Baker-Campbell-Hausdorff formula but now we cannot make the ansatz that $A$
only depends on odd powers of the coupling constant $g$, since $H_{g}^{\ast
}\neq H_{-g}$. Thus all powers of $g$ are occurring in the series expansion,%
\begin{equation}
A=\sum_{n>0}g^{n}A_{n}\ .
\end{equation}%
Collecting once more terms of the same order in $g$ we now arrive at%
\begin{eqnarray}
\lbrack H_{0},A_{1}] &=&2H_{-}  \notag \\
\lbrack H_{0},A_{2}] &=&[A_{1},H_{1}]+\frac{1}{2}%
[A_{1},[A_{1},H_{0}]]=[A_{1},H_{+}]  \notag \\
\lbrack H_{0},A_{3}] &=&[A_{2},H_{+}]-\frac{1}{12}%
[A_{1},[A_{1},[A_{1},H_{0}]]]=[A_{2},H_{+}]+\frac{1}{6}[A_{1},[A_{1},H_{-}]],
\end{eqnarray}%
where%
\begin{equation}
H_{+}=\frac{H_{1}+H_{1}^{\dagger }}{2}=-\cos \theta (n_{1}+n_{M})\quad \text{%
and\quad }H_{-}=\frac{H_{1}-H_{1}^{\dagger }}{2}=i\sin \theta (n_{M}-n_{1})\
.
\end{equation}%
The explicit results for the first two terms in the expansion of the matrix $%
A$ are,%
\begin{eqnarray*}
A_{1} &=&\sum_{x=1}^{M-1}\left[ e^{i\theta }c_{x}^{\ast }c_{x+1}-e^{-i\theta
}c_{x}c_{x+1}^{\ast }\right] , \\
A_{2} &=&\sum_{x=1}^{M-2}\left[ (1+e^{2i\theta })c_{x}^{\ast
}c_{x+2}-(1+e^{-2i\theta })c_{x}c_{x+2}^{\ast }\right]
\end{eqnarray*}%
The corresponding Hermitian Hamiltonian $h$ now reads up to third order in
the coupling $g$,%
\begin{equation*}
h=e^{\frac{A}{2}}He^{-\frac{A}{2}}=H_{0}+\sum_{n=1}^{\infty }g^{n}h_{n},
\end{equation*}%
with%
\begin{eqnarray*}
h_{1} &=&H_{+}=-\cos \theta (n_{1}+n_{M}),\qquad \\
h_{2} &=&\frac{1}{4}~[A_{1},H_{-}]=\frac{\sin \theta }{4i}e^{i\theta
}(c_{1}^{\ast }c_{2}+c_{M-1}^{\ast }c_{M})+\text{h.c.},\qquad \\
h_{3} &=&\frac{1}{4}~[A_{2},H_{-}]=\frac{\sin \theta }{4i}(1+e^{i2\theta
})(c_{1}^{\ast }c_{3}+c_{M-2}^{\ast }c_{M})+\text{h.c.}\;.
\end{eqnarray*}%
Here "h.c." stands for the Hermitian conjugate of the previous term. From
this we infer that for general values of $\theta $ also interactions between
sites separated by an even number are possible.

\section{Conclusions}

In this article we encountered numerous new aspects of the XX spin-chain
with non-Hermitian boundary fields. While this simple integrable model has
been studied intensively before in the literature, it has recently received
renewed attention in connection with logarithmic field theories where the
main focus is on representations of the Temperley-Lieb algebra with
non-trivial Jordan blocks \cite{PRZ06,RS07}.\medskip

The present work has highlighted a very different aspect: purely on physical
grounds one wishes to have a Hermitian quantum Hamiltonian (without Jordan
blocks) in order to ensure a unitary time evolution of the system. This is
one of the essential demands of quantum mechanics. We have seen that this
can be achieved for complex boundary fields with values inside the unit disc
via two different, albeit closely related routes: one can either introduce a
new inner product or perform a similarity transformation to a Hermitian
Hamiltonian. The latter leads to a new physical interpretation of the
non-Hermitian XX spin-chain: it corresponds to a free fermion system with
long range hopping. The probability of long range hopping taking place is
controlled by the absolute value of the complex boundary fields. This new
perspective on the XX spin-chain with non-Hermitian boundary fields entails
a range of other physically interesting questions, such as finite size
effects and correlation functions. We already touched upon the finite size
scaling of the groundstate energy in the text, since the latter contains
information about the respective CFTs in the thermodynamic limit and thus
would connect with the discussion in \cite{Saleur92,PRZ06,RS07}. Due to the
absence of an exact solution for the Bethe roots when the boundary fields
lie within the unit disc, we were unable to obtain conclusive results. An
alternative approach might be to find a field theoretic model which allows
one to compute the partition function, similar as it has been the case for
critical dense polymers on the lattice \cite{Saleur92}.\medskip

To find the correct field theoretic counterpart might be facilitated by the
algebraic structures pointed out in this article. We explicitly constructed
representations of the quantum group $U_{q}(\mathfrak{sl}_{2})$ with $q=%
\sqrt{-1}$ and the superalgebra $U(\mathfrak{gl}(1|1))$ which can be
extended from the unit circle along the imaginary axis. Any field theory
describing the thermodynamic limit should reflect these algebraic features.
Recall that for $M$ odd these algebras provided symmetries, while they did
not for $M$ even. Our careful analysis showed that we could tie the
appearance of (non-trivial) Jordan blocks in the Hamiltonian to the
vanishing of the central element among the generators of $U(\mathfrak{gl}%
(1|1))$. For $M$ even this happens precisely if the boundary fields lie on
the unit circle, but the new result here is that it also happens for $M$ odd
just outside the unit circle. Moreover, we recall from the main text that
for $M$ even one obtains $2\times 2$ Jordan blocks, while for $M$ odd one
has $3\times 3$ blocks. Thus the algebraic picture put forward in \cite{RS07}
for $M$ even needs to be modified for $M$ odd and an interesting problem is
whether one can find also here corresponding logarithmic CFTs.\medskip

The superalgebra $U(\mathfrak{gl}(1|1))$ has also been connected to the
discussion of quasi-Hermiticity. The new inner product preserves the natural 
$\ast $-involution of $U(\mathfrak{gl}(1|1))$, thus providing a
representation theoretic interpretation of our construction. It is this
feature which singles out non-Hermitian quantum integrable systems: due to
their underlying algebraic structures they allow for various interpretations
and \emph{exact} constructions of the quasi-Hermiticity operator. This
connection between abstract mathematical structures and physically motivated
concepts such as $PT$-symmetry is mutually beneficial, it leads to new
physical insight and interesting mathematical questions even for a simple
and well-studied model such as the XX spin-chain.\medskip

\textbf{Acknowledgments}. The author is financially supported by a
University Research Fellowship of the Royal Society. He would like to thank
Ken Brown for references on $U(\mathfrak{gl}(1|1))$ and $U_{q}(\mathfrak{gl}%
(1|1))$, Christina Cobbold for advice, Andreas Fring for comments on
non-Hermitian quantum mechanics at an early stage of this work, Catharina
Stroppel for discussions on the dual canonical basis and Robert Weston for a
previous collaboration.

\end{document}